\newdimen\proofrulebreadth \proofrulebreadth=.05em
\newdimen\proofdotseparation \proofdotseparation=1.25ex
\newdimen\proofrulebaseline \proofrulebaseline=2ex
\let\then\relax
\def\hfi{\hskip0pt plus.0001fil}
\mathchardef\squigto="3A3B
\newif\ifinsideprooftree\insideprooftreefalse
\newif\ifonleftofproofrule\onleftofproofrulefalse
\newif\ifproofdots\proofdotsfalse
\newif\ifdoubleproof\doubleprooffalse
\let\wereinproofbit\relax
\newdimen\shortenproofleft
\newdimen\shortenproofright
\newdimen\proofbelowshift
\newbox\proofabove
\newbox\proofbelow
\newbox\proofrulename
\def\shiftproofbelow{\let\next\relax\afterassignment\setshiftproofbelow\dimen0 }
\def\shiftproofbelowneg{\def\next{\multiply\dimen0 by-1 }%
\afterassignment\setshiftproofbelow\dimen0 }
\def\setshiftproofbelow{\next\proofbelowshift=\dimen0 }
\def\setproofrulebreadth{\proofrulebreadth}
\def\prooftree{
%
\ifnum  \lastpenalty=1
\then   \unpenalty
\else   \onleftofproofrulefalse
\fi
%
\ifonleftofproofrule
\else   \ifinsideprooftree
        \then   \hskip.5em plus1fil
        \fi
\fi
%
\bgroup
\setbox\proofbelow=\hbox{}\setbox\proofrulename=\hbox{}%
\let\justifies\proofover\let\leadsto\proofoverdots\let\Justifies\proofoverdbl
\let\using\proofusing\let\[\prooftree
\ifinsideprooftree\let\]\endprooftree\fi
\proofdotsfalse\doubleprooffalse
\let\thickness\setproofrulebreadth
\let\shiftright\shiftproofbelow \let\shift\shiftproofbelow
\let\shiftleft\shiftproofbelowneg
\let\ifwasinsideprooftree\ifinsideprooftree
\insideprooftreetrue
%
\setbox\proofabove=\hbox\bgroup$\displaystyle 
\let\wereinproofbit\prooftree
%
\shortenproofleft=0pt \shortenproofright=0pt \proofbelowshift=0pt
%
\onleftofproofruletrue\penalty1
}
\def\eproofbit{
%
\ifx    \wereinproofbit\prooftree
\then   \ifcase \lastpenalty
        \then   \shortenproofright=0pt  
        \or     \unpenalty\hfil         
        \or     \unpenalty\unskip       
        \else   \shortenproofright=0pt  
        \fi
\fi
%
\global\dimen0=\shortenproofleft
\global\dimen1=\shortenproofright
\global\dimen2=\proofrulebreadth
\global\dimen3=\proofbelowshift
\global\dimen4=\proofdotseparation
\global\count255=\proofdotnumber
%
$\egroup  
%
\shortenproofleft=\dimen0
\shortenproofright=\dimen1
\proofrulebreadth=\dimen2
\proofbelowshift=\dimen3
\proofdotseparation=\dimen4
\proofdotnumber=\count255
}
\def\proofover{
\eproofbit 
\setbox\proofbelow=\hbox\bgroup 
\let\wereinproofbit\proofover
$\displaystyle
}%
\def\proofoverdbl{
\eproofbit 
\doubleprooftrue
\setbox\proofbelow=\hbox\bgroup 
\let\wereinproofbit\proofoverdbl
$\displaystyle
}%
\def\proofoverdots{
\eproofbit 
\proofdotstrue
\setbox\proofbelow=\hbox\bgroup 
\let\wereinproofbit\proofoverdots
$\displaystyle
}%
\def\proofusing{
\eproofbit 
\setbox\proofrulename=\hbox\bgroup 
\let\wereinproofbit\proofusing
\kern0.3em$
}
\def\endprooftree{
\eproofbit 
  \dimen5 =0pt
%
\dimen0=\wd\proofabove \advance\dimen0-\shortenproofleft
\advance\dimen0-\shortenproofright
%
\dimen1=.5\dimen0 \advance\dimen1-.5\wd\proofbelow
\dimen4=\dimen1
\advance\dimen1\proofbelowshift \advance\dimen4-\proofbelowshift
%
\ifdim  \dimen1<0pt
\then   \advance\shortenproofleft\dimen1
        \advance\dimen0-\dimen1
        \dimen1=0pt
        \ifdim  \shortenproofleft<0pt
        \then   \setbox\proofabove=\hbox{%
                        \kern-\shortenproofleft\unhbox\proofabove}%
                \shortenproofleft=0pt
        \fi
\fi
%
\ifdim  \dimen4<0pt
\then   \advance\shortenproofright\dimen4
        \advance\dimen0-\dimen4
        \dimen4=0pt
\fi
%
\ifdim  \shortenproofright<\wd\proofrulename
\then   \shortenproofright=\wd\proofrulename
\fi
%
\dimen2=\shortenproofleft \advance\dimen2 by\dimen1
\dimen3=\shortenproofright\advance\dimen3 by\dimen4
%
\ifproofdots
\then
        \dimen6=\shortenproofleft \advance\dimen6 .5\dimen0
        \setbox1=\vbox to\proofdotseparation{\vss\hbox{$\cdot$}\vss}%
        \setbox0=\hbox{%
                \advance\dimen6-.5\wd1
                \kern\dimen6
                $\vcenter to\proofdotnumber\proofdotseparation
                        {\leaders\box1\vfill}$%
                \unhbox\proofrulename}%
\else   \dimen6=\fontdimen22\the\textfont2 
        \dimen7=\dimen6
        \advance\dimen6by.5\proofrulebreadth
        \advance\dimen7by-.5\proofrulebreadth
        \setbox0=\hbox{%
                \kern\shortenproofleft
                \ifdoubleproof
                \then   \hbox to\dimen0{%
                        $\mathsurround0pt\mathord=\mkern-6mu%
                        \cleaders\hbox{$\mkern-2mu=\mkern-2mu$}\hfill
                        \mkern-6mu\mathord=$}%
                \else   \vrule height\dimen6 depth-\dimen7 width\dimen0
                \fi
                \unhbox\proofrulename}%
        \ht0=\dimen6 \dp0=-\dimen7
\fi
%
\let\doll\relax
\ifwasinsideprooftree
\then   \let\VBOX\vbox
\else   \ifmmode\else$\let\doll=$\fi
        \let\VBOX\vcenter
\fi
\VBOX   {\baselineskip\proofrulebaseline \lineskip.2ex
        \expandafter\lineskiplimit\ifproofdots0ex\else-0.6ex\fi
        \hbox   spread\dimen5   {\hfi\unhbox\proofabove\hfi}%
        \hbox{\box0}%
        \hbox   {\kern\dimen2 \box\proofbelow}}\doll%
%
\global\dimen2=\dimen2
\global\dimen3=\dimen3
\egroup 
\ifonleftofproofrule
\then   \shortenproofleft=\dimen2
\fi
\shortenproofright=\dimen3
%
\onleftofproofrulefalse
\ifinsideprooftree
\then   \hskip.5em plus 1fil \penalty2
\fi
}
\newcommand{\xyline}[2][]{\ensuremath{\smash{\xymatrix@1#1{#2}}}}
\newcommand{\xyinline}[2][]{\ensuremath{\smash{\xymatrix@1#1{#2}}}^{\rule[8.5pt]{0pt}{0pt}}}
\newif\ifignore 
\newcommand{\auxproof}[1]{
\ifignore\mbox{}\newline
\textbf{PROOF:} \dotfill\newline
{\it #1}\mbox{}\newline
\textbf{ENDPROOF}\dotfill
\fi}
\newtheorem{theorem}{Theorem}
\newtheorem{lemma}[theorem]{Lemma}
\newtheorem{proposition}[theorem]{Proposition}
\newtheorem{corollary}[theorem]{Corollary}
\newtheorem{definition}[theorem]{Definition}
\newtheorem{example}[theorem]{Example}
\newenvironment{proof}[1][Proof]%
   { \begin{trivlist}%
     \item[\hskip \labelsep {\bfseries #1}]%
   }%
   { \end{trivlist}%
   }
\newcommand{\QEDbox}{\square}
\newcommand{\QED}{\hspace*{\fill}$\QEDbox$}
\newcommand{\after}{\mathrel{\circ}}
\newcommand{\cat}[1]{\ensuremath{\mathbf{#1}}}
\newcommand{\Cat}[1]{\ensuremath{\mathbf{#1}}}
\newcommand{\op}{\ensuremath{^{\mathrm{op}}}}
\newcommand{\idmap}[1][]{\ensuremath{\mathrm{id}_{#1}}}
\newcommand{\id}[1][]{\idmap[#1]}
\renewcommand{\Im}{\ensuremath{\mathrm{Im}}}
\newcommand{\coker}{\ensuremath{\mathrm{coker}}}
\newcommand{\charac}{\ensuremath{\mathrm{char}}}
\newcommand{\KSub}{\ensuremath{\mathrm{KSub}}}
\newcommand{\sasaki}{\mathrel{\supset}}
\newcommand{\andthen}{\mathrel{\&}}
\newcommand{\powerset}{\mathcal{P}}
\newcommand{\nul}{\ensuremath{\underline{0}}}
\newcommand{\Rel}{\Cat{Rel}\xspace}
\newcommand{\PInj}{\Cat{PInj}\xspace}
\newcommand{\Hilb}{\Cat{Hilb}\xspace}
\newcommand{\PHilb}{\Cat{PHilb}\xspace}
\newcommand{\Sets}{\Cat{Sets}\xspace}
\newcommand{\cotuple}[2]{\ensuremath{[ #1,\,#2 ]}}
\newcommand{\Karoubi}[1]{{\cal K}(#1)}
\newcommand{\dagKaroubi}[1]{{\cal K}^{\dag}(#1)}
\newcommand{\set}[2]{\{#1\;|\;#2\}}
\newcommand{\setin}[3]{\{#1\in#2\;|\;#3\}}
\newcommand{\conjun}{\mathrel{\wedge}}
\newcommand{\disjun}{\mathrel{\vee}}
\newcommand{\allin}[3]{\forall_{#1\in#2}.\,#3}
\newcommand{\ex}[2]{\exists_{#1}.\,#2}
\newcommand{\exin}[3]{\exists_{#1\in#2}.\,#3}
\newcommand{\downset}{\mathop{\downarrow}\!}
\newcommand{\Perp}{\mathop{\perp}}
\newcommand{\effect}[1]{\mathfrak{E}(#1)}
\newcommand{\sai}[1]{[\,#1\,]}
\newcommand{\EndoHom}[1]{{\cal E}{\kern-.5ex}\textit{n}{\kern-.2ex}\textit{d}{\kern-.2ex}\textit{o}(#1)}
\title{Orthomodular lattices, Foulis Semigroups \\
       and Dagger Kernel Categories}
\author{Bart Jacobs, \\
{\small Institute for Computing and Information Sciences (iCIS),} \\[-.5em]
{\small Radboud University Nijmegen, The Netherlands.} \\[-.5em]
{\small Webaddress: \url{www.cs.ru.nl/B.Jacobs}}}
\date{\small \today}
\newenvironment{bijectivecorrespondence}
  {\newif\ifbijnotfirst
   \global\bijnotfirstfalse
   \global\def\bijprev{}
   \normalsize
   \begin{tabular}{cl}}
  {\end{tabular}
   }
\newcommand{\correspondence}[2][]{%
  \ifbijnotfirst%
    \rule{0pt}{5.8pt}%
    \smash{\ensuremath{\infer={\hphantom{#2}}{\hphantom{\bijprev}}}} \\%
  \fi%
  \global\bijnotfirsttrue%
  \global\def\bijprev{#2}%
  \ensuremath{#2} & #1 \\%
  }
\begin{document}
\maketitle

\begin{abstract}
This paper is a sequel to~\cite{HeunenJ09a} and continues the study of
quantum logic via dagger kernel categories. It develops the relation
between these categories and both orthomodular lattices and Foulis
semigroups. The relation between the latter two notions has been
uncovered in the 1960s. The current categorical perspective gives a
broader context and reconstructs this relationship between
orthomodular lattices and Foulis semigroups as special instance.
\end{abstract}

\section{Introduction}\label{IntroSec}

Dagger kernel categories have been introduced in~\cite{HeunenJ09a} as
a relatively simple setting in which to study categorical quantum
logic. These categories turn out to have orthomodular logic built in,
via their posets $\KSub(X)$ of kernel subobjects that can be used to
interprete predicates on $X$.  The present paper continues the study
of dagger kernel categories, especially in relation to orthomodular
lattices and Foulis semigroups. The latter two notions have been
studied extensively in the context of quantum logic. The main results
of this paper are as follows.
\begin{enumerate}
\item[(1)] A special category \Cat{OMLatGal} is defined with
  orthomodular lattices as objects and Galois connections between them
  as morphisms; it is shown that:
\begin{itemize}
\item \Cat{OMLatGal} is itself a dagger kernel category---with some
additional structure such a dagger biproducts, and an opclassifier;

\item for each dagger kernel category \Cat{D} there is a functor
  $\Cat{D} \rightarrow \Cat{OMLatGal}$ preserving the dagger kernel
  structure; hence \Cat{OMLatGal} contains in a sense all dagger
  kernel categories.
\end{itemize}

\item[(2)] For each object $X$ in a dagger kernel category, the homset
  $\EndoHom{X} = \Cat{D}(X,X)$ of endo-maps is a Foulis semigroup.

\item[(3)] Every Foulis semigroup $S$ yields a dagger kernel category
  $\dagKaroubi{S}$ via the ``dagger Karoubi'' construction
  $\dagKaroubi{-}$.
\end{enumerate}

Translations between orthomodular lattices and Foulis semigroups have
been described in the 1960s, see
\textit{e.g.}~\cite{Foulis60,Foulis62,Foulis63,BlythJ72,Kalmbach83}. These
translations appear as special instances of the above results:
\begin{itemize}
\item given a Foulis semigroup $S$, all the kernel posets $\KSub(s)$
  are orthomodular lattices, for each object $s\in\dagKaroubi{S}$ of
  the associated dagger kernel category (using point~(3) mentioned
  above). For the unit element $s=1$ this yields the ``old''
  translation from Foulis semigroups to orthomodular lattices;

\item given an orthomodular lattice $X$, the set of (Galois) endomaps
  $\EndoHom{X} = \Cat{OMLatGal}(X, X)$ on $X$ in the dagger kernel
  category \Cat{OMLatGal} forms a Foulis semigroup---using points~(1)
  and~(2). Again this is the ``old'' translation, from orthomodular
  lattices to Foulis semigroups.
\end{itemize}

\noindent Since dagger kernel categories are essential in these
constructions we see (further) evidence of the relevance of categories
in general, and of dagger kernel categories in particular, in the
setting of quantum (logical) structures.

The paper is organised as follows. Section~\ref{DagKerSec} first
recalls the essentials about dagger kernel categories
from~\cite{HeunenJ09a} and also about the (dagger) Karoubi envelope.
It shows that dagger kernel categories are closed under this
construction. Section~\ref{OMLatDagKerSec} introduces the fundamental
category \Cat{OMLatGal} of orthomodular lattices with Galois
connections between them, investigates some of its properties, and
introduces the functor $\KSub\colon \cat{D}\rightarrow \Cat{OMLatGal}$
for any dagger kernel category \Cat{D}. Subsequently,
Section~\ref{FoulisDagKerSec} recalls the definition of Foulis semigroups,
shows how they arise as endo-homsets in dagger kernel categories,
and proves that their dagger Karoubi envelope yields a dagger
kernel category. The paper ends with some final remarks and further
questions in Section~\ref{ConclusionSec}.

\section{Dagger kernel categories}\label{DagKerSec}

Since the notion of dagger kernel category is fundamental in this
paper we recall the essentials from~\cite{HeunenJ09a}, where this type
of category is introduced. Further details can be found there.

A dagger kernel category consists of a category \Cat{D} with a dagger
functor $\dag \colon \Cat{D}\op \rightarrow \Cat{D}$, a zero object
$0\in\Cat{D}$, and dagger kernels. The functor $\dag$ is the identity
on objects $X\in\Cat{D}$ and satisfies $f^{\dag\dag} = f$ on morphisms
$f$. The zero object $0$ yields a zero map, also written as $0$,
namely $X\rightarrow 0 \rightarrow Y$ between any two objects
$X,Y\in\Cat{D}$. A dagger kernel of a map $f\colon X\rightarrow Y$ is
a kernel map, written as $\smash{\xymatrix@C-1pc{k\colon K\ar@{
      |>->}[r] & X}}\!$, which is---or can be chosen as---a dagger mono,
meaning that $k^{\dag} \after k = \idmap[K]$. Often we write $k =
\ker(f)$, and $\coker(f) = \ker(f^{\dag})^{\dag}$ for the cokernel of
$f$. The definition $k^{\perp} = \ker(k^{\dag})$ for a kernel $k$
yields an orthocomplement.

We write \Cat{DKC} for the category with dagger kernel categories
as objects and functors preserving $\dag, 0, \ker$.

The main examples of dagger kernel categories are: \Rel, the category
of sets and relations, its subcategory \Cat{pInj} of sets and partial
injections, \Hilb, the category of Hilbert spaces and
bounded/continuous linear maps between them, and \PHilb, the category
of projective Hilbert spaces. This paper adds another example, namely
\Cat{OMLatGal}.

The main results from~\cite{HeunenJ09a} about dagger kernel categories
are as follows.
\begin{enumerate}
\item Each poset $\KSub(X)$ of kernel subobjects of an object $X$ is
  an orthomodular lattice; this is the basis of the relevance of
  dagger kernel categories to quantum logic.

\item Pullbacks of kernels exist along arbitrary maps $f\colon
  X\rightarrow Y$, yielding a pullback (or substitution) functor
  $f^{-1} \colon \KSub(Y) \rightarrow \KSub(X)$. Explicitly, like
  in~\cite{Freyd64}, $f^{-1}(n) = \ker(\coker(n) \after f)$.

\item This pullback functor $f^{-1}$ has a left adjoint $\exists_{f}
  \colon \KSub(X)\rightarrow \KSub(Y)$, corresponding to image
  factorisation. These $f^{-1}$ and $\exists_f$ only preserve part of
  the logical structure---meets are preserved by $f^{-1}$ and joins by
  $\exists_f$, via the adjointness---but for instance negations and
  joins are not preserved by substitution $f^{-1}$, unlike what is
  standard in categorical logic, see \textit{e.g.}~\cite{Jacobs99a}.

  Substitution $f^{-1}$ and existential quantification $\exists_{f}$
  are inter-expressible, via $f^{-1}(m)^{\perp} =
  \exists_{f^\dag}(m^{\perp})$.

\item The logical ``Sasaki'' hook $\sasaki$ and ``and-then''
  $\andthen$ connectives---together with the standard adjunction
  between them~\cite{Finch70,CoeckeS04}---arise via this adjunction
  $\exists_{f} \dashv f^{-1}$, namely for $m,n,k\in \KSub(X)$ as:
$$\begin{array}{rclcrcl}
m \sasaki n 
& = &
\effect{m}^{-1}(n)
& \quad &
k \andthen m
& = &
\exists_{\effect{m}}(k) \\
& = &
m^{\perp} \disjun (m\conjun n)
& &
& = &
m \conjun (m^{\perp} \disjun k),
\end{array}$$

\noindent where $\effect{m} = m \after m^{\dag}\colon X\rightarrow X$
is the effect (see~\cite{DvurecenskijP00}) associated with the kernel
$m$.
\end{enumerate}

\subsection{Karoubi envelope}\label{KaroubiSubsec}

Next we recall the essentials of the so-called Karoubi envelope
(see~\cite{Karoubi78} or~\cite[Chapter~2, Exercise~B]{Freyd64})
construction---and its ``dagger'' version---involving the free
addition of splittings of idempotents to a category. The construction
will be used in Section~\ref{FoulisDagKerSec} to construct a dagger
kernel category out of a Foulis semigroup. It is thus instrumental,
and not studied in its own right.

An idempotent in a category is an endomap $s\colon X\rightarrow X$
satisfying $s\after s = s$. A splitting of such an idempotent $s$ is a
pair of maps $e\colon X\rightarrow Y$ and $m\colon Y\rightarrow X$
with $m\after e = s$ and $e\after m = \idmap[Y]$. Clearly, $m$ is then
a mono and $e$ is an epi. Such a splitting, if it exists, is unique
up-to-isomorphism.

\auxproof{
Assume also $s = m' \after e'\colon X\rightarrow Y' \rightarrow X$
with $e'\after m' = \idmap[Y']$. Then $m\after e = m' \after e'$,
so that:
$$m\after e \after m' = m'
\qquad\mbox{and}\qquad
m'\after e' \after m = m.$$

\noindent Hence the maps $\varphi = e'\after m\colon Y\rightarrow Y'$
and $\psi = e\after m' \colon Y'\rightarrow Y$ are each other's
inverses:
$$\begin{array}{rcl}
\varphi \after \psi
& = &
e'\after m \after e \after m' \\
& = &
e'\after m' \\
& = &
\idmap \\
\psi \after \varphi 
& = &
e \after m' \after e' \after m \\
& = &
e \after m \\
& = &
\idmap.
\end{array}$$
}

For an arbitrary category \Cat{C} the so-called Karoubi envelope
$\Karoubi{\Cat{C}}$ has idempotents $s\colon X\rightarrow X$ in
\Cat{C} as objects. A morphism $\smash{(X\stackrel{s}{\rightarrow}X)
  \,\stackrel{f}{\longrightarrow}\, (Y\stackrel{t}{\rightarrow}Y)}$ in
$\Karoubi{\Cat{C}}$ consists of a map $f\colon X\rightarrow Y$ in
\Cat{C} with $f\after s = f = t\after f$. The identity on an object
$(X,s)\in\Karoubi{\Cat{C}}$ is the map $s$ itself. Composition in
$\Karoubi{\Cat{C}}$ is as in \Cat{C}. The mapping $X\mapsto
(X,\idmap[X])$ thus yields a full and faithful functor ${\cal I}\colon
\Cat{C} \rightarrow \Karoubi{\Cat{C}}$.

The Karoubi envelope $\Karoubi{\Cat{C}}$ can be understood as the free
completion of \Cat{C} with splittings of idempotents. Indeed, an
idempotent $f\colon (X,s)\rightarrow (X,s)$ in $\Karoubi{\Cat{C}}$ can
be split as $f = \big((X,s) \,\smash{\stackrel{f}{\rightarrow}}\,
(X,f) \,\smash{\stackrel{f}{\rightarrow}}\, (X,s)\big)$. If $F\colon
\Cat{C}\rightarrow \Cat{D}$ is a functor to a category \Cat{D} in
which endomorphisms split, then there is an up-to-isomorphism unique
functor $\overline{F}\colon \Karoubi{\Cat{C}}\rightarrow \Cat{D}$ with
$\overline{F}\after{\cal I} \cong F$.

\auxproof{
For an object $(X,s)$, define $\overline{F}(X)$ via the splitting
in \Cat{D}:
$$\xymatrix@R-1pc{
FX\ar[rr]^-{Fs}\ar@{->>}[dr]_{e_s} & & FX \\
& \overline{F}X\ar@{ >->}[ur]_{m_s}
}$$

\noindent For $f\colon (X,s)\rightarrow (Y,t)$ we take
$\overline{F}(f) = e_{t} \after F(f) \after m_{s}$.  Then:
$$\begin{array}{rcl}
\overline{F}(\idmap[(X,s)])
& = &
e_{s} \after F(s) \after m_{s} \\
& = &
e_{s} \after m_{s} \after e_{s} \after m_{s} \\
& = &
\idmap \after \idmap \\
& = &
\idmap \\
\overline{F}(g) \after \overline{F}(f) 
& = &
e_{r} \after F(g) \after m_{t} \after e_{t} \after F(f) \after m_{s} \\
& = &
e_{r} \after F(g) \after F(t) \after F(f) \after m_{s} \\
& = &
e_{r} \after F(g \after t\after f) \after m_{s} \\
& = &
e_{r} \after F(g \after f) \after m_{s} \\
& = &
\overline{F}(g\after f).
\end{array}$$

\noindent This $\overline{F}$ is unique up-to-isomorphism: each
idempotent $s\colon X\rightarrow X$ can be understood as a 
splitting in $\Karoubi{\Cat{C}}$, namely of:
$$\xymatrix@R-1pc{
{\cal{I}}(X) = (X,\idmap)\ar[rr]^-{{\cal{I}}(s) = s}\ar@{->>}[dr]_{s} & & 
   (X,\idmap{}) = {\cal{I}}(X) \\
& (X,s)\ar@{ >->}[ur]_{s}
}$$

\noindent Since $\overline{F}$ preserves splittings and satisfies
$\overline{F} \after {\cal I} \cong F$, we get the definition as
described above.
}

Hayashi~\cite{Hayashi85} (see also~\cite{HoofmanM95}) has developed a
theory of semi-functors and semi-adjunctions that can be used to
capture non-extensional features, without uniques of mediating maps,
like for of exponents $\Rightarrow$~\cite{Scott80a,LambekS86},
products $\prod$~\cite{Jacobs91b}, or exponentials
$!$~\cite{Hoofman92}. The Karoubi envelope can be used to turn such
``semi'' notions into proper (extensional) ones. This also happens in
Section~\ref{FoulisDagKerSec}.

\smallskip

Now assume \Cat{D} is a dagger category. An endomap $s\colon X
\rightarrow X$ in \Cat{D} is called a self-adjoint idempotent if
$s^{\dag} = s = s\after s$. A splitting of such an $s$ consists, as
before, of maps $m,e$ with $m\after e = s$ and $e\after m = \idmap$.
In that case $e^{\dag}, m^{\dag}$ is also a splitting of $s$, so that
we get an isomorphism $\varphi = m^{\dag} \after m$ in a commuting
diagram:
$$\xymatrix@R-1pc@C+1pc{
X\ar[dd]_{s}\ar[dr]_{e}\ar@/^2ex/[drr]^-{m^{\dag}} \\
& Y\ar[dl]_{m}\ar[r]^(0.4){\varphi}_(0.4){\cong} 
   & Y\ar@/^2ex/[dll]^-{e^{\dag}} \\
X
}$$

\noindent Hence $e^{\dag} = m$, as subobjects, and $m^{\dag} = e$ as
quotients.

The dagger Karoubi envelope $\dagKaroubi{\Cat{D}}$ of $\Cat{D}$ is the
full subcategory of $\Karoubi{\Cat{D}}$ with self-adjoint idempotents
as objects, see also~\cite{Selinger08}. This is again a dagger
category, since if $f\colon (X,s) \rightarrow (Y,t)$ in
$\dagKaroubi{\Cat{D}}$, then $f^{\dag} \colon (Y,t) \rightarrow (X,s)$
because:
$$s \after f^{\dag}
=
s^{\dag} \after f^{\dag}
=
(f \after s)^{\dag}
=
f^{\dag}.$$

\noindent Similarly $f^{\dag} \after t = f^{\dag}$.  The functor
${\cal I}\colon \Cat{D} \rightarrow \Karoubi{\Cat{D}}$ factors via
$\dagKaroubi{\Cat{D}} \rightarrow \Karoubi{\Cat{D}}$. One can
understand $\dagKaroubi{\Cat{D}}$ as the free completion of \Cat{D}
with splittings of self-adjoint idempotents.

Selinger~\cite{Selinger08} shows that the dagger Karoubi envelope
construction $\dagKaroubi{-}$ preserves dagger biproducts and dagger
compact closedness. Here we extend this with dagger kernels in the
next result. It will not be used in the sequel but is included to show
that the dagger Karoubi envelope is quite natural in the current
setting.

\begin{proposition}
\label{KaroubiKernelProp}
If \Cat{D} is a dagger kernel category, then so is
$\dagKaroubi{\Cat{D}}$. Moreover, the embedding ${\cal I}\colon
\Cat{D} \rightarrow \dagKaroubi{\Cat{D}}$ is a map of dagger kernel
categories.
\end{proposition}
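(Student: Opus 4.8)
The plan is to verify directly that $\dagKaroubi{\Cat{D}}$ carries all the structure of a dagger kernel category, reusing the dagger structure already established in the excerpt. We already know $\dagKaroubi{\Cat{D}}$ is a dagger category, with $f^{\dag}\colon (Y,t)\rightarrow(X,s)$ the dagger of $f\colon(X,s)\rightarrow(Y,t)$. So three things remain: exhibit a zero object, construct dagger kernels, and check that ${\cal I}$ preserves $\dag$, $0$, and $\ker$.

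First I would identify the zero object. The natural candidate is $(0,\idmap[0]) = {\cal I}(0)$, where $0\in\Cat{D}$ is the zero object of \Cat{D}; since $\id[0]$ is trivially a self-adjoint idempotent, this lies in $\dagKaroubi{\Cat{D}}$, and I would check it is a zero object there by observing that the unique maps $X\rightarrow 0\rightarrow Y$ in \Cat{D} restrict correctly and that the condition $f\after s = f = t\after f$ is automatic for maps into or out of $(0,\id)$. The zero map $(X,s)\rightarrow(Y,t)$ is then just the zero map of \Cat{D}, which indeed satisfies the morphism condition since $0\after s = 0 = t\after 0$.

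The main work is the construction of dagger kernels, and this is the step I expect to be the real obstacle. Given $f\colon(X,s)\rightarrow(Y,t)$ in $\dagKaroubi{\Cat{D}}$, I would form the ordinary dagger kernel $k = \ker(f)\colon K\rightarrowtail X$ in \Cat{D}. The difficulty is that $K$ is not itself an object of $\dagKaroubi{\Cat{D}}$: I must produce a \emph{self-adjoint idempotent} on some object that serves as the domain of the kernel inside $\dagKaroubi{\Cat{D}}$. The correct candidate is the object $(K, k^{\dag}\after s\after k)$, with kernel map $s\after k$ (suitably understood as a morphism $(K, k^{\dag}\after s\after k)\rightarrow(X,s)$); the guiding idea is that $s$ restricted along $k$ gives the piece of the subobject $K$ that lives in the splitting determined by $s$. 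I would then verify: (i) that $r := k^{\dag}\after s\after k$ is a self-adjoint idempotent, where self-adjointness uses $s^{\dag}=s$ and idempotency uses that $s\after k$ factors through $k$ because $k = \ker(f)$ and $f\after s = f$ forces $f\after s\after k = f\after k = 0$, so $s\after k$ factors through $\ker(f)=k$; (ii) that the resulting map is a dagger mono in $\dagKaroubi{\Cat{D}}$, i.e.\ its $\dag$-composite with itself is the identity $r$ of the object $(K,r)$; and (iii) the universal property, namely that a map $g\colon(Z,u)\rightarrow(X,s)$ with $f\after g = 0$ factors uniquely through this kernel. Step (i) and the factorisation in (iii) are where the idempotent bookkeeping $f\after s=f=t\after f$ and the equalities $g\after u = g = s\after g$ must be combined carefully; this is the technical heart.

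Finally, preservation by ${\cal I}$ is essentially a consistency check. For an object $X\in\Cat{D}$ we have ${\cal I}(X)=(X,\id[X])$, and for $f\colon X\rightarrow Y$ the kernel construction above specialises, with $s=\id[X]$, to $r = k^{\dag}\after k = \id[K]$ (since $k$ is a dagger mono), giving back the object ${\cal I}(K)$ and the kernel map $k$ itself. Thus ${\cal I}(\ker f) = \ker({\cal I}f)$, and preservation of $\dag$ and $0$ is immediate from the definitions, so ${\cal I}$ is a map of dagger kernel categories. The only genuinely delicate part of the whole proof is confirming that the candidate kernel object and map satisfy all three kernel axioms simultaneously; once the object $(K, k^{\dag}\after s\after k)$ is correctly guessed, the verifications are routine but must be done in the right order.
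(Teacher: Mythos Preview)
Your proposal is correct and matches the paper's proof essentially step for step: the paper takes the same zero object $(0,\id[0])$, the same kernel map $s\after k$, and the same self-adjoint idempotent on $K$, defined there as the unique $s'$ with $k\after s' = s\after k$, which coincides with your explicit formula $r = k^{\dag}\after s\after k$ since $k$ is a dagger mono. The verifications you outline (self-adjoint idempotent, dagger mono, universal property, and preservation by ${\cal I}$) are exactly those carried out in the paper.
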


\begin{proof}
For each object $X\in\Cat{D}$, the zero map $0\colon X\rightarrow X$
is a zero object in \Cat{D}, since there is precisely one map
$(X,0)\rightarrow (Y,t)$ in $\dagKaroubi{\Cat{D}}$, namely the zero
map $0\colon X\rightarrow Y$. As canonical choice we take the zero
object $0\in\Cat{D}$ with zero map $0 = \idmap[0]\colon 0\rightarrow
0$, which is in the range of ${\cal I}\colon \Cat{D} \rightarrow
\dagKaroubi{\Cat{D}}$.

For an arbitrary map $f\colon (X,s)\rightarrow (Y,t)$ in
$\dagKaroubi{\Cat{D}}$, let $k\colon K\rightarrowtail X$ be the kernel
of $f\colon X\rightarrow Y$ in \Cat{D}. We obtain a map $s'\colon
K\rightarrow K$, as in:
$$\xymatrix@C+1pc@R-1pc{
K\ar@{ |>->}[r]^-{k} & X\ar[r]^-{f} & Y \\
K\ar@{..>}[u]^{s'}\ar[r]_-{k} & X\ar[u]_{s}
}$$

\noindent since $f\after s \after k = f\after k = 0$. We obtain that
$s'$ is a self-adjoint idempotent, using that $k$ is a dagger mono
(\textit{i.e.}~satisfies $k^{\dag} \after k = \idmap{}$).
$$\begin{array}{ccc}
\begin{array}{rcl}
s' \after s
& = &
k^{\dag} \after k \after s' \after s' \\
& = &
k^{\dag} \after s \after k \after s' \\
& = &
k^{\dag} \after s\after s \after k \\
& = &
k^{\dag} \after s\after k \\
& = &
k^{\dag} \after k \after s' \\
& = &
s'.
\end{array}
& \qquad &
\begin{array}{rcl}
s'^{\dag}
& = &
s'^{\dag} \after k^{\dag} \after k \\
& = &
(k\after s')^{\dag} \after k \\
& = &
(s \after k)^{\dag} \after k \\
& = &
k^{\dag} \after s^{\dag} \after k \\
& = &
k^{\dag} \after s \after k \\
& = &
k^{\dag} \after k \after s' \\
& = &
s'.
\end{array}
\end{array}$$

\noindent This yields a dagger kernel in $\dagKaroubi{\Cat{D}}$,
$$\xymatrix@C+1pc{
(K,s')\ar[r]^-{s\after k} & (X,s)\ar[r]^-{f} & (Y,t)
}$$

\noindent since:
\begin{itemize}
\item $s\after k$ is a morphism in $\dagKaroubi{\Cat{D}}$: $s\after (s
  \after k) = s\after k$ and $(s\after k) \after s' = s\after s\after
  k = s\after k$;

\item $s\after k$ is a dagger mono:
$$\begin{array}{rcl}
(s\after k)^{\dag} \after (s\after k)
& = &
(k \after s')^{\dag} \after (k\after s') \\
& = &
s'^{\dag} \after k^{\dag} \after k \after s' \\
& = &
s' \after s' \\
& = &
s' \\
& = &
\idmap[(K,s')];
\end{array}$$

\item $f\after (s \after k) = f\after k = 0$;

\item if $g\colon (Z,r)\rightarrow (X,s)$ satisfies $f\after g = 0$,
  then there is a map $h\colon Z\rightarrow K$ in \Cat{D} with
  $k\after h = g$.  Then $s' \after h = h$, since $k\after s' \after h
  = s \after k \after h = s \after g = g = k \after h$. Similarly, $h
  \after r = h$, since $k\after h \after r = g \after r = g = k\after
  h$. Hence $h$ is a morphism $(Z,r)\rightarrow (K,s')$ in
  $\dagKaroubi{\Cat{D}}$ with $(s\after k) \after h = s\after g =
  g$. It is the unique one with this property since $s\after k$ is a
  (dagger) mono. \QED
\end{itemize}
\end{proof}

\begin{example}
\label{KaroubiKernelEx}
In the category \Hilb self-adjoint idempotents $s\colon H\rightarrow
H$ are also called projections. They can be written as $s = m \after
m^{\dag} = \effect{m}$ for a closed subspace $m\colon M\rightarrowtail
H$, see any textbook on Hilbert spaces
(\textit{e.g.}~\cite{Dvurecenskij92}). Hence they split already in
\Hilb, and so the dagger Karoubi envelope $\dagKaroubi{\Hilb}$ is
isomorphic to \Hilb: it does not add anything.

For the category \Rel of sets and relations the sitation is different.
A self-adjoint idempotent $S\colon X\rightarrow X$ is a relation
$S\subseteq X\times X$ that is both symmetric (since $S^{\dag} = S$)
and transitive (since $S\after S = S$), and thus a ``partial
equivalence relation'', commonly abbreviated as PER. The dagger
Karoubi envelope $\dagKaroubi{\Rel}$ has such PERs as objects. A
morphism $R \colon (S\subseteq X\times X) \rightarrow (T\subseteq
Y\times Y)$ is a relation $R\colon X\rightarrow Y$ with $R\after S = R
= T\after R$.  


\end{example}

Finally we note that the ``effect'' operation $\effect{m} = m \after
m^{\dag}$ can be described as a functor from the (total) category
$\KSub(\Cat{D})$ of kernels of a dagger kernel category
(see~\cite{HeunenJ09a}) to the dagger Karoubi envelope
$\dagKaroubi{\Cat{D}}$ as in the diagram:
$$\xymatrix{
\KSub(\Cat{D})\ar[rr]^-{\effect{-}} & & \dagKaroubi{\Cat{D}}
}$$

\noindent via:
$$\xymatrix@R-2.5pc{
M\ar@{ |>->}[dd]_{m}\ar@{-->}[r]^-{\varphi} & N\ar@{ |>->}[dd]^{n}
\\
& & \longmapsto &
\Big(X\ar[r]^-{\effect{m}} & X\Big)\ar[r]^-{f\after\effect{m}} &
   \Big(Y\ar[r]^-{\effect{n}} & Y\Big) \\
X\ar[r]^-{f} & Y 
}$$

\noindent We use that the necessarily unique map $\varphi\colon
M\rightarrow N$ with $n\after \varphi = f \after m$ satisfies
$\varphi = n^{\dag} \after n \after \varphi = n^{\dag} \after f \after m$.
Hence:
$$\effect{n} \after f \after \effect{m} 
= 
n \after n^{\dag} \after f \after m \after m^{\dag} 
= 
n \after \varphi \after m^{\dag} 
= 
f \after m \after m^{\dag} 
= 
f \after \effect{m},$$

\noindent so that $f \after \effect{m}$ is a morphism $\effect{m}\rightarrow
\effect{n}$ in the dagger Karoubi envelope $\dagKaroubi{\cat{D}}$. It
is not hard to see that this functor is full.

\auxproof{
We check functoriality:
$$\begin{array}{rcl}
\effect{\idmap[m]}
& = &
\idmap[X] \after \effect{m} \\
& = &
\effect{m} \\
& = &
\idmap[\effect{m}] \\
\effect(g\after f)
& = &
g \after f \after \effect{m} \\
& = &
g \after \effect{n} \after f \after \effect{m} \\
& = &
\effect{g} \after \effect{f}.
\end{array}$$

\noindent As to fulness, assume $g\colon \effect{m} \rightarrow
\effect{n}$ in $\dagKaroubi{\Cat{D}}$. Taking $\varphi = n^{\dag}
\after g \after m\colon M\rightarrow N$ we get:
$$n \after \varphi
=
n \after n^{\dag} \after g \after m
=
\effect{n} \after g \after m
=
g \after m.$$

\noindent Hence $g$ is a map $m\rightarrow n$ in $\KSub(\Cat{D})$.
It is mapped to itself: $\effect{g} = g \after \effect{m} = g$.
}

\section{Orthomodular lattices and Dagger kernel categories}\label{OMLatDagKerSec}

In~\cite{HeunenJ09a} it was shown how each dagger kernel category
gives rise to an indexed collection of orthomodular lattices, given by
the posets of the kernel subobjects $\KSub(X)$ of each object
$X$. Here we shall give a more systematic description of the situation
and see that a suitable category \Cat{OMLatGal} of orthomodular
lattices---with Galois connections between them---is itself a dagger
kernel category.  The mapping $\KSub(-)$ turns out to be functor to
this category \Cat{OMLatGal}, providing a form of representation of
dagger kernel categories.

We start by recalling the basic notion of orthomodular lattices.  They
may be understood as a non-distributive generalisation of Boolean
algebras. The orthomodularity formulation is due to~\cite{Husimi37},
following~\cite{BirkhoffN36}.

\begin{definition}
\label{OMLatDef}
A meet semi-lattice $(X,\conjun 1)$ is called an ortholattice if it
comes equipped with a function $(-)^{\perp}\colon X \to X$ satisfying:
\begin{itemize}
   \item $x^{\perp\perp} = x$;
   \item $x \leq y$ implies $y^\perp \leq x^\perp$;
   \item $x \conjun x^\perp = 0$.
\end{itemize}

\noindent One can then define a bottom element as $0 = 1 \conjun
1^{\perp} = 1^\perp$ and join by $x\disjun y = (x^{\perp}\conjun
y^{\perp})^{\perp}$, satisfying $x\disjun x^{\perp} = 1$.

Such an ortholattice is called orthomodular if it satisfies (one of)
the three equivalent conditions:
\begin{itemize}
\item $x \leq y$ implies $y = x \disjun (x^\perp \conjun y)$;

\item $x \leq y$ implies $x = y \conjun (y^\perp \disjun x)$;

\item $x \leq y$ and $x^{\perp} \conjun y = 0$ implies $x=y$.
\end{itemize}
\end{definition}

We shall consider two ways of organising orthomodular lattices
into a category.

\begin{definition}
The categories \Cat{OMLat} and \Cat{OMLatGal} both have orthomodular
lattices as objects.
\begin{enumerate}
\item A morphism $f\colon X\rightarrow Y$ in \Cat{OMLat} is a function
  $f\colon X\rightarrow Y$ between the underlying sets that preserves
  $\conjun, 1, (-)^{\perp}$---and thus also $\leq$, $\disjun$ and $0$;

\item A morphism $X\rightarrow Y$ in \Cat{OMLatGal} is a pair $f =
  (f_{*}, f^{*})$ of ``antitone'' functions $f_{*}\colon X\op
  \rightarrow Y$ and $f^{*}\colon Y \rightarrow X\op$ forming a Galois
  connection (or adjunction $f^{*}\dashv f_{*}$): $x\leq f^{*}(y)$ iff
  $y\leq f_{*}(x)$ for $x\in X$ and $y\in Y$.

The identity morphism on $X$ is the pair $(\bot,\bot)$ given by the
self-adjoint map $\idmap^{*} = \idmap[*] = (-)^{\perp} \colon X\op
\rightarrow X$. Composition of $\smash{X \stackrel{f}{\rightarrow} Y
  \stackrel{g}{\rightarrow} Z}$ is given by:
$$\begin{array}{rclcrcl}
(g\after f)_{*}
& = &
g_{*} \after \bot \after f_{*}
& \quad\mbox{and}\quad &
(g\after f)^{*}
& = &
f^{*} \after \bot \after g^{*}.
\end{array}$$
\end{enumerate}
\end{definition}



The category \Cat{OMLat} is the more obvious one, capturing the
(universal) algebraic notion of morphism as structure preserving
mapping. However, the category \Cat{OMLatGal} has more interesting
structure, as we shall see. It arises by restriction from a familiar
construction to obtain a (large) dagger category with involutive
categories as objects and adjunctions between them,
see~\cite{Heunen09}. The components $f_{*}\colon X\op \rightarrow Y$
and $f^{*}\colon Y \rightarrow X\op$ of a map $f\colon X\rightarrow Y$
in \Cat{OMLatGal} are not required to preserve any structure, but the
Galois connection yields that $f_{*}$ preserves meets, as right
adjoint, and thus sends joins in $X$ (meets in $X\op$) to meets in
$Y$, and dually, $f^{*}$ preserves joins and sends joins in $Y$ to
meets in $X$. 

The category $\Cat{OMLatGal}$ indeed has a dagger, namely by twisting:
$$\begin{array}{rcl}
(f_*,f^*)^{\dag}
& = &
(f^*,f_*).
\end{array}$$

\noindent A morphism $f\colon X \rightarrow Y$ in \Cat{OMLatGal} is a
dagger mono precisely when it safisfies $f^*(f_*(x)^\perp)=x^\perp$
for all $x \in X$, because $\id^*(x) = x^\perp = \id[*](x)$ and:
$$  (f^\dag \after f)^*(x) = f^*(f_*(x)^\perp) = (f^\dag \after f)_*(x).$$

In a Galois connection like $f^{*} \dashv f_{*}$ one map determines
the other. This standard result can be useful in proving equalities,
which, for convenience, we make explicit.

\begin{lemma}
\label{MapEqLem}
Suppose we have parallel maps $f,g\colon X\rightarrow Y$ in
\Cat{OMLatGal}. In order to prove $f=g$ it suffices to prove either
$f_{*} = g_{*}$ or $f^{*}=g^{*}$.
\end{lemma}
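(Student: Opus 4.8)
The plan is to exploit the standard fact about Galois connections that each adjoint uniquely determines the other. Concretely, given a map $f = (f_*, f^*)$ in \Cat{OMLatGal}, the component $f_*\colon X\op \rightarrow Y$ is a right adjoint to $f^*$, so it can be recovered from $f^*$ by the usual formula $f_*(x) = \bigwedge\{y \mid x \leq f^*(y)\}$ (computed in $Y$), and symmetrically $f^*$ can be recovered from $f_*$. Hence knowing one half of the pair pins down the other half, and therefore the whole morphism.

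First I would recall the defining adjointness property: for $x \in X$ and $y \in Y$ we have $x \leq f^*(y)$ iff $y \leq f_*(x)$. Suppose now that $f_* = g_*$. Then for every $x \in X$ and every $y \in Y$ the biconditionals $x \leq f^*(y) \Leftrightarrow y \leq f_*(x)$ and $x \leq g^*(y) \Leftrightarrow y \leq g_*(x)$ have identical right-hand sides (since $f_*(x) = g_*(x)$), so $x \leq f^*(y) \Leftrightarrow x \leq g^*(y)$ holds for all $x$. Fixing $y$ and letting $x$ range, this says $f^*(y)$ and $g^*(y)$ dominate exactly the same elements $x \in X$; since a poset element is determined by its down-set, I conclude $f^*(y) = g^*(y)$ for all $y$, that is $f^* = g^*$, and hence $f = g$. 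The symmetric argument, starting from $f^* = g^*$ and comparing the conditions $y \leq f_*(x) \Leftrightarrow x \leq f^*(y) = g^*(y) \Leftrightarrow y \leq g_*(x)$, yields $f_* = g_*$ the same way.

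I do not anticipate a genuine obstacle here: the statement is precisely the uniqueness-of-adjoint lemma transported into this antitone/twisted setting, and the only point requiring a little care is bookkeeping about which order ($X$, $Y$, or their opposites) each inequality lives in. The formulation in the paper has $f_*\colon X\op \rightarrow Y$ and $f^*\colon Y \rightarrow X\op$ with the single adjointness condition $x \leq f^*(y) \Leftrightarrow y \leq f_*(x)$, so I would simply keep the inequalities as written and avoid introducing opposite-order symbols, reducing the risk of a sign/direction slip. The core of the argument is the elementary fact that in any poset an element is determined by the set of elements below it, which makes the two implications collapse to an equality of values.

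Alternatively, and perhaps more cleanly, I could invoke the general categorical principle that adjoints are unique up to isomorphism, specialised to posets where isomorphism is equality; but since the paper wants an explicit self-contained argument useful for later equality proofs, the direct down-set comparison above is the more appropriate level of detail.
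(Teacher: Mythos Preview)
Your proof is correct and follows essentially the same route as the paper's: both use the adjointness biconditional to obtain $x \leq f^{*}(y) \Leftrightarrow x \leq g^{*}(y)$ for all $x$, and then conclude $f^{*}(y) = g^{*}(y)$ (the paper does this by instantiating $x = f^{*}(y)$ and $x = g^{*}(y)$, which is exactly your ``an element is determined by its down-set'' step made explicit).
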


\begin{proof}
We shall prove that $f_{*} = g_{*}$ suffices to obtain also 
$f^{*}=g^{*}$. For all $x\in X$ and $y\in Y$,
$$x\leq f^{*}(y) \Longleftrightarrow
y \leq f_{*}(x) = g_{*}(x) \Longleftrightarrow
x \leq g^{*}(y).$$

\noindent Given $y$ this holds for all $x$, and so in particular for
$x = f^{*}(y)$ and $x=g^{*}(y)$, which yields $f^{*}(y) =
g^{*}(y)$. \QED
\end{proof}

Despite this result we sometimes explicitly write out both equations
$f_{*}=g_{*}$ and $f^{*}=g^{*}$, in particular when there is a special
argument involved.

The following elementary lemma is fundamental.

\begin{lemma}
\label{DownsetLem}
Let $X$ be an orthomodular lattice, with element $a\in X$. 
\begin{enumerate}
\item The (principal) downset $\downset a = \setin{u}{X}{u \leq a}$ is
  again an orthomodular lattice, with order, conjunctions and
  disjunctions as in $X$, but with its own orthocomplement $\perp_a$ given
  by $u^{\perp_a} = a \conjun u^{\perp}$, where $\perp$ is the
  orthocomplement from $X$.

\item There is a dagger mono $\downset a \rightarrowtail X$ in
  \Cat{OMLatGal}, for which we also write $a$, with
$$\begin{array}{rclcrcl}
a_{*}(u) & = & u^{\perp}
& \quad\mbox{and}\quad &
a^{*}(x) & = & a \conjun x^{\perp}.
\end{array}$$
\end{enumerate}
\end{lemma}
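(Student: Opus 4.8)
The plan is to handle the two parts in turn, first establishing the orthomodular-lattice axioms for $\downset a$ and then exhibiting the Galois pair and checking it is a dagger mono.

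For part~(1), I would begin by noting that $\downset a$ is closed under the meets of $X$ (if $u,v\leq a$ then $u\conjun v\leq a$) and has $a$ as top element, so it is a bounded meet-semilattice with the order inherited from $X$. I then verify the three ortholattice axioms of Definition~\ref{OMLatDef} for the candidate complement $u^{\perp_a}=a\conjun u^{\perp}$. That $u^{\perp_a}\leq a$ and the antitonicity $u\leq v\Rightarrow v^{\perp_a}\leq u^{\perp_a}$ are immediate from antitonicity of $(-)^{\perp}$ in $X$ together with monotonicity of $a\conjun(-)$, and the law $u\conjun u^{\perp_a}=u\conjun a\conjun u^{\perp}=u\conjun u^{\perp}=0$ uses only $u\leq a$. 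The one nontrivial axiom is the involution law: here I would compute $u^{\perp_a\perp_a}=a\conjun(a\conjun u^{\perp})^{\perp}=a\conjun(a^{\perp}\disjun u)$ by de Morgan, and then invoke orthomodularity of $X$ (in the form $x\leq y\Rightarrow x=y\conjun(y^{\perp}\disjun x)$, with $x=u\leq a=y$) to conclude $u^{\perp_a\perp_a}=u$. Finally, to upgrade $\downset a$ from an ortholattice to an orthomodular one, I would check the third equivalent condition of Definition~\ref{OMLatDef}: assuming $u\leq v$ in $\downset a$ with $u^{\perp_a}\conjun v=0$, the hypothesis simplifies to $u^{\perp}\conjun v=0$ (since $v\leq a$ makes $a$ drop out of the meet), and orthomodularity of $X$ then forces $u=v$. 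The same orthomodular identity also shows that the join induced by $\perp_a$ agrees with $\disjun$ of $X$, justifying the phrase ``disjunctions as in $X$''.

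For part~(2), I would take the stated pair $a_{*}(u)=u^{\perp}$ and $a^{*}(x)=a\conjun x^{\perp}$ and first check the Galois condition making $(a_{*},a^{*})$ a morphism $\downset a\rightarrow X$ in \Cat{OMLatGal}: for $u\in\downset a$ and $x\in X$, the equivalences $u\leq a\conjun x^{\perp}\Leftrightarrow u\leq x^{\perp}\Leftrightarrow x\leq u^{\perp}$ (the first because $u\leq a$ already holds, the second by antitonicity of $(-)^{\perp}$ together with $x^{\perp\perp}=x$) are precisely $u\leq a^{*}(x)\Leftrightarrow x\leq a_{*}(u)$. It then remains to show this morphism is a dagger mono. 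Invoking the criterion recorded just before the lemma, a map $f$ is a dagger mono exactly when $f^{*}(f_{*}(x)^{\perp})$ equals the complement of $x$ in the domain; for $f=a$ this reads $a^{*}(a_{*}(u)^{\perp})=a^{*}(u^{\perp\perp})=a^{*}(u)=a\conjun u^{\perp}=u^{\perp_a}$, which is exactly the $*$-component of $\idmap[\downset a]$. By Lemma~\ref{MapEqLem} this single component identity yields $a^{\dag}\after a=\idmap[\downset a]$, so $a$ is a dagger mono.

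The main obstacle is bookkeeping rather than depth: one must keep the two orthocomplements, $(-)^{\perp}$ of $X$ and $(-)^{\perp_a}$ of $\downset a$, carefully apart, and recognise that orthomodularity of $X$ enters essentially---precisely where it makes $(-)^{\perp_a}$ an involution and where it transports the orthomodular law to the downset. In part~(2) the delicate point is that the right-hand side of the dagger-mono criterion is the complement of the \emph{domain} $\downset a$, namely $u^{\perp_a}=a\conjun u^{\perp}$, and not $u^{\perp}$; once this is observed, the verification collapses to a one-line calculation.
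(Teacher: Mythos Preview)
Your proposal is correct and follows essentially the same approach as the paper: the key involution computation $u^{\perp_a\perp_a}=a\conjun(a^{\perp}\disjun u)=u$ via orthomodularity, the Galois biconditional, and the dagger-mono check $a^{*}(a_{*}(u)^{\perp})=u^{\perp_a}$ all match the paper's argument line by line. You are in fact more thorough than the paper, which records only the involution step for part~(1) and omits the explicit verification of orthomodularity of $\downset a$ and the agreement of joins that you supply.
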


\begin{proof}
For the first point we check, for $u\in \downset a$,
$$u^{\perp_{a}\perp_{a}}
=
a \conjun (a \conjun u^{\perp})^{\perp} 
=
a \conjun (a^{\perp} \disjun u) 
=
u,$$

\noindent by orthomodularity, since $u\leq a$. We get a map in
\Cat{OMLatGal} because for arbitrary $u\in \downset a$ and $x\in X$,
$$x\leq a_{*}(u) = u^{\perp}
\Longleftrightarrow
u \leq x^{\perp}
\Longleftrightarrow
u \leq a \conjun x^{\perp} = a^{*}(x).$$

\noindent This map $a\colon \downset a\rightarrow X$ is a dagger mono
since:
$$a^{*}(a_{*}(u)^{\perp}) =
a^{*}(u^{\perp\perp}) =
a^{*}(u) =
a \conjun u^{\perp} =
u^{\perp_a}. \eqno{\QEDbox}$$
\end{proof}

We should emphasise that the equation $u^{\perp_{a}\perp_{a}} = u$ only
holds for $u\leq a$, and not for arbitrary elements $u$.

Later, in Proposition~\ref{DownsetIsKerProp}, we shall see that these
maps $\downset a \rightarrowtail X$ are precisely the kernels in the
category \Cat{OMLatGal}. But we first show that this category has
kernels in the first place.

To begin, $\Cat{OMLatGal}$ has a zero object $\nul$, namely the
one-element orthomodular lattice $\{*\}$. We can write its unique
element as $*=0=1$. Let us show that the lattice $\nul$ is indeed a
final object in $\Cat{OMLatGal}$. Let $X$ be an arbitrary orthomodular
lattice. The only function $f_* \colon X \to \nul$ is $f_*(x)=1$. It
has an obvious left adjoint $f^* \colon \nul \to L$ defined by $f^*(1)
= 1$:
$$\begin{prooftree}
x \;\leq\; f^{*}(1)\rlap{$\;=1$} 
\Justifies
1 \leq 1\rlap{$\;=f_{*}(x)$}
\end{prooftree}$$

\noindent Likewise, the unique morphism $g\colon \nul \to Y$ is given
by $g_*(1)=1$ and $g^*(y)=1$. Hence the zero morphism $z \colon X \to
Y$ is determined by $z_*(x)=1$ and $z^*(y)=1$.

\begin{theorem}
\label{OMLatGalDagKerCatThm}
  The category $\Cat{OMLatGal}$ is a dagger kernel category. The
  (dagger) kernel of a morphism $f \colon X \to Y$ is $k \colon\!
  \downset k \to X$, where $k = f^*(1) \in X$, like in
  Lemma~\ref{DownsetLem}.
\end{theorem}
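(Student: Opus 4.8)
The plan is to build on the structure already in place: $\Cat{OMLatGal}$ carries a dagger by twisting and possesses the zero object $\nul$, so the only thing left to establish is that every morphism has a dagger kernel, given by the stated formula. Fix $f\colon X\to Y$ and set $k = f^{*}(1)\in X$. Lemma~\ref{DownsetLem} already supplies a dagger mono $k\colon \downset k \rightarrowtail X$, so it suffices to check that this map enjoys the universal property of the kernel of $f$.

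First I would verify that $f \after k$ is the zero morphism. Unfolding the composite through the middle object $X$ gives $(f\after k)_{*}(u) = f_{*}(u^{\perp\perp}) = f_{*}(u)$ for $u\in\downset k$; since $u \leq k = f^{*}(1)$, the Galois connection $f^{*}\dashv f_{*}$ turns $u \leq f^{*}(1)$ into $1 \leq f_{*}(u)$, that is, $f_{*}(u) = 1$. This matches the $_{*}$-component of the zero map, so $f\after k = 0$ by Lemma~\ref{MapEqLem}.

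For the universal property, suppose $g\colon Z\to X$ satisfies $f\after g = 0$. Computing $(f\after g)_{*}(u) = f_{*}(g_{*}(u)^{\perp}) = 1$ and applying the Galois connection once more yields the key consequence $g_{*}(u)^{\perp} \leq k$ for every $u\in Z$. I would then define a candidate factorisation $h\colon Z\to \downset k$ by $h_{*}(u) = k\conjun g_{*}(u)$ and $h^{*}(v) = g^{*}(v)$ (the latter being $g^{*}$ restricted to $\downset k$). Checking that $h$ is a morphism of $\Cat{OMLatGal}$ \emph{is} exactly checking the defining equivalence, which unwinds as $u \leq g^{*}(v) \Leftrightarrow v \leq g_{*}(u) \Leftrightarrow v \leq k\conjun g_{*}(u)$, the last step being valid because $v\leq k$; this is a short and routine verification.

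The crux is the identity $k\after h = g$. Computing the $_{*}$-component through the middle object $\downset k$ gives $(k\after h)_{*}(u) = \big(k\conjun (k\conjun g_{*}(u))^{\perp}\big)^{\perp} = \big(k\conjun (k^{\perp} \disjun g_{*}(u)^{\perp})\big)^{\perp}$, and this is precisely where orthomodularity is indispensable: applying the orthomodular law $x\leq y \Rightarrow x = y\conjun(y^{\perp}\disjun x)$ with $x = g_{*}(u)^{\perp} \leq k = y$ (legal by the consequence derived above) collapses the inner expression to $g_{*}(u)^{\perp}$, so that $(k\after h)_{*}(u) = g_{*}(u)$ and hence $k\after h = g$ by Lemma~\ref{MapEqLem}. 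Uniqueness of $h$ is immediate since the dagger mono $k$ is in particular monic. I expect this final orthomodular collapse to be the only genuinely delicate point; everything else is formal manipulation of the Galois-connection composition rule together with the defining adjunctions.
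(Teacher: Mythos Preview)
Your argument is correct and essentially identical to the paper's own proof: the same kernel candidate $k=f^{*}(1)$, the same factorisation $h_{*}(z)=k\conjun g_{*}(z)$ with $h^{*}=g^{*}\restriction\downset k$, and the same appeal to orthomodularity (you phrase it as $g_{*}(u)^{\perp}=k\conjun(k^{\perp}\disjun g_{*}(u)^{\perp})$ from $g_{*}(u)^{\perp}\leq k$, the paper equivalently uses $g_{*}(z)=k^{\perp}\disjun(k\conjun g_{*}(z))$ from $k^{\perp}\leq g_{*}(z)$). Your invocation of Lemma~\ref{MapEqLem} to avoid computing the $^{*}$-components is a harmless streamlining that the paper also relies on implicitly.
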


\begin{proof}
The composition $f \after k$ is the zero map $\downset k \rightarrow Y$.
First, for $u\in \downset f^{*}(1)$,
$$(f \after k)_*(u) 
=
f_*(k_*(u)^\perp)
=
f_*(u)
= 
1,$$

\noindent because $u \leq f^*(1)$ in $X$ and so $1 \leq f_*(u)$ in
$Y$. And for $y\in Y$,
$$(f \after k)^*(y)
=
k^*(f^*(y)^\perp)
=
f^*(y) \conjun f^*(1)
=
f^*(y \disjun 1)
=
f^*(1)
=
k
=
1_{\downset k}.$$

\noindent because $f^{*}\colon Y\rightarrow X\op$ preserves joins as a
left adjoint.

Now suppose that $g \after k$ equals the zero morphism, for $g\colon Z
\to X$. Then $f_* \after \Perp \after g_*=1$ and $g^* \after \Perp
\after f^* = 1$. Hence for $z \in Z$ we have $1 \leq
f_*(g_*(z)^\perp)$, so $g_*(z)^\perp \leq f^*(1) = k$.  Define $h_*
\colon Z\op \to \downset k$ by $h_*(z) = g_*(z) \conjun k$, and define
$h^* \colon \downset k \to Z\op$ by $h^*(u) = g^*(u)$.  Then $h^*
\dashv h_*$ since for $u \leq k$ and $z \in Z$:
$$\begin{prooftree}
\begin{prooftree}
z \;\leq\; g^{*}(u) \rlap{$\;=h^{*}(u)$} \\
\Justifies
u \;\leq\; g_{*}(z)
\end{prooftree}
\Justifies
u \;\leq\; g_{*}(z) \conjun k \rlap{$\;=h_{*}(z)$}
\end{prooftree}$$

\noindent  whence $h$ is a well-defined morphism of $\Cat{OMLatGal}$.
It satisfies:
$$\begin{array}{rcl}
(k \after h)_*(z)
& = &
k_*(h_*(z)^{\perp_{\downset k}}) \\
& = &
k_*( (g_*(z) \conjun k )^{\perp_{\downset k}} ) \\
& = &
( (g_*(z) \conjun k )^{\perp_{\downset k}}  \conjun k )^\perp \\
& = &
( (g_*(z) \conjun k)^\perp \conjun k \conjun k)^\perp \\
& = &
(g_*(z) \conjun k ) \disjun k^\perp \\
& = &
g_*(z),
\end{array}$$

\noindent by orthomodularity since $k^{\perp} = f^*(1)^{\perp} \leq
g_*(z)$ because $g_*(z)^\perp \leq f^*(1) = k$ which follows from $1
\leq f_*(g_*(z)^\perp)$. Hence $h$ is a mediating morphism satisfying
$k \after h = g$.  It is the unique such morphism, since $k$ is a
(dagger) mono.  \QED

\auxproof{
In order to prove $(k\after h)^{*} = g^{*}$ we use
the following observation. Since $(f\after g)^{*} = 1$ we have
$g^{*}(f^{*}(y)^{\perp}) = 1$, for each $y\in Y$, and thus:
\[
\begin{array}{rcl}
g^{*}(x \disjun f^{*}(y)^{\perp})
& = &
g^{*}(x) \disjun_{Z\op} g^{*}(f^{*}(y)^{\perp}) \\
& &
   \qquad\mbox{since $g^{*}\colon X\rightarrow Z\op$ is a left adjoint} \\
& = &
g^{*}(x) \conjun_{Z} 1 \\
& = &
g^{*}(x).
\end{array}\eqno{(*)}
\]

\noindent We use this property $(*)$ twice below, as indicated:
$$\begin{array}{rcl}
(k \after h)^*(x) 
& = &
h^*(k^*(x)^{\perp_K}) \\
& = &
g^*( k^{*}(x)^{\perp} \conjun f^{*}(1) ) \\
& = &
g^*\big((x \disjun f^{*}(1)^{\perp}) \conjun f^{*}(1)\big) \\
& \smash{\stackrel{(*)}{=}} &
g^*\big(((x \disjun f^{*}(1)^{\perp}) \conjun f^{*}(1)) \disjun
             f^{*}(1)^{\perp}\big) \\
& = &
g^*(x \disjun f^{*}(1)^{\perp}) \\
& & \qquad
     \mbox{by orthomodularity, since } 
        x \disjun f^{*}(1)^{\perp} \geq f^{*}(1)^{\perp} \\
& \smash{\stackrel{(*)}{=}} &
g^{*}(x).
\end{array}$$
}
\end{proof}

For convenience we explicitly describe some of the basic structure
that results from dagger kernels, see~\cite{HeunenJ09a}, namely
cokernels and factorisations, given by dagger kernels and zero-epis.
We start with cokernels and zero-epis.

\begin{lemma}
\label{CokerLem}
The cokernel of a map $f\colon X\rightarrow Y$ in \Cat{OMLatGal} is:
$$\xymatrix@R-1pc{
\coker(f) = \Big(Y \ar@{-|>}[r]^-{c} & \;\downset f_{*}(1)\Big)
\qquad\mbox{with} &
{\left\{\begin{array}{rcl}
c_{*}(y) & = & y^{\perp} \conjun f_{*}(1) \\
c^{*}(v) & = & v^\perp \\
\end{array}\right.} 
}$$

\noindent Then:
$$\mbox{$f$ is zero-epi}
\;\;\smash{\stackrel{\textrm{def}}{\Longleftrightarrow}}\;\;
\coker(f) = 0
\;\Longleftrightarrow\;
f_{*}(1)=0.$$
\end{lemma}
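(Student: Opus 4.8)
The plan is to read the cokernel off directly from its definition $\coker(f) = \ker(f^{\dag})^{\dag}$, unwinding the dagger (twisting) together with the kernel description already obtained, and then to extract the two equivalences from the resulting explicit maps.

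First I would compute $\ker(f^{\dag})$. Since the dagger on $\Cat{OMLatGal}$ is twisting, $f^{\dag}\colon Y\to X$ has components $(f^{\dag})_{*} = f^{*}$ and $(f^{\dag})^{*} = f_{*}$; in particular $(f^{\dag})^{*}(1) = f_{*}(1)$. By Theorem~\ref{OMLatGalDagKerCatThm} the kernel of $f^{\dag}$ is therefore the dagger mono $\downset f_{*}(1)\rightarrowtail Y$ of Lemma~\ref{DownsetLem}, taken with $a = f_{*}(1)\in Y$, whose components are $a_{*}(u) = u^{\perp}$ and $a^{*}(y) = f_{*}(1)\conjun y^{\perp}$, all orthocomplements and meets being computed in the ambient lattice $Y$. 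Applying the dagger once more swaps these components, so $\coker(f) = \ker(f^{\dag})^{\dag}\colon Y\to \downset f_{*}(1)$ has $c_{*} = a^{*}$ and $c^{*} = a_{*}$, that is $c_{*}(y) = f_{*}(1)\conjun y^{\perp} = y^{\perp}\conjun f_{*}(1)$ and $c^{*}(v) = v^{\perp}$, exactly as claimed. The only delicate point here is bookkeeping of the orthocomplements: $\downset f_{*}(1)$ carries its own orthocomplement $\perp_{a}$, yet the displayed formulas for $c_{*},c^{*}$ involve the orthocomplement of $Y$, which is precisely what Lemma~\ref{DownsetLem} delivers.

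For the chain of equivalences, the first biconditional is the definition of zero-epi, so only $\coker(f) = 0 \Longleftrightarrow f_{*}(1) = 0$ needs an argument. For $(\Leftarrow)$: if $f_{*}(1) = 0$ then the codomain $\downset f_{*}(1) = \downset 0 = \{0\}$ is the one-element lattice $\nul$, i.e.\ the zero object, and every map into the zero object is the zero map, whence $c = 0$. For $(\Rightarrow)$: recall, as recorded just before Theorem~\ref{OMLatGalDagKerCatThm}, that the zero morphism sends every element to the top; so $\coker(f) = 0$ forces $c_{*}(1) = 1_{\downset f_{*}(1)} = f_{*}(1)$, whereas the formula gives $c_{*}(1) = 1^{\perp}\conjun f_{*}(1) = 0\conjun f_{*}(1) = 0$, and therefore $f_{*}(1) = 0$. (Lemma~\ref{MapEqLem} confirms it suffices to test a single component, and one could equally evaluate $c^{*}$ at the top element $f_{*}(1)$, getting $f_{*}(1)^{\perp} = 1_{Y}$.)

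I anticipate no essential obstacle: the computation is routine once the kernel of $f^{\dag}$ is identified. The two things to watch are the direction of the dagger twist—so that the components $c_{*},c^{*}$ end up in the stated roles—and keeping the orthocomplement of $\downset f_{*}(1)$ distinct from that of $Y$, which is also what makes the zero-object characterisation of $\downset f_{*}(1)$ match up cleanly with the condition $f_{*}(1) = 0$.
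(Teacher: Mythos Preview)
Your proposal is correct and follows essentially the same route as the paper: both compute $\coker(f)$ directly from the definition $\ker(f^{\dag})^{\dag}$, using $(f^{\dag})^{*}(1)=f_{*}(1)$ together with the explicit kernel of Theorem~\ref{OMLatGalDagKerCatThm} and the description of $\downset a\rightarrowtail X$ from Lemma~\ref{DownsetLem}, then twist. The paper's proof is in fact even terser---it records only the one-line identification of the cokernel and leaves the zero-epi equivalence implicit---so your explicit check of $f_{*}(1)=0 \Leftrightarrow \coker(f)=0$ is a welcome elaboration rather than a deviation.
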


\begin{proof}
Since:
$$\xymatrix{
\coker(f) =
\ker(f^{\dag})^{\dag} =
\big(\downset(f^{\dag})^{*}(1)\ar@{ |>->}[r] & Y\big)^{\dag} =
\big(Y \ar@{-|>}[r] & \downset f_{*}(1)\big).
}\eqno{\QEDbox}$$
\end{proof}

We recall from~\cite{HeunenJ09a} that each map $f$ in a dagger kernel
category has a zero-epi/kernel factorisation $f = i_{f} \after
e_{f}$. In combination with the factorisation of $f^{\dag}$ it yields
a factorisation $f = i_{f} \after m_{f} \after (i_{f^{\dag}})^{\dag}$ as in:
$$\xymatrix@R-1pc{
X\ar[rrr]^-{f}\ar@{-|>}[dr]_{(i_{f^\dag})^{\dag}} & & & Y \\
& \Im(f^{\dag})\ar@{ >->>}|{\circ}[r]_-{m_f} & \Im(f)\ar@{ |>->}[ur]_{i_f} &
}$$

\noindent where the map $m_f$ is both zero-epic and zero-monic, and
where $m_{f} \after (i_{f^{\dag}})^{\dag} = e_{f}$, the zero-epic part
of $f$.

\begin{lemma}
\label{ImFacLem}
For a map $f\colon X\rightarrow Y$ in \Cat{OMLatGal} one has:
$$\xymatrix@R-2pc@C-1pc{
\Big(\Im(f) = \downset (f_{*}(1)^{\perp}) \ar@{ |>->}[r]^-{i_f} & Y\Big)
\qquad\rlap{with} &
{\left\{\begin{array}{rcl}
(i_{f})_{*}(v) & = & v^{\perp} \\
(i_{f})^{*}(y) & = & y^{\perp} \conjun f_{*}(1)^{\perp} \\
\end{array}\right.} \\
\Big(X \ar@{->>}[r]^-{e_f}|-{\circ} & \downset f_{*}(1)^{\perp}\Big)
\quad\mbox{is} &
{\left\{\begin{array}{rcl}
(e_{f})_{*}(x) & = & f_{*}(x) \conjun f_{*}(1)^{\perp} \\
(e_{f})^{*}(v) & = & f^{*}(v)
\end{array}\right.} \\
\Big(\downset f^{*}(1)^{\perp} \ar@{ >->>}[r]^-{m_f}|-{\circ} & 
   \downset f_{*}(1)^{\perp}\Big)
\quad\mbox{is} &
{\left\{\begin{array}{rcl}
(m_{f})_{*}(x) & = & f_{*}(x) \conjun f_{*}(1)^{\perp} \\
(m_{f})^{*}(v) & = & f^{*}(v) \conjun f^{*}(1)^{\perp}
\end{array}\right.} \\
}$$
\end{lemma}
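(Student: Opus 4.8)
The plan is to obtain each of the three maps from the general dagger-kernel machinery recalled just above the statement, and then to read off the explicit Galois pairs using the composition rule $(g\after f)_{*} = g_{*}\after\Perp\after f_{*}$ and $(g\after f)^{*} = f^{*}\after\Perp\after g^{*}$ of \Cat{OMLatGal}. Throughout I will use that, by Lemma~\ref{MapEqLem}, it suffices to identify one of the two components of a morphism; in fact both components drop out of the same computation.

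First I would pin down the kernel part $i_{f}$. Since in any dagger kernel category the monic part of the factorisation is $i_{f} = \ker(\coker(f))$, I compute this via Lemma~\ref{CokerLem} and Theorem~\ref{OMLatGalDagKerCatThm}: the cokernel $c = \coker(f)\colon Y\to \downset f_{*}(1)$ has $c^{*}(v)=v^{\perp}$, and the top element of $\downset f_{*}(1)$ is $f_{*}(1)$ itself, so the kernel of $c$ is the downset of $c^{*}(f_{*}(1)) = f_{*}(1)^{\perp}$. This yields $\Im(f) = \downset f_{*}(1)^{\perp}$, and Lemma~\ref{DownsetLem} applied to $a = f_{*}(1)^{\perp}$ reads off $(i_{f})_{*}(v) = v^{\perp}$ and $(i_{f})^{*}(y) = y^{\perp}\conjun f_{*}(1)^{\perp}$, as claimed.

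For the zero-epi $e_{f}$ I would use that $i_{f}$ is a dagger mono, so $(i_{f})^{\dag}\after i_{f} = \idmap$ and hence $e_{f} = (i_{f})^{\dag}\after f$. Since $(i_{f})^{\dag} = ((i_{f})^{*},(i_{f})_{*})$ by twisting, a direct application of the composition rule gives $(e_{f})_{*}(x) = f_{*}(x)\conjun f_{*}(1)^{\perp}$ and $(e_{f})^{*}(v) = f^{*}(v)$. For the middle map $m_{f}$ I would exploit the identity $e_{f} = m_{f}\after (i_{f^{\dag}})^{\dag}$ stated before the lemma: right-composing with the dagger mono $i_{f^{\dag}}$ and using $(i_{f^{\dag}})^{\dag}\after i_{f^{\dag}} = \idmap$ gives $m_{f} = e_{f}\after i_{f^{\dag}}$. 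Reading off $i_{f^{\dag}}$ from the formula for $i_{(-)}$ just established, applied to $f^{\dag}$ (so that $(f^{\dag})_{*}(1) = f^{*}(1)$), one more composition computation delivers $(m_{f})_{*}(x) = f_{*}(x)\conjun f_{*}(1)^{\perp}$ and $(m_{f})^{*}(v) = f^{*}(v)\conjun f^{*}(1)^{\perp}$.

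I expect the only genuine subtlety to be the identification $i_{f} = \ker(\coker(f)) = \downset f_{*}(1)^{\perp}$, together with the bookkeeping of which orthocomplement—the global $\perp$ on $X$ or $Y$, versus the local $\perp_{a}$ on a downset—appears at each composition; once the maps are written as composites of already-known morphisms, everything else is mechanical. The one point where the orthomodular law is genuinely needed is the alternative, more hands-on verification that $i_{f}\after e_{f} = f$: there the $*$-component evaluates to $f_{*}(1)\disjun(f_{*}(1)^{\perp}\conjun f_{*}(x))$, which collapses to $f_{*}(x)$ precisely by orthomodularity, using $f_{*}(1)\leq f_{*}(x)$ (antitonicity of $f_{*}$). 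Framing $e_{f}$ and $m_{f}$ as the composites above sidesteps a second appeal to orthomodularity and keeps the verification purely computational.
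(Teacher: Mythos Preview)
Your argument is correct, and it takes a mildly different route from the paper's own proof. The paper writes down the three Galois pairs and then \emph{verifies} the factorisation equations directly: it checks $(i_{f}\after e_{f})_{*}=f_{*}$ via orthomodularity (using $f_{*}(1)\leq f_{*}(x)$), and checks $(m_{f}\after(i_{f^{\dag}})^{\dag})_{*}=(e_{f})_{*}$ via the auxiliary identity $f_{*}(x\disjun f^{*}(1))=f_{*}(x)$, which in turn rests on the unit $1\leq f_{*}(f^{*}(1))$ and a second application of orthomodularity. You instead \emph{derive} $e_{f}$ and $m_{f}$ as composites $e_{f}=(i_{f})^{\dag}\after f$ and $m_{f}=e_{f}\after i_{f^{\dag}}$, using that $i_{f}$ and $i_{f^{\dag}}$ are dagger monos, and then read off the components mechanically. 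This is legitimate because the abstract factorisation is already granted by the dagger-kernel machinery recalled just before the lemma; your approach trades two hands-on orthomodularity computations for an appeal to that abstract existence, and so is slightly slicker but less self-contained. The paper's direct verification has the minor advantage of exhibiting explicitly where orthomodularity and the Galois unit/counit enter, which is informative given the lemma's role later on.
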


\begin{proof}
This is just a matter of unravelling definitions. For instance, 
$$\xymatrix{
\Im(f) = \ker(\coker(f)) = 
\ker\big(Y \ar@{-|>}[r]^-{c} & \downset f_{*}(1)\big) =
\big(\downset f_{*}(1)^{\perp} \;\ar@{ |>->}[r] & Y\big).
}$$

\noindent since $c^{*}(1_{\downset f_{*}(1)}) = c^{*}(f_{*}(1)) =
f_{*}(1)^{\perp}$. We check that $i_{f} \after e_{f} = f$, as
required.
$$\begin{array}{rcl}
(i_{f} \after e_{f})_{*}(x)
& = &
(i_{f})_{*}((e_{f})_{*}(x)^{\perp_{f_{*}(1)^{\perp}}}) \\
& = &
\big((e_{f})_{*}(x)^{\perp}\conjun f_{*}(1)^{\perp}\big)^{\perp} \\
& = &
(f_{*}(x) \conjun f_{*}(1)^{\perp}) \disjun f_{*}(1) \\
& = &
f_{*}(x),
\end{array}$$

\noindent by orthomodularity, using that $f_{*}(1) \leq f_{*}(x)$, 
since $x\leq 1$. This map $e_f$ is indeed zero-epic by the previous
lemma, since:
$$(e_{f})_{*}(1)
=
f_{*}(1) \conjun f_{*}(1)^{\perp}
=
0.$$

Next we first observe:
$$f_{*}(x \disjun f^{*}(1)) =
f_{*}(x) \conjun f_{*}(f^{*}(1)) =
f_{*}(x) \conjun 1 =
f_{*}(x),\eqno{(*)}$$

\noindent since there is a ``unit'' $1 \leq f_{*}(f^{*}(1))$. We use
this twice, in the marked equations, in:
$$\begin{array}[b]{rcl}
(m_{f} \after (i_{f^\dag})^{\dag})_{*}(x)
& = &
\big((m_{f})_{*} \after (-)^{\perp_{f^{*}(1)^{\perp}}} \after 
   (i_{f^{\dag}})^{*}\big)(x) \\
& = &
(m_{f})_{*}\big(f^{*}(1)^{\perp} \conjun 
   ((f^{\dag})_{*}(1)^{\perp} \conjun x^{\perp})^{\perp}\big) \\
& = &
f_{*}\big(f^{*}(1)^{\perp} \conjun (f^{*}(1) \disjun x)\big)
   \conjun f_{*}(1)^{\perp} \\
& \smash{\stackrel{(*)}{=}} &
f_{*}\big(f^{*}(1) \disjun (f^{*}(1)^{\perp} \conjun (f^{*}(1) \disjun x))\big)
   \conjun f_{*}(1)^{\perp} \\
& = &
f_{*}(f^{*}(1) \disjun x) \conjun f_{*}(1)^{\perp} \\
& \smash{\stackrel{(*)}{=}} &
f_{*}(x) \conjun f_{*}(1)^{\perp} \\
& = &
(e_{f})_{*}(x).
\end{array}$$

\noindent The map $m_f$ is zero-epic since:
$$\begin{array}{rcl}
(m_{f})_{*}(1_{\downset f^{*}(1)^{\perp}})
\hspace*{\arraycolsep} = \hspace*{\arraycolsep} 
(m_{f})_{*}(f^{*}(1)^{\perp})
& = &
f_{*}(f^{*}(1)^{\perp}) \conjun f_{*}(1)^{\perp} \\
& \smash{\stackrel{(*)}{=}} &
f_{*}(f^{*}(1) \disjun f^{*}(1)^{\perp}) \conjun f_{*}(1)^{\perp} \\
& = &
f_{*}(1) \conjun f_{*}(1)^{\perp} \\
& = &
0.
\end{array}$$

\noindent Similarly one shows that $m_f$ is zero-monic. \QED

\auxproof{
Similarly, one has
$$f^{*}(y \disjun f_{*}(1)) =
f^{*}(y) \conjun f^{*}(f_{*}(1)) =
f^{*}(y) \conjun 1 =
f^{*}(y),$$

\noindent since there is a ``counit'' $1 \leq f^{*}(f_{*}(1))$,
and thus:
$$\begin{array}[b]{rcl}
(i_{f} \after e_{f})^{*}(y)
& = &
f^{*}\big((i_{f})^{*}(y)^{\perp} \conjun f_{*}(1)^{\perp}\big) \\
& = &
f^{*}\big((y \disjun f_{*}(1)) \conjun f_{*}(1)^{\perp}\big) \\
& = &
f^{*}\big(((y \disjun f_{*}(1)) \conjun f_{*}(1)^{\perp}) \disjun f_{*}(1)\big) \\
& = &
f^{*}(y \disjun f_{*}(1)) \qquad \mbox{by orthomodularity} \\
& = &
f^{*}(y).
\end{array}$$

\noindent Via this auxiliary result one obtains that $m_f$ is zero-monic:
$$\begin{array}{rcl}
(m_{f})^{*}(1)
& = &
f^{*}(f_{*}(1)^{\perp}) \conjun f^{*}(1)^{\perp} \\
& = &
f^{*}(f_{*}(1) \disjun f_{*}(1)^{\perp}) \conjun f^{*}(1)^{\perp} \\
& = &
f^{*}(1) \conjun f^{*}(1)^{\perp} \\
& = &
0.
\end{array}$$
}
\end{proof}

For the record, inverse and direct images are described explicitly.

\begin{lemma}
\label{InvDirImLem}
For a map $f\colon X\rightarrow Y$ in \Cat{OMLatGal} the associated
inverse and direct images are:
$$\xymatrix@R-2pc{
\KSub(Y)\ar[r]^-{f^{-1}} & \KSub(X) 
&
  \KSub(X)\ar[r]^-{\exists_f} & \KSub(Y) \\
\big(\downset b\to Y)\ar@{|->}[r] &
   \big(\downset f^{*}(b^{\perp}) \to X\big)
&
\big(\downset a\to X)\ar@{|->}[r] &
   \big(\downset(f_{*}(a)^{\perp}) \to Y\big)
}$$
\end{lemma}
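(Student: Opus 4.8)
The plan is to treat both halves by unravelling the explicit descriptions of kernels, cokernels, and image factorisations established above; no new idea is needed beyond careful bookkeeping of the three orthocomplements involved (the one in $X$, the one in $Y$, and the local complement $\perp_a$ on a downset), together with the twisting complement inserted at each composition.

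For the inverse image I would invoke the formula $f^{-1}(n) = \ker(\coker(n) \after f)$ recalled in Section~\ref{DagKerSec}. Writing the kernel subobject $n$ as $b\colon \downset b \rightarrowtail Y$ with $b_{*}(u) = u^{\perp}$ and $b^{*}(y) = b \conjun y^{\perp}$ as in Lemma~\ref{DownsetLem}, Lemma~\ref{CokerLem} gives $\coker(b)\colon Y \to \downset b^{\perp}$ (since the top element of $\downset b$ is $b$ and $b_{*}(b) = b^{\perp}$), with $c^{*}(v) = v^{\perp}$. Composing with $f$ by the twisting rule $(c \after f)^{*}(v) = f^{*}(c^{*}(v)^{\perp})$ and then applying Theorem~\ref{OMLatGalDagKerCatThm}, the kernel is $\downset k$ with $k = (c \after f)^{*}(1_{\downset b^{\perp}}) = (c \after f)^{*}(b^{\perp}) = f^{*}\big((b^{\perp})^{\perp\perp}\big) = f^{*}(b^{\perp})$, which is exactly the claimed $\downset f^{*}(b^{\perp})$.

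For the direct image I would use that $\exists_{f}$ is image factorisation, so $\exists_{f}(\downset a)$ is the image $\Im(f \after a)$ of the composite $\downset a \stackrel{a}{\rightarrowtail} X \stackrel{f}{\to} Y$. By Lemma~\ref{ImFacLem} this image is $\downset\big((f \after a)_{*}(1)^{\perp}\big)$. The top element of $\downset a$ is $a$, and the twisting rule for composition gives $(f \after a)_{*}(a) = f_{*}(a_{*}(a)^{\perp}) = f_{*}(a^{\perp\perp}) = f_{*}(a)$, whence the image is $\downset(f_{*}(a)^{\perp})$, as claimed. Alternatively one could deduce this half from the first through the inter-expressibility $\exists_{f}(m) = \big((f^{\dag})^{-1}(m^{\perp})\big)^{\perp}$ of Section~\ref{DagKerSec}, but the direct route via Lemma~\ref{ImFacLem} is shorter.

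The only point demanding care---and hence the main obstacle---is consistently distinguishing the ambient orthocomplements from the downset-local complement $\perp_{a}$, and correctly inserting the twisting complement at each composition; once the top elements of the relevant downsets are identified (namely $a$ for $\downset a$ and $b^{\perp}$ for $\downset b^{\perp}$), both computations collapse to the double-negation law $(-)^{\perp\perp} = \idmap$.
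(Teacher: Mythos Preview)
Your proposal is correct and follows essentially the same route as the paper: for $f^{-1}$ you invoke the pullback formula $\ker(\coker(n)\after f)$, identify $\coker(b)\colon Y\to\downset b^{\perp}$ via Lemma~\ref{CokerLem}, and read off the kernel from Theorem~\ref{OMLatGalDagKerCatThm}; for $\exists_{f}$ you use the image description of Lemma~\ref{ImFacLem} applied to $f\after a$ and evaluate $(f\after a)_{*}$ at the top element $a$ of $\downset a$. This is exactly the paper's argument, including the same intermediate identifications $c=b^{\perp}$ and $(f\after a)_{*}(1_{\downset a})=f_{*}(a)$.
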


\begin{proof}
For $f\colon X\rightarrow Y$ and $b\in Y$, we have, using the
formulation for pullback of kernels from Section~\ref{DagKerSec}
(or~\cite[Lemma~2.4]{HeunenJ09a}) and Lemma~\ref{ImFacLem} above,
$$\begin{array}{rcll}
\lefteqn{f^{-1}(\downset b \rightarrow Y)} \\
& = &
\ker(\coker(\downset b\rightarrow Y) \after f) \\
& = &
\ker((Y\rightarrow \downset c) \after f), 
   & \mbox{for }
   c \begin{array}[t]{ll}
      = & b_{*}(1_{\downset b}) =  (1_{\downset b})^{\perp} = b^{\perp} 
     \end{array} \\
& = &
\downset a \rightarrow X, 
& \mbox{for } 
   a \begin{array}[t]{ll}
          = & (c^{\dag} \after f)^{*}(1_{\downset b})
            = f^{*}(c_{*}(1_{\downset c})^{\perp}) \\
          = & f^{*}(c^{\perp\perp}) = f^{*}(c) = f^{*}(b^{\perp})
        \end{array} \\
& = &
\downset f^{*}(b^{\perp}) \rightarrow X.
\end{array}$$

\noindent For $\exists_f$ we also use Lemma~\ref{ImFacLem}  in:
$$\begin{array}[b]{rcll}
\lefteqn{\exists_{f}(\downset a \rightarrow X)} \\
& = &
\Im(f \after (\downset a\rightarrow X)) \\
& = &
\downset b\rightarrow Y, & \mbox{where }
   b \begin{array}[t]{ll}
     = & (f\after a)_{*}(1_{\downset a})^{\perp} 
      = f_{*}(a_{*}(1_{\downset a})^{\perp})^{\perp} \\
      = & f_{*}(a^{\perp\perp})^{\perp}
      \end{array} \\
& = &
\downset(f_{*}(a)^{\perp}) \rightarrow Y.
\end{array}\eqno{\QEDbox}$$
\end{proof}

In the category \Cat{OMLatGal}, like in any dagger kernel category,
the kernel posets $\KSub(X)$ are orthomodular lattices. They turn out
to be isomorphic to the underlying object $X\in\Cat{OMLatGal}$.

\begin{proposition}
\label{DownsetIsKerProp}
Each dagger mono $a\colon\! \downset a\rightarrowtail X$ from
Lemma~\ref{DownsetLem}, for $a\in X$, is actually a dagger
kernel. This yields an isomorphism of orthomodular lattices
$$\xymatrix@C-.5pc{
X \ar[r]^-{\cong} & 
\KSub(X), \quad\mbox{namely}\quad a\ar@{|->}[r] & 
\Big(\downset a\ar[r]^-{a} & X\Big).
}$$

\noindent It is natural in the sense that for $f\colon X\rightarrow Y$
in \Cat{OMLatGal} the following squares commute by Lemma~\ref{InvDirImLem}.
$$\xymatrix@R-1pc{
X\ar[d]_{\cong}\ar[rr]^-{\bot \,\after\, f_{*}} & & 
   Y\ar[d]^{\cong}\ar[rr]^-{f^{*}\after \bot} & & X\ar[d]^{\cong} \\
\KSub(X)\ar[rr]_-{\exists_f} & & \KSub(Y)\ar[rr]_-{f^{-1}} & & \KSub(X)
}$$
\end{proposition}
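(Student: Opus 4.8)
The plan is to run all three assertions off Theorem~\ref{OMLatGalDagKerCatThm}, which identifies every kernel in \Cat{OMLatGal} as having the shape $\downset(g^{*}(1))\rightarrowtail X$ for some $g$ out of $X$. For the first assertion, that each dagger mono $a\colon\!\downset a\rightarrowtail X$ of Lemma~\ref{DownsetLem} is a dagger kernel, I would exhibit it as $\ker(\coker(a))$. Applying Lemma~\ref{CokerLem} to the map $a$—and noting $a_{*}(1_{\downset a}) = a_{*}(a) = a^{\perp}$—gives the cokernel $c\colon X\rightarrow\downset a^{\perp}$ with $c^{*}(v)=v^{\perp}$. Feeding $c$ into Theorem~\ref{OMLatGalDagKerCatThm} then yields $\ker(c)=\downset(c^{*}(1_{\downset a^{\perp}}))=\downset((a^{\perp})^{\perp})=\downset a$, which is the dagger mono $a$ itself; hence $a$ is a dagger kernel.

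For the isomorphism $\Phi\colon X\rightarrow\KSub(X)$, $a\mapsto[\downset a\rightarrowtail X]$, I would verify surjectivity, compatibility with orthocomplement, and that $\Phi$ is an order embedding. Surjectivity is immediate from Theorem~\ref{OMLatGalDagKerCatThm}: an arbitrary kernel is $\ker(f)=\downset(f^{*}(1))\rightarrowtail X$, that is $\Phi(f^{*}(1))$. Compatibility with $\perp$ is a one-line computation: the orthocomplement in $\KSub(X)$ sends $k$ to $\ker(k^{\dag})$, and since $a^{\dag}$ has lower-star component $a_{*}$ by twisting, Theorem~\ref{OMLatGalDagKerCatThm} gives $\Phi(a)^{\perp}=\ker(a^{\dag})=\downset(a_{*}(1_{\downset a}))=\downset(a^{\perp})=\Phi(a^{\perp})$. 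The substantive point is that $\Phi$ is an order embedding, $a\leq a'$ iff $\downset a$ factors through $\downset a'$ as subobjects of $X$; this simultaneously supplies injectivity, and together with surjectivity upgrades $\Phi$ to an order isomorphism of lattices, so that $\conjun$ and $\disjun$ are preserved automatically.

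I expect the order embedding to be the main obstacle, as it is the only step that reasons about the Galois-connection morphisms rather than reading off formulas. For the forward direction I would factor $a$ through the kernel $a'$ by checking $\coker(a')\after a=0$: expanding $(\coker(a')\after a)_{*}(u)=u^{\perp}\conjun a'^{\perp}$ and using $u\leq a\leq a'$ collapses this to the constant $a'^{\perp}=1_{\downset a'^{\perp}}$, so the composite is zero and the universal property of $a'=\ker(\coker(a'))$ produces the mediating map. For the reflecting direction, given $\iota\colon\downset a\rightarrow\downset a'$ with $a'\after\iota=a$, I would expand the induced identity $a_{*}=a'_{*}\after\bot\after\iota_{*}$ and evaluate it at the top element $1_{\downset a}=a$: writing $w=\iota_{*}(a)\leq a'$ this reads $a^{\perp}=(a'\conjun w^{\perp})^{\perp}=a'^{\perp}\disjun w\geq a'^{\perp}$, whence $a\leq a'$. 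Both computations are short but demand care with the composition and orthocomplement conventions of \Cat{OMLatGal}.

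Finally, naturality is essentially a restatement of Lemma~\ref{InvDirImLem}. The left square commutes because $\exists_{f}(\Phi(a))=\exists_{f}(\downset a)=\downset(f_{*}(a)^{\perp})=\Phi((\bot\after f_{*})(a))$, and the right square because $f^{-1}(\Phi(b))=f^{-1}(\downset b)=\downset(f^{*}(b^{\perp}))=\Phi((f^{*}\after\bot)(b))$; in each case the direct- or inverse-image formula of Lemma~\ref{InvDirImLem} matches the top edge of the square on the nose, so no further argument is needed.
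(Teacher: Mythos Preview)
Your proposal is correct and follows essentially the same route as the paper: compute $\ker(\coker(a))$ via Lemma~\ref{CokerLem} and Theorem~\ref{OMLatGalDagKerCatThm}, obtain surjectivity from Theorem~\ref{OMLatGalDagKerCatThm}, establish the order embedding, check $\perp$-compatibility, and read off naturality from Lemma~\ref{InvDirImLem}. Two small deviations are worth noting. For the forward direction of the order embedding the paper writes down the mediating map $\varphi\colon\downset a\rightarrow\downset b$ explicitly (with $\varphi_{*}(x)=b\conjun x^{\perp}$) and verifies $b\after\varphi=a$, whereas you obtain it abstractly from the kernel universal property after showing $\coker(a')\after a=0$; your computation is right and arguably cleaner. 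Second, the paper separately checks that meets are preserved, while you infer this from having an order isomorphism---a legitimate shortcut, since meets and joins are order-theoretic.
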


\begin{proof}
  We first check that $a\colon\! \downset a \rightarrow X$ is indeed a
  kernel, namely of its cokernel $\coker(a)\colon X\rightarrow
  \downset a_{*}(1)$, see Lemma~\ref{CokerLem}, where $a_{*}(1) =
  a_{*}(1_{\downset a}) = a_{*}(a) = a^{\perp}$. Thus,
  $\ker(\coker(a)) = \coker(a)^{*}(1) = \coker(a)^{*}(1_{\downset
    a^{\perp}}) = \coker(a)^{*}(a^{\perp}) = a^{\perp\perp} = a$.

  Theorem~\ref{OMLatGalDagKerCatThm} says that the mapping $X
  \rightarrow \KSub(X)$ is surjective. Here we shall show that it is
  an injective homomorphism of orthomodular lattices reflecting the
  order, so that it is an isomorphism in the category \Cat{OMLat}.

Assume that $a\leq b$ in $X$. We can define $\varphi \colon \downset a
\rightarrow \downset b$ by $\varphi_{*}(x) = x^{\perp_{b}} = b \conjun
x^{\perp}$ and $\varphi^{*}(y) = a \conjun y^{\perp}$, for
$x\in\downset a$ and $y\in \downset b$. Then, clearly, $y\leq
\varphi_{*}(x)$ iff $x\leq \varphi^{*}(y)$, so that $\varphi$ is a
morphism in \Cat{OMLatGal}. In order to show $a \leq b$ in $\KSub(X)$
we prove $b\after \varphi = a$. First, for $x\in \downset a$,
$$\begin{array}{rcl}
(b \after \varphi)_{*}(x)
& = &
b_{*}\big(\varphi_{*}(x)^{\perp_{b}}\big) \\
& = &
b_{*}\big(x^{\perp_{b}\perp_{b}}\big) \\
& = &
b_{*}(x) \qquad\mbox{because $x\in\downset a\subseteq \downset b$} \\
& = &
x^{\perp} \\
& = &
a_{*}(x).
\end{array}$$

\auxproof{
Similarly, for an arbitrary $y\in X$,
$$\begin{array}{rcl}
(b \after \varphi)^{*}(y)
& = &
\varphi^{*}\big(b^{*}(y)^{\perp_{b}}\big) \\
& = &
\varphi^{*}\big(b \conjun (b \conjun y^{\perp})^{\perp}\big) \\
& = &
a \conjun \big(b \conjun (b \conjun y^{\perp})^{\perp}\big)^{\perp} \\
& = &
a \conjun (b^{\perp} \disjun (b \conjun y^{\perp})) \\
& \stackrel{*}{=} &
a \conjun y^{\perp} \\
& = &
a^{*}(y).
\end{array}$$

\noindent The marked equation holds because for arbitrary $z$,
$$\begin{array}{rcll}
a \conjun z
& = &
a \conjun (b\conjun z) & \mbox{because $a \leq b$} \\
& = &
a \conjun (b \conjun (b^{\perp} \disjun (b\conjun z)))
   \quad & \mbox{by orthomodularity, using $b\conjun z \leq b$} \\
& = &
a \conjun (b^{\perp} \disjun (b\conjun z)), 
   & \mbox{again because $a \leq b$.} \\
\end{array}$$
}

\noindent The map $X \rightarrow \KSub(X)$ does not only preserve the
order, but also reflects it: if we have an arbitrary map $\psi\colon
\downset a \rightarrow \downset b$ in \Cat{OMLatGal} with $b\after
\psi = a$, then:
$$\begin{array}{rcl}
a 
\hspace*{\arraycolsep} = \hspace*{\arraycolsep} 
a^{\perp\perp}
\hspace*{\arraycolsep} = \hspace*{\arraycolsep}
a_{*}(a)^{\perp}
& = &
(b \after \psi)_{*}(a)^{\perp} \\
& = &
b_{*}(\psi_{*}(a)^{\perp_{b}})^{\perp} \\
& = &
\psi_{*}(a)^{\perp_{b}\perp\perp} \\
& = &
\psi_{*}(a)^{\perp_{b}} \\
& = &
b \conjun \psi_{*}(a)^{\perp}
\hspace*{\arraycolsep} \leq \hspace*{\arraycolsep}
b.
\end{array}$$

\noindent This $X\rightarrow \KSub(X)$ map also preserves $\perp$,
since
$$\xymatrix{
\big(\downset a\ar@{ |>->}[r]^-{a} & X\big)^{\perp}
=
\ker(a^{\dag})
=
\big(\downset b\ar@{ |>->}[r]^-{b} & X\big)
}$$

\noindent where, according to Theorem~\ref{OMLatGalDagKerCatThm},
$b = (a^{\dag})^{*}(1_{\downset a}) = a_{*}(a) = a^{\perp}$.

It remains to show that the mapping $X\rightarrow \KSub(X)$ preserves
finite conjunctions. It is almost immediate that it sends the top
element $1\in X$ to the identity map (top) in $\KSub(X)$. It also
preserves finite conjunctions, since the intersection of the kernels
$\downset a\rightarrow X$ and $\downset b\rightarrow X$ is given by
$\downset(a\conjun b)\rightarrow X$. Since $a\conjun b \leq a, b$ there
are appropriate maps $\downset (a\conjun b) \rightarrow \downset a$
and $\downset (a\conjun b) \rightarrow \downset b$. Suppose that we
have maps $k \rightarrow \downset a$ and $k\rightarrow \downset b$,
where $k \colon \downset f^{*}(1)\rightarrow X$ is a kernel of
$f\colon X\rightarrow Y$. Since, as we have seen, the order is
reflected, we get $f^{*}(1) \leq a, b$, and thus $f^{*}(1) \leq
a\conjun b$, yielding the required map $\downset f^{*}(1) 
\rightarrow \downset (a\conjun b)$. \QED
\end{proof}

The adjunction $\exists_{f} \dashv f^{-1}$ that exists in arbitrary
dagger kernel categories (see Section~\ref{DagKerSec}
or~\cite[Proposition~4.3]{HeunenJ09a}) boils down in our example
\Cat{OMLatGal} to the adjunction between $f^{*} \dashv f_{*}$ in the
definition of morphisms in \Cat{OMLatGal}, since:
$$\begin{array}{rcl}
\exists_{f}(\downset a\rightarrow X) \leq (\downset b\rightarrow Y)
& \Longleftrightarrow &
(\downset f_{*}(a)^{\perp} \rightarrow Y) \leq (\downset b\rightarrow Y) \\
& \Longleftrightarrow &
f_{*}(a)^{\perp} \leq b \\
& \Longleftrightarrow &
b^{\perp} \leq f_{*}(a) \\
& \Longleftrightarrow &
a \leq f^{*}(b^{\perp}) \\
& \Longleftrightarrow &
(\downset a \rightarrow X) \leq (\downset f^{*}(b^{\perp}) \rightarrow X) \\
& \Longleftrightarrow &
(\downset a \rightarrow X) \leq f^{-1}(\downset b \rightarrow X).
\end{array}$$

\noindent Moreover, the Sasaki hook $\sasaki$ and and-then operators
$\andthen$ defined categorically
in~\cite[Proposition~6.1]{HeunenJ09a}, see Section~\ref{DagKerSec},
amount in \Cat{OMLatGal} to their usual definitions in the theory of
orthomodular lattices, see \textit{e.g.}~\cite{Finch70,Kalmbach83}. This will
be illustrated next.  We use the ``effect'' $\effect{\downset
a\rightarrow X} = a \after a^{\dag} \colon X\rightarrow X$ associated
with a kernel in:
$$(\downset a\rightarrow X) \sasaki (\downset b\rightarrow X)
\;\smash{\stackrel{\textrm{def}}{=}}\;
\effect{\downset a\rightarrow X}^{-1}(\downset b\rightarrow X)
=
(\downset c\rightarrow X),$$

\noindent where, according to the description of inverse image 
$(-)^{-1}$ in the previous lemma,
$$c 
=
(a \after a^{\dag})^{*}(b^{\perp})
=
a_{*}\big(a^{*}(b^{\perp})^{\perp_a}\big)
=
\big(a \conjun (a \conjun b)^{\perp}\big)^{\perp}
=
a^{\perp} \disjun (a\conjun b)
=
a \sasaki b.$$

\noindent Similarly for and-then $\andthen$:
$$(\downset a\rightarrow X) \andthen (\downset b\rightarrow X)
\;\smash{\stackrel{\textrm{def}}{=}}\;
\exists_{\effect{\downset b\rightarrow X}}(\downset a\rightarrow X) 
=
(\downset c\rightarrow X),$$

\noindent where the description of direct image from
Lemma~\ref{InvDirImLem} yields:
$$c
=
(b \after b^{\dag})_{*}(a)^{\perp}
=
b_{*}\big(b^{*}(a)^{\perp_b}\big)^{\perp}
=
\big(b \conjun (b \conjun a^{\perp})^{\perp}\big)^{\perp\perp}
=
b \conjun (b^{\perp} \disjun a)
=
a \andthen b.$$

\noindent These $\andthen$ and $\sasaki$ are, by construction, related
via an adjunction (see also~\cite{Finch70,CoeckeS04}).

Also one can define a weakest precondition modality $[f]$ from dynamic
logic in this setting: for $f\colon X\rightarrow Y$ and $y\in Y$, put:
$$\begin{array}{rcl}
[f](y) 
& \smash{\stackrel{\textrm{def}}{=}} &
f^{*}(y^{\perp}).
\end{array}$$

\noindent for ``$y$ holds after $f$''. This operation $[f](-)$
preserves conjunctions, as usual. An element $a\in X$ yields a test
operation $a? = \effect{a} = a \after a^{\dag}$. Then one can recover the
Sasaki hook $a \sasaki b$ via this modality as $[a?]b$, and hence
complement $a^\perp$ as $[a?]0$, see also
\textit{e.g.}~\cite{BaltagS06}.

There is another isomorphism of interest in this setting.

\begin{lemma}
\label{PointLem}
Let $2 = \{0,1\}$ be the 2-element Boolean algebra, considered as an
orthomodular lattice $2\in\Cat{OMLatGal}$. For each orthomodular lattice
$X$, there is an isomorphism (of sets): 
$$\xymatrix{
X \ar[r]^-{\cong} & \Cat{OMLatGal}(2, X)
}$$

\noindent which maps $a\in X$ to $\overline{a}\colon 2\rightarrow X$
given by:
$$\begin{array}{rclcrcl}
\overline{a}_{*}(w)
& = &
\left\{\begin{array}{ll}
1 & \mbox{if }w=0 \\
a^{\perp} & \mbox{if }w=1
\end{array}\right.
& \qquad &
\overline{a}^{*}(x)
& = &
\left\{\begin{array}{ll}
1 & \mbox{if }x\leq a^{\perp} \\
0 & \mbox{otherwise.}
\end{array}\right.
\end{array}$$

\noindent This isomorphism is natural: for $f\colon X\rightarrow Y$ one has:
$$\xymatrix@R-1pc{
X\ar[d]_{\cong}\ar[rr]^-{\bot\,\after\,f_{*}} & & Y\ar[d]^{\cong} \\
\Cat{OMLatGal}(2,X)\ar[rr]^-{f\after -} & & \Cat{OMLatGal}(2, Y)
}$$
\end{lemma}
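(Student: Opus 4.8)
The plan is to produce the inverse of $a\mapsto\overline{a}$ explicitly and then reduce every required equality of morphisms to an equality of lower components via Lemma~\ref{MapEqLem}. First I would verify that $\overline{a}$ is genuinely a morphism $2\rightarrow X$ in \Cat{OMLatGal}, i.e.\ that $\overline{a}^{*}\dashv\overline{a}_{*}$. Since $2=\{0,1\}$ this is just the biconditional $w\leq\overline{a}^{*}(x)\Leftrightarrow x\leq\overline{a}_{*}(w)$ checked in the two cases $w=0$ and $w=1$: for $w=0$ both sides hold trivially, as $0$ is least in $2$ and $\overline{a}_{*}(0)=1$ is the top of $X$; for $w=1$ both sides reduce to $x\leq a^{\perp}$ directly from the defining clauses of $\overline{a}$.

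For the bijection I propose the inverse $g\mapsto g_{*}(1)^{\perp}$. The one point with any content is the observation that \emph{every} morphism $g\colon 2\rightarrow X$ satisfies $g_{*}(0)=1$: since $0$ is least in $2$ we have $0\leq g^{*}(x)$ for all $x$, so the Galois connection gives $x\leq g_{*}(0)$ for every $x\in X$, forcing $g_{*}(0)=1$. Thus a map out of $2$ is pinned down by the single value $g_{*}(1)$. Putting $a=g_{*}(1)^{\perp}$ we get $\overline{a}_{*}(0)=1=g_{*}(0)$ and $\overline{a}_{*}(1)=a^{\perp}=g_{*}(1)$, so $g_{*}=\overline{a}_{*}$ and hence $g=\overline{a}$ by Lemma~\ref{MapEqLem}; this is surjectivity. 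Injectivity is immediate, since $\overline{a}_{*}(1)=a^{\perp}$ recovers $a$ through $a=a^{\perp\perp}$.

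For naturality I would once more argue on lower components and apply Lemma~\ref{MapEqLem}, reducing the claim to $f\after\overline{a}=\overline{f_{*}(a)^{\perp}}$ for each $a\in X$. With the composition rule $(f\after\overline{a})_{*}=f_{*}\after(-)^{\perp}\after\overline{a}_{*}$ I would evaluate both sides at the two points of $2$. At $0$ one gets $f_{*}(1^{\perp})=f_{*}(0)=1$, where $f_{*}(0)=1$ holds by the displayed observation applied to $f$ in place of $g$, matching $(\overline{f_{*}(a)^{\perp}})_{*}(0)=1$; at $1$ one gets $f_{*}(a^{\perp\perp})=f_{*}(a)$, matching $(\overline{f_{*}(a)^{\perp}})_{*}(1)=(f_{*}(a)^{\perp})^{\perp}=f_{*}(a)$.

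Beyond the single fact $g_{*}(0)=1$, everything is routine orthocomplement bookkeeping, so I expect no real obstacle. The only thing demanding care is the order reversal carried by the lower components (each lives on the opposite lattice), which is what makes the free value of a map out of $2$ sit at $1$ while $g_{*}(0)$ is forced to the top; keeping that straight is the whole game.
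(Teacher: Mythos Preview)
Your proposal is correct and follows essentially the same route as the paper: the key observation is that any $g\colon 2\to X$ has $g_{*}(0)=1$ (the paper phrases this as ``$g_{*}$ is a right adjoint''), so such a map is determined by $g_{*}(1)$, and naturality is checked on lower components via Lemma~\ref{MapEqLem}. The only difference is that the paper verifies naturality just at $w=1$, since both $f\after\overline{a}$ and $\overline{f_{*}(a)^{\perp}}$ are morphisms $2\to Y$ and therefore automatically agree at $w=0$; your explicit check at $w=0$ (via $f_{*}(0)=1$, which indeed holds for arbitrary $f$ by the same adjunction argument) is redundant but not wrong.
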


\begin{proof}
The thing to note is that for a map $g\colon 2\rightarrow X$ in
\Cat{OMLatGal} we have $g_{*}(0) = 1$ because $g_{*}\colon 2\op
\rightarrow X$ is a right adjoint. Hence we can only choose $g_{*}(1)
\in X$.  Once this is chosen, the left adjoint $g^{*}\colon
X\rightarrow 2\op$ is completely determined, namely as $1 \leq
g^{*}(x)$ iff $x\leq g_{*}(1)$.

As to naturality, it suffices to show:
$$\begin{array}[b]{rcl}
(f \after \overline{a})_{*}(1)
& = &
f_{*}(\overline{a}_{*}(1)^{\perp}) \\
& = &
f_{*}(a^{\perp\perp}) \\
& = &
f_{*}(a)^{\perp\perp} \\
& = &
\overline{f_{*}(a)^{\perp}}_{*}(1) \\
& = &
\overline{(\bot\,\after\,f_{*})(a)}_{*}(1).
\end{array}\eqno{\QEDbox}$$
\end{proof}

By combining the previous two results we obtain a way to classify
(kernel) subobjects, like in a topos~\cite{MacLaneM92}, but with
naturality working in the opposite direction. In~\cite{HeunenJ09a} a
similar structure was found in the category \Rel of sets and
relations, and also in the dagger kernel category associated with a
Boolean algebra.

\begin{corollary}
\label{OpClassCor}
The 2-element lattice $2\in\Cat{OMLatGal}$ is an ``opclassifier'':
there is a ``characteristic'' isomorphism:
$$\xymatrix{
\KSub(X) \ar[r]^-{\charac}_-{\cong} & \Cat{OMLatGal}(2, X).
}$$

\noindent which is natural: $\charac \after \exists_{f} = f \after
\charac$. \QED
\end{corollary}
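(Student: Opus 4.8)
The plan is to obtain $\charac$ simply by composing the two isomorphisms already established, and to derive its naturality by pasting the corresponding naturality squares. Write $\theta_X \colon X \xrightarrow{\cong} \KSub(X)$ for the isomorphism of Proposition~\ref{DownsetIsKerProp} and $\eta_X \colon X \xrightarrow{\cong} \Cat{OMLatGal}(2,X)$ for the one of Lemma~\ref{PointLem}. Since both are isomorphisms of orthomodular lattices, I define the characteristic map as the composite $\charac_X = \eta_X \after \theta_X^{-1} \colon \KSub(X) \to \Cat{OMLatGal}(2,X)$, which is again an isomorphism, being a composite of isomorphisms. Thus the only content is the naturality statement; there is nothing further to check for bijectivity.

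For naturality $\charac \after \exists_f = f \after \charac$ (with $f \after -$ denoting post-composition $\Cat{OMLatGal}(2,f)$), the key observation is that the two source squares share the same top edge $\bot \after f_{*}$. Proposition~\ref{DownsetIsKerProp} supplies $\exists_f \after \theta_X = \theta_Y \after (\bot \after f_{*})$, while Lemma~\ref{PointLem} supplies $(f \after -) \after \eta_X = \eta_Y \after (\bot \after f_{*})$. Rewriting the first as $\theta_Y^{-1} \after \exists_f = (\bot \after f_{*}) \after \theta_X^{-1}$ and substituting, I would compute
\[
\charac_Y \after \exists_f = \eta_Y \after \theta_Y^{-1} \after \exists_f = \eta_Y \after (\bot \after f_{*}) \after \theta_X^{-1} = (f \after -) \after \eta_X \after \theta_X^{-1} = (f \after -) \after \charac_X,
\]
which is exactly the asserted equality.

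Since the whole argument is formal diagram-pasting, there is no serious obstacle. The only point demanding care is the bookkeeping of arrow directions: one must invert $\theta$ but not $\eta$, and it is essential that both source squares are phrased against the \emph{covariant} edge $\bot \after f_{*}$ together with the direct-image functor $\exists_f$, rather than against the substitution functor $f^{-1}$. This is precisely the sense, noted before the statement, in which the naturality here runs in the direction opposite to that of the topos-theoretic subobject classifier, where pullback $f^{-1}$ would appear instead.
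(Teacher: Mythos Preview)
Your proof is correct and is exactly the argument the paper intends: the corollary is stated with a \QED{} in place and is prefaced by ``By combining the previous two results,'' i.e.\ Proposition~\ref{DownsetIsKerProp} and Lemma~\ref{PointLem}, which you have faithfully pasted together. One tiny slip: Lemma~\ref{PointLem} only gives $\eta_X$ as an isomorphism of \emph{sets}, not of orthomodular lattices, but this is irrelevant since the composite $\charac_X$ is still the required bijection and your naturality computation goes through unchanged.
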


We conclude our investigation of the category \Cat{OMLatGal} with
the following observation.

\begin{proposition}
\label{BiprodProp}
The category $\Cat{OMLatGal}$ has (finite) dagger biproducts $\oplus$.
Explicitly, $X_1 \oplus X_2$ is the Cartesian product of (underlying
sets of) orthomodular lattices, with coprojection $\kappa_1\colon X_1
\to X_1 \oplus X_2$ defined by $(\kappa_1)_*(x) = (x^\perp,1)$ and
$(\kappa_1)^*(x,y) = x^\perp$. The dual product structure is given by
$\pi_i = (\kappa_i)^\dag$.
\end{proposition}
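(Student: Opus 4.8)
The plan is to present $X_1\oplus X_2$, the Cartesian product of the underlying sets with componentwise order, meet, top $(1,1)$ and orthocomplement $(x,y)^{\perp}=(x^{\perp},y^{\perp})$, as \emph{simultaneously} a coproduct and a product in $\Cat{OMLatGal}$, the two structures being exchanged by the dagger. As a preliminary I would check that $X_1\oplus X_2$ is genuinely an orthomodular lattice: each clause of Definition~\ref{OMLatDef} holds because it holds in each coordinate separately. Next I verify that $\kappa_1$ is a morphism by unwinding the Galois condition: for $a\in X_1$ and $(x,y)\in X_1\oplus X_2$ one has $a\leq(\kappa_1)^{*}(x,y)=x^{\perp}$ iff $x\leq a^{\perp}$ iff $(x,y)\leq(a^{\perp},1)=(\kappa_1)_{*}(a)$, and symmetrically for $\kappa_2$. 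Since the dagger of a morphism is again a morphism, $\pi_i:=\kappa_i^{\dag}$ are automatically morphisms, with $(\pi_1)_{*}(x,y)=x^{\perp}$ and $(\pi_1)^{*}(a)=(a^{\perp},1)$ (and dually for $\pi_2$).

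The next step records the biproduct ``matrix'' identities. Using the composition rule $(g\after f)_{*}(x)=g_{*}(f_{*}(x)^{\perp})$ together with Lemma~\ref{MapEqLem}, I compute $(\pi_1\after\kappa_1)_{*}(a)=(\pi_1)_{*}((a^{\perp},1)^{\perp})=(\pi_1)_{*}(a,0)=a^{\perp}$, which is exactly $(\id[X_1])_{*}(a)$, so $\pi_1\after\kappa_1=\id[X_1]$ (in particular $\kappa_1$ is a dagger mono); and $(\pi_1\after\kappa_2)_{*}(b)=(\pi_1)_{*}(0,b)=1$, which is the lower component of the zero map, so $\pi_1\after\kappa_2=0$. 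The remaining two cases are identical.

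For the coproduct universal property, given $f_1\colon X_1\to Z$ and $f_2\colon X_2\to Z$ I would define the cotuple by
$$[f_1,f_2]_{*}(x,y)=(f_1)_{*}(x)\conjun(f_2)_{*}(y),\qquad [f_1,f_2]^{*}(z)=\big((f_1)^{*}(z),(f_2)^{*}(z)\big).$$
That this is a Galois connection is immediate from $(f_i)^{*}\dashv(f_i)_{*}$: $(x,y)\leq[f_1,f_2]^{*}(z)$ iff $x\leq(f_1)^{*}(z)$ and $y\leq(f_2)^{*}(z)$ iff $z\leq(f_1)_{*}(x)$ and $z\leq(f_2)_{*}(y)$ iff $z\leq[f_1,f_2]_{*}(x,y)$. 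The equation $[f_1,f_2]\after\kappa_1=f_1$ follows by computing $([f_1,f_2]\after\kappa_1)_{*}(a)=[f_1,f_2]_{*}(a,0)=(f_1)_{*}(a)\conjun(f_2)_{*}(0)=(f_1)_{*}(a)$, where $(f_2)_{*}(0)=1$ because the right adjoint $(f_2)_{*}$ preserves the top (the top of $X_2\op$ is $0\in X_2$); likewise for $\kappa_2$. Uniqueness is where Lemma~\ref{MapEqLem} does the work: any $g$ with $g\after\kappa_i=f_i$ must satisfy $g_{*}(a,0)=(f_1)_{*}(a)$ and $g_{*}(0,b)=(f_2)_{*}(b)$, and since the component $g_{*}$ of a morphism sends joins to meets while $(x,y)=(x,0)\disjun(0,y)$, this forces $g_{*}(x,y)=(f_1)_{*}(x)\conjun(f_2)_{*}(y)$, hence $g=[f_1,f_2]$.

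Finally I obtain the product for free: the dagger is an involutive contravariant isomorphism $\Cat{OMLatGal}\op\cong\Cat{OMLatGal}$ carrying $\kappa_i$ to $\pi_i$, so it transports the coproduct $(X_1\oplus X_2,\kappa_1,\kappa_2)$ to a product $(X_1\oplus X_2,\pi_1,\pi_2)$ (each $h_i\colon Z\to X_i$ arises as some $f_i^{\dag}$, so the unique $g$ with $g\after\kappa_i=f_i$ dualises to the unique $g^{\dag}$ with $\pi_i\after g^{\dag}=h_i$). Combined with the matrix identities and $\pi_i=\kappa_i^{\dag}$, this exhibits $X_1\oplus X_2$ as a dagger biproduct; the nullary biproduct is the zero object $\nul$ already identified before Theorem~\ref{OMLatGalDagKerCatThm}, and general finite biproducts follow by iteration. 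The one genuinely creative step is the coproduct: guessing the cotuple formula $[f_1,f_2]_{*}=(f_1)_{*}\conjun(f_2)_{*}$ and recognising that, because lower components convert joins into meets, the decomposition $(x,y)=(x,0)\disjun(0,y)$ both forces this formula and delivers the universal property; everything else is routine componentwise checking.
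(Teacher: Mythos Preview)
Your proof is correct and follows essentially the same approach as the paper: the same cotuple formula $[f_1,f_2]_{*}(x,y)=(f_1)_{*}(x)\conjun(f_2)_{*}(y)$, the same verification that $[f_1,f_2]\after\kappa_i=f_i$ via $(f_j)_{*}(0)=1$, and the same uniqueness argument exploiting that $g_{*}$ sends the join $(x,y)=(x,0)\disjun(0,y)$ to a meet. Your additions---checking that $X_1\oplus X_2$ is orthomodular, spelling out the matrix identities, and explicitly invoking the dagger to transport the coproduct to a product---are routine elaborations the paper leaves implicit, not a different route.
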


\begin{proof}
  Let us first verify that $\kappa_1$ is a well-defined morphism of
  $\Cat{OMLatGal}$, \textit{i.e.} that $(\kappa_1)^* \dashv (\kappa_1)_*$:
$$\begin{prooftree}
\begin{prooftree}
z \;\leq\; x^{\perp} \rlap{$\;=(\kappa_{1})^{*}(x,y)$}
\Justifies
x \;\leq\; z^{\perp} 
\end{prooftree}
\Justifies
(x,y) \;\leq\; (z^{\perp}, 1) \rlap{$\;=(\kappa_{1})_{*}(z)$}
\end{prooftree}$$

\noindent Also, $\kappa_1$ is a dagger mono since:
$$(\kappa_{1})^{*}\big((\kappa_{1})_{*}(x)^{\perp}\big) 
= 
(\kappa_{1})^{*}\big((x^{\perp}, 1)^{\perp}\big) 
= 
(\kappa_{1})^{*}\big(x,0\big) 
= 
x^{\perp}.$$ 

\noindent Likewise, there is a dagger mono $\kappa_2 \colon X_2 \to
X_1 \oplus X_2$. For $i \neq j$, one finds that $(\kappa_j)^\dag
\after \kappa_i$ is the zero morphism.

\auxproof{
$$\begin{array}{rcl}
\big((\kappa_{1})^{\dag} \after \kappa_{2}\big)_{*}(y)
& = &
(\kappa_{1})^{*}\big((\kappa_{2})_{*}(y)^{\perp}\big) \\
& = &
(\kappa_{1})^{*}\big((1, y^{\perp})^{\perp}\big) \\
& = &
(\kappa_{1})^{*}\big(0, y\big) \\
& = &
0^{\perp} \\
& = &
1.
\end{array}$$
}

In order to show that $X_1 \oplus X_2$ is indeed a coproduct, suppose
that morphisms $f_i \colon X_i \to Y$ are given.  We then define the
cotuple $\cotuple{f_1}{f_2} \colon X_1 \oplus X_2 \to Y$ by
$\cotuple{f_1}{f_2}_*(x_1,x_2) = (f_1)_*(x_1) \conjun (f_2)_*(x_2)$
and $\cotuple{f_1}{f_2}^*(y) = (f_1^*(y), f_2^*(y))$. Clearly,
$\cotuple{f_1}{f_2}^* \dashv \cotuple{f_1}{f_2}_*$, and:

\auxproof{
  \[
  \begin{bijectivecorrespondence}
    \correspondence[in $(X_1\oplus X_2)\op$]
       {\cotuple{f_1}{f_2}^*(y) = (f_1^*(y), f_2^*(y)) \leq (x_1,x_2)}
    \correspondence[in $X_i\op$]{f_i^*(y) \leq x_i}
    \correspondence[in $Y$]{y \leq (f_i)_*(x_i)}
    \correspondence[in $Y$.]{y \leq (f_1)_*(x_1) \conjun (f_2)_*(x_2) =
          \cotuple{f_1}{f_2}_*(x_1,x_2)}
  \end{bijectivecorrespondence}\]
}

$$\begin{array}{rcl}
(\cotuple{f_1}{f_2} \after \kappa_1)_*(x)
& = &
\cotuple{f_1}{f_2}_*\big((\kappa_1)_*(x)^\perp\big) \\
& = &
\cotuple{f_1}{f_2}_*\big((x^{\perp},1)^{\perp}\big) \\
& = &
(f_1)_*(x) \conjun (f_2)_*(0) \\
& = &
(f_1)_*(x) \conjun 1 \\
& = &
(f_1)_*(x).
\end{array}$$

\auxproof{
\begin{align*}
        (\cotuple{f_1}{f_2} \after \kappa_1)^*(y) 
    & = \kappa_1^*( \cotuple{f_1}{f_2}^*(y)^\perp ) \\
    & = \kappa_1^*( f_1^*(y)^\perp, f_2^*(y)^\perp ) \\
    & = f_1^*(y)^{\perp\perp} \\
    & = f_1^*(y),
  \end{align*}
}

\noindent so that $\cotuple{f_1}{f_2} \after \kappa_1 =
f_1$. Likewise, $\cotuple{f_1}{f_2} \after \kappa_2 = f_2$. Moreover,
if $g\colon X_{1}\oplus X_{2}\rightarrow Y$ also satisfies
$g\after\kappa_{i} = f_{i}$, then:
$$\begin{array}[b]{rcl}
\cotuple{f_1}{f_2}_{*}(x_{1},x_{2})
& = &
(f_1)_*(x_1) \conjun (f_2)_*(x_2) \\
& = &
g_{*}\big((\kappa_{1})_{*}(x_{1})^{\perp}\big) \conjun 
   g_{*}\big((\kappa_{2})_{*}(x_{2})^{\perp}\big) \\
& = &
g_{*}\big((x_{1}^{\perp},1)^{\perp}\big) \conjun 
   g_{*}\big((1,x_{2}^{\perp})^{\perp}\big) \\
& = &
g_{*}\big(x_{1},0\big) \conjun g_{*}\big(0,x_{2}\big) \\
& = &
g_{*}\big((x_{1},0) \disjun (0,x_{2})\big) \\
& = &
g_{*}\big(x_{1},x_{2}\big).
\end{array}\eqno{\QEDbox}$$
\end{proof}

\subsection{From dagger kernel categories to orthomodular lattices}

The aim in this subsection is to show that for a dagger kernel \Cat{D}
the kernel subobject functor $\KSub(-)$ is a functor $\cat{D}
\rightarrow \Cat{OMLatGal}$. On a morphism $f\colon X \to Y$ of
$\cat{D}$, define $\KSub(f) \colon \KSub(X) \to \KSub(Y)$ by:
$$\begin{array}{rclcrcl}
\KSub(f)_{*}(m) 
& = &
\big(\exists_{f}(m)\big)^{\perp}
& \qquad &
\KSub(f)^{*}(n)
& = &
f^{-1}\big(n^{\perp}\big).
\end{array}$$

\noindent Then indeed $\KSub(f)^* \dashv \KSub(f)_*$: 
$$\begin{prooftree}
n \;\leq\; (\exists_{f}(m))^{\perp} \rlap{$\;=\KSub(f)_{*}(m)$}
\Justifies
\begin{prooftree}
\exists_{f}(m) \;\leq\; n^{\perp}
\Justifies
m \;\leq\; f^{-1}(n^{\perp}) \rlap{$\;=\KSub(f)^{*}(n)$}
\end{prooftree}
\end{prooftree}$$

\auxproof{
As a result we get a functor $\KSub \colon \cat{D} \to
\Cat{OMLatGal}$, since:
$$\begin{array}{rcl}
\KSub(\idmap[X])_{*}(m)
& = &
\exists_{\idmap}(m)^{\perp} \\
& = &
m^{\perp} \\
& = &
(\idmap[\KSub(X)])_{*}(m) \\
\KSub(g\after f)_{*}(m) 
& = &
\exists_{g\after f}(m)^{\perp} \\
& = &
\exists_{g}(\exists_{f}(m))^{\perp} \\
& = &
\exists_{g}(\exists_{f}(m)^{\perp\perp})^{\perp} \\
& = &
\KSub(g)_{*}(\KSub(f)_{*}(m)^{\perp}) \\
& = &
(\KSub(g) \after \KSub(f))_{*}(m).
\end{array}$$
}

\noindent The functor $\KSub(-)$ preserves the relevant
structure. This requires the following auxiliary result.

\begin{lemma}
\label{lem:kernelschangeofbase}
In a dagger kernel category, for any kernel $k\colon K \rightarrowtail
X$ in $\KSub(X)$, there is an order isomorphism $\KSub(K) \cong
\downset k \subseteq \KSub(X)$.
\end{lemma}

\begin{proof}
The direction $\KSub(K) \to \downset k$ of the desired bijection is
given by $m \mapsto k \after m$. This is well-defined since kernels
are closed under composition.  The other direction $\downset k \to
\KSub(K)$ is $n \mapsto \varphi = k^{\dag} \after n$, where $n = k
\after \varphi$. One easily checks that these maps are each other's
inverse, and preserve the order.  \QED
\end{proof}




\begin{theorem}
\label{KSubPreservationThm}
Let $\cat{D}$ be a dagger kernel category. The functor
$\KSub \colon \cat{D} \to \Cat{OMLatGal}$,
\begin{enumerate}
\item[(a)] is a map of dagger kernel categories;

\item[(b)] preserves (finite) biproducts, in case they exist in $\cat{D}$.
\end{enumerate}
\end{theorem}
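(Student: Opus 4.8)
The plan is to verify the three defining clauses of a map of dagger kernel categories---preservation of the dagger, of the zero object, and of kernels---in turn, and then to treat biproducts separately. \emph{Functoriality} of $\KSub$ comes first: since $\exists_{\idmap} = \idmap$ and $\exists_{g\after f} = \exists_{g}\after\exists_{f}$, a short calculation from $\KSub(f)_{*}(m) = \exists_{f}(m)^{\perp}$ shows that identities and composition are preserved (this is the computation suppressed just before Lemma~\ref{lem:kernelschangeofbase}). For \emph{dagger preservation} I would invoke Lemma~\ref{MapEqLem} to reduce $\KSub(f^{\dag}) = \KSub(f)^{\dag}$ to a single component; unfolding the twist so that $(\KSub(f)^{\dag})_{*} = \KSub(f)^{*}$, the identity to be checked is $\exists_{f^{\dag}}(m)^{\perp} = f^{-1}(m^{\perp})$, which is precisely the inter-expressibility of existential quantification and substitution recalled in Section~\ref{DagKerSec}. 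For the \emph{zero object}, the object $0\in\cat{D}$ has a one-element poset $\KSub(0)$, which is the zero object $\nul$ of $\Cat{OMLatGal}$; consequently $\KSub$ also sends zero maps to the zero morphism.

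The core of part~(a) is \emph{preservation of kernels}. Let $k = \ker(f)$ for $f\colon X\to Y$. I would first compute $\ker(\KSub(f))$ inside $\Cat{OMLatGal}$: by Theorem~\ref{OMLatGalDagKerCatThm} it is the downset of $\KSub(f)^{*}(1) = f^{-1}(1^{\perp}) = f^{-1}(0)$, and since $f^{-1}(0) = \ker(\coker(0)\after f) = \ker(f) = k$, this kernel is $\downset k\rightarrowtail\KSub(X)$. It then remains to identify $\KSub(k)\colon\KSub(K)\to\KSub(X)$ with this inclusion. Because $\KSub$ is functorial and preserves the dagger, the equation $k^{\dag}\after k = \idmap[K]$ yields $\KSub(k)^{\dag}\after\KSub(k) = \idmap$, so $\KSub(k)$ is a dagger mono; and Lemma~\ref{ImFacLem} together with $\exists_{k}(1_{\KSub(K)}) = \Im(k) = k$ shows that its image is $\downset k$. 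Since a dagger mono is determined as a subobject by its image, $\KSub(k)$ represents the same kernel subobject as $\downset k\rightarrowtail\KSub(X)$, that is, $\KSub(k) = \ker(\KSub(f))$. The order-isomorphism $\KSub(K)\cong\downset k$ of Lemma~\ref{lem:kernelschangeofbase} is exactly the reindexing that makes this identification explicit.

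For part~(b), the biproduct equations $\KSub(\pi_{i})\after\KSub(\kappa_{i}) = \idmap$ and $\KSub(\pi_{j})\after\KSub(\kappa_{i}) = 0$ for $i\neq j$ are immediate from functoriality and zero-preservation, since $\pi_{i}\after\kappa_{i} = \idmap$ and $\pi_{j}\after\kappa_{i} = 0$ in $\cat{D}$. By Proposition~\ref{BiprodProp} the candidate biproduct in $\Cat{OMLatGal}$ is the product lattice $\KSub(X_{1})\times\KSub(X_{2})$, so it suffices to produce an isomorphism $\KSub(X_{1}\oplus X_{2})\cong\KSub(X_{1})\times\KSub(X_{2})$ compatible with the $\KSub(\kappa_{i})$ and $\KSub(\pi_{i})$. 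The natural candidate sends a kernel subobject $W\rightarrowtail X_{1}\oplus X_{2}$ to the pair of components $(\exists_{\pi_{1}}(W),\exists_{\pi_{2}}(W))$, with proposed inverse $(m_{1},m_{2})\mapsto\exists_{\kappa_{1}}(m_{1})\disjun\exists_{\kappa_{2}}(m_{2})$, after which checking that this bijection intertwines the two families of coprojections and projections is routine unravelling via Lemma~\ref{InvDirImLem}, kept short by Lemma~\ref{MapEqLem}.

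The main obstacle is proving that these two maps are mutually inverse, and concretely that every kernel subobject recovers itself as $W = \exists_{\kappa_{1}}(\exists_{\pi_{1}}(W))\disjun\exists_{\kappa_{2}}(\exists_{\pi_{2}}(W))$; that is, that a kernel over a biproduct decomposes as the join of its two projected pieces. This decomposition is where the real content of~(b) lies, and I would attack it by exploiting that $\kappa_{i}$ is the kernel of $\pi_{j}$ for $i\neq j$, together with the interaction of $\exists_{(-)}$ and pullback with the biproduct structure maps, to pull the biproduct identity on $X_{1}\oplus X_{2}$ back along $W\rightarrowtail X_{1}\oplus X_{2}$. One caveat worth isolating up front is that this join-decomposition genuinely uses the biproduct structure and need not be a formal consequence of additivity alone, so I expect the bulk of the work---and any hypotheses actually needed on $\cat{D}$---to be concentrated in establishing exactly this equality.
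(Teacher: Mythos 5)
Your part~(a) follows the paper's proof for most of its length: functoriality via $\exists_{g\after f}=\exists_{g}\after\exists_{f}$, the dagger via inter-expressibility $\exists_{f^{\dag}}(m)^{\perp}=f^{-1}(m^{\perp})$, the zero object, and the identification $\ker(\KSub(f))=\big(\downset k\rightarrowtail\KSub(X)\big)$ with $k=\ker(f)$ all match. The gap is the principle you use to finish: ``a dagger mono is determined as a subobject by its image'' is false, in general dagger kernel categories and in $\Cat{OMLatGal}$ in particular. Take any orthomodular lattice $X$ with more than two elements and the map $\overline{1}\colon 2\rightarrow X$ of Lemma~\ref{PointLem}: it satisfies the dagger-mono criterion $\overline{1}^{*}(\overline{1}_{*}(w)^{\perp})=w^{\perp}$, and by Lemma~\ref{ImFacLem} its image is $\downset(\overline{1}_{*}(1)^{\perp})=\downset 1=X$, yet it is not the same subobject as $\idmap[X]=\Im(\overline{1})$, because an isomorphism $f$ in $\Cat{OMLatGal}$ induces a bijection $x\mapsto f_{*}(x)^{\perp}$ of underlying sets and $2\not\cong X$. (In \Rel, the relation $\{(*,a),(*,b)\}\colon\{*\}\rightarrow\{a,b\}$ is an analogous counterexample.) So knowing that $\KSub(k)$ is a dagger mono with image $\downset k$ does not yet make it a kernel of $\KSub(f)$. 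The paper closes exactly this hole by a direct computation: it checks that $\KSub(k)$ equals the composite of the order isomorphism $\KSub(K)\cong\downset k$ of Lemma~\ref{lem:kernelschangeofbase} with the kernel $\downset k\rightarrowtail\KSub(X)$, via $\KSub(k)_{*}(n)=\exists_{k}(n)^{\perp}=(k\after n)^{\perp}$ (one component suffices by Lemma~\ref{MapEqLem}). Your argument is repairable by inserting this verification; the surrounding structure is correct.

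Part~(b) is not a proof: the equality $W=\exists_{\kappa_{1}}\exists_{\pi_{1}}(W)\disjun\exists_{\kappa_{2}}\exists_{\pi_{2}}(W)$, which you correctly isolate as carrying all the content, is only announced as something you ``would attack''. Your caveat that it need not follow formally is, in fact, an understatement: it fails in \Hilb. For the diagonal $W=\set{(x,x)}{x\in\mathbb{C}}\subseteq\mathbb{C}\oplus\mathbb{C}$ the two subobjects $\exists_{\kappa_{i}}\exists_{\pi_{i}}(W)$ are the coordinate axes, whose join is all of $\mathbb{C}\oplus\mathbb{C}\neq W$. Consequently no argument can establish~(b) at the stated generality: if $\KSub(\kappa_{i})$ were a coproduct diagram for $\cat{D}=\Hilb$, then $\KSub(\mathbb{C}\oplus\mathbb{C})$ (the infinite lattice of closed subspaces of $\mathbb{C}^{2}$) would be isomorphic in $\Cat{OMLatGal}$ to the four-element lattice $\KSub(\mathbb{C})\oplus\KSub(\mathbb{C})$ of Proposition~\ref{BiprodProp}, contradicting the bijection-on-underlying-sets property of isomorphisms just mentioned. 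Note that the paper's own uniqueness argument needs this same decomposition and justifies it only by writing $m=\downset(x_{1},x_{2})$ and computing with the explicit coprojection formulas of Proposition~\ref{BiprodProp}; that step presupposes that every kernel of $X_{1}\oplus X_{2}$ has this componentwise form, which holds when $\cat{D}$ is $\Cat{OMLatGal}$ itself (Proposition~\ref{DownsetIsKerProp}) or \Rel, but not in \Hilb. So on~(b) your instinct about hidden hypotheses is exactly right: a correct statement must be restricted to categories in which kernel subobjects of a biproduct decompose as joins of their two projected parts, and the bulk of any honest proof is verifying that hypothesis, not the routine intertwining you defer to Lemmas~\ref{InvDirImLem} and~\ref{MapEqLem}.
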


\begin{proof}
Preservation of daggers follows because $f^{-1}$ and $\exists_f$ are
inter-expressible, see Section~\ref{DagKerSec}
and~\cite[Proposition~4.3]{HeunenJ09a}:
$$\KSub(f^{\dag})_{*}(n)
=
\big(\exists_{f^{\dag}}(n))^{\perp}
=
f^{-1}(n^{\perp})
=
\KSub(f)^{*}(n)
=
\big(\KSub(f)^{\dag}\big)_{*}(n).$$

Preservation of the zero object is easy: $\KSub(0) = \{0\}=\nul$.

Next, let $k\colon K \to X$ be the kernel of a morphism $f \colon X
\to Y$ in $\cat{D}$. We recall
from~\cite[Corollary~2.5~(ii)]{HeunenJ09a} that this kernel $k$ can be
described as inverse image $k = f^{-1}(0) = f^{-1}(1^{\perp}) =
\KSub(f)^{*}(1)$. Hence by Lemmas~\ref{lem:kernelschangeofbase}
and~\ref{DownsetLem}, we have the isomorphism on the left in:
$$\xymatrix@R-2pc@C+2pc{
\KSub(K)\ar@{ |>->}[dr]^-{\KSub(k)}\ar[dd]_{k\after -}^{\cong} \\
& \KSub(X)\ar[r]^-{\KSub(f)} & \KSub(Y) \\
\downset k\ar@{ |>->}[ur]_{k}
}$$

\noindent It yields a commuting triangle since for $n\in\KSub(K)$,
$$\KSub(k)_{*}(n)
=
\exists_k(n)^{\perp} 
=
(k\after n)^{\perp} 
=
k_{*}(k\after n).$$

\noindent Similarly for $m\in\KSub(X)$, 
$$k\after \KSub(k)^{*}(m)
=
k\after k^{-1}(m^\perp)
= 
k \conjun m^{\perp}
= 
k^{*}(m).$$

For~(b), it suffices to prove that $\KSub(\kappa_i) \colon \KSub(X_i)
\to \KSub(X_1 \oplus X_2)$ is a coproduct in $\Cat{OMLatGal}$.  Let
morphisms $f_i \colon \KSub(X_i) \to Y$ in $\Cat{OMLatGal}$ be
given. Define a cotuple $\cotuple{f_1}{f_2} \colon \KSub(X_1 \oplus
X_2) \to Y$ by
$$\begin{array}{rcl}
\cotuple{f_1}{f_2}_*(m) 
& = &
(f_{1})_{*}(\exists_{\pi_1}(m)) \conjun (f_{2})_{*}(\exists_{\pi_2}(m)), \\
\cotuple{f_1}{f_2}^{*}(y) 
& = &
\pi_{1}^{-1}((f_{1})^{*}(y)) \conjun \pi_{2}^{-1}((f_{2})^{*}(y))
\end{array}$$

\noindent Indeed $\cotuple{f_1}{f_2}^{*} \dashv \cotuple{f_1}{f_2}_{*}$: 
$$\begin{prooftree}
\begin{prooftree}
y \;\leq\; (f_{1})_{*}(\exists_{\pi_1}(m)) \conjun (f_{2})_{*}(\exists_{\pi_2}(m))
   \rlap{$\;=\cotuple{f_1}{f_2}_{*}(m)$}
\Justifies
\begin{prooftree}
\begin{prooftree}
y \;\leq\; (f_{i})_{*}(\exists_{\pi_i}(m))
\Justifies
\exists_{\pi_i}(m) \;\leq\; (f_{i})^{*}(y)
\end{prooftree}
\Justifies
m \;\leq\; \pi_{i}^{-1}((f_{i})^{*}(y))
\end{prooftree}
\end{prooftree}
\Justifies
m \;\leq\; \pi_{1}^{-1}((f_{1})^{*}(y)) \conjun \pi_{2}^{-1}((f_{2})^{*}(y))
   \rlap{$\;= \cotuple{f_1}{f_2}^*(y)$}
\end{prooftree}$$

\noindent This morphism $\cotuple{f_1}{f_2} \colon
  \KSub(X_1 \oplus X_2) \to Y$ of $\Cat{OMLatGal}$ satisfies:
$$\begin{array}{rcl}
(\cotuple{f_1}{f_2}) \after \KSub(\kappa_{1}))_{*}(m)
& = &
\cotuple{f_1}{f_2}_{*}( \KSub(\kappa_{1})_{*}(m)^{\perp}) \\
& = &
\cotuple{f_1}{f_2}_{*}(\exists_{\kappa_1}(m)) \\
& = &
(f_{1})_{*}(\exists_{\pi_1}\exists_{\kappa_1}(m)) \conjun
      (f_{2})_{*}(\exists_{\pi_2}\exists_{\kappa_1}(m)) \\
& = &
(f_{1})_{*}(\exists_{\pi_{1}\after\kappa_1}(m)) \conjun
      (f_{2})_{*}(\exists_{\pi_{2}\after\kappa_1}(m)) \\
& = &
(f_{1})_{*}(\exists_{\idmap}(m)) \conjun
      (f_{2})_{*}(\exists_{0}(m)) \\
& = &
(f_{1})_{*}(m) \conjun (f_{2})_{*}(0) \\
& = &
(f_{1})_{*}(m) \conjun 1 \\
& = &
(f_{1})_{*}(m).
\end{array}$$

\auxproof{
\noindent In the other direction, using that meets $\conjun,1$ are
preserved under pullback,
$$\begin{array}{rcl}
(\cotuple{f_1}{f_2} \after \KSub(\kappa_{1}))^{*}(y)
& = &
\kappa_{1}^{-1}(\cotuple{f_1}{f_2}(y)) \\
& = &
\kappa_{1}^{-1}(\pi_{1}^{-1}((f_{1})^{*}(y))) \conjun 
   \kappa_{1}^{-1}(\pi_{2}^{-1}((f_{2})^{*}(y))) \\
& = &
(\pi_{1}\after \kappa_{1})^{-1}((f_{1})^{*}(y)) \conjun 
   (\pi_{2}\after \kappa_{1})^{-1}((f_{2})^{*}(y)) \\
& = &
\idmap^{-1}((f_{1})^{*}(y)) \conjun 
   0^{-1}((f_{2})^{*}(y)) \\
& = &
(f_{1})^{*}(y) \conjun 1 \\
& = &
(f_{1})^{*}(y).
\end{array}$$
}

\noindent Towards uniqueness, assume $g\colon \KSub(X_{1}\oplus X_{2})
\rightarrow Y$ in \Cat{OMLatGal} also satisfies $g \after
\KSub(\kappa_{i}) = f_{i}$. Then:
$$\begin{array}{rcl}
\cotuple{f_1}{f_2}_{*}(m)
& = &
(f_{1})_{*}(\exists_{\pi_1}(m)) \conjun (f_{2})_{*}(\exists_{\pi_2}(m)) \\
& = &
g_{*}(\exists_{\kappa_{1}}\exists_{\pi_1}(m)) \conjun 
   g_{*}(\exists_{\kappa_2}\exists_{\pi_2}(m)) \\
& = &
g_{*}(\exists_{\kappa_{1}\after\pi_1}(m) \disjun 
   \exists_{\kappa_{2}\after\pi_2}(m)) \\
& = &
g_{*}(m).
\end{array}$$

\noindent This last step needs justification. By
Theorem~\ref{OMLatGalDagKerCatThm}, $m\in\KSub(X_{1}\oplus X_{2})$ can
be written as $m = (\downset(x_{1},x_{2})\rightarrow X_{1}\oplus
X_{2})$ for certain $x_{i}\in X_{i}$. Then, by
Lemma~\ref{InvDirImLem},
$$\begin{array}{rcl}
\lefteqn{\exists_{\kappa_{1}\after\pi_1}(m) \disjun 
   \exists_{\kappa_{2}\after\pi_2}(m)} \\
& = &
\big(\downset((\kappa_{1}\after\pi_{1})_{*}(x_{1},x_{2})^{\perp}
   \disjun (\kappa_{2}\after\pi_{2})_{*}(x_{1},x_{2})^{\perp}) 
   \longrightarrow X_{1}\oplus X_{2}\big) \\
& = &
\big(\downset(x_{1},x_{2})\longrightarrow X_{1}\oplus X_{2}\big) \\
& = &
m,
\end{array}$$

\noindent since:
$$\begin{array}[b]{rcl}
(\kappa_{1}\after\pi_{1})_{*}(x_{1},x_{2})^{\perp}
& = &
(\kappa_{1})_{*}((\pi_{1})_{*}(x_{1},x_{2})^{\perp})^{\perp} \\
& = &
(\kappa_{1})_{*}(x_{1}^{\perp\perp})^{\perp} \\
& = &
(x_{1}^{\perp}, 1)^{\perp} \\
& = &
(x_{1}, 0).
\end{array}\eqno{\QEDbox}$$
\end{proof}

At this stage we conclude that these $\KSub$ functors yield a
well-behaved translation of a dagger kernel category into a collection
of orthomodular lattices, indexed by the objects of the category. For
the special case $\Cat{D} = \Cat{OMLatGal}$, the functor $\KSub\colon
\Cat{D} \rightarrow \Cat{OMLatGal}$ is the identity, up to
isomorphism, by Proposition~\ref{DownsetIsKerProp}. A translation in
the other direction, from orthomodular lattices to dagger kernel
categories will be postponed until after the next section, after we
have seen the translation from Foulis semigroups to orthomodular
lattices.

In the remainder of this section we shall briefly consider two
special subcategories of \Cat{OMLatGal}, namely with Boolean
and with complete orthomodular lattices.

\subsection{The Boolean case}\label{BooleanSubsec}

Let $\Cat{BoolGal} \hookrightarrow \Cat{OMLatGal}$ be the full
subcategory of Boolean algebras with (antitone) Galois connections
between them. We recall that a Boolean algebra can be described as an
orthomodular lattice that is distributive.

The main (and only) result of this subsection is simple.

\begin{proposition}
\label{BoolGalStructProp}
The category \Cat{BoolGal} inherits dagger kernels and biproducts
from \Cat{OMLatGal}. Moreover, as a dagger kernel category it is
Boolean.
\end{proposition}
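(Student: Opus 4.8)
The plan is to use that \Cat{BoolGal} is a \emph{full} subcategory of \Cat{OMLatGal}, so that closure under each ingredient of the dagger kernel structure reduces to a single question: does the relevant object construction stay within the class of Boolean algebras? Once the objects are Boolean, every morphism, and hence every universal property (mediating maps for kernels, cotuples for biproducts), transfers automatically by fullness. Throughout I would use that a Boolean algebra is precisely a distributive orthomodular lattice, so that ``being Boolean'' is a property of $\conjun$ and $\disjun$ alone.

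First I would dispose of the zero object and the dagger: the zero object $\nul$ is the one-element lattice, which is trivially distributive, and the dagger is the twist $(f_*,f^*)^{\dag} = (f^*,f_*)$, which is the identity on objects and so restricts with no work. For kernels I would invoke Theorem~\ref{OMLatGalDagKerCatThm}: the dagger kernel of $f\colon X\rightarrow Y$ is the inclusion $\downset k\rightarrowtail X$ with $k = f^{*}(1)$. By Lemma~\ref{DownsetLem} the downset $\downset k$ is an orthomodular lattice whose meets and joins are computed exactly as in $X$, only the orthocomplement being relativised to $u^{\perp_{k}} = k\conjun u^{\perp}$. Since distributivity depends only on $\conjun$ and $\disjun$, it is inherited from $X$, so $\downset k$ is again Boolean and the kernel lives in \Cat{BoolGal}. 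For biproducts I would use Proposition~\ref{BiprodProp}, which presents $X_{1}\oplus X_{2}$ as the Cartesian product of underlying lattices with operations taken componentwise; a componentwise product of Boolean algebras is Boolean, so \Cat{BoolGal} is closed under $\oplus$ as well.

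It remains to show that \Cat{BoolGal} is Boolean \emph{as a dagger kernel category}, that is, that every kernel poset $\KSub(X)$ is a Boolean algebra. Here I would simply appeal to the isomorphism of orthomodular lattices $X\cong\KSub(X)$ of Proposition~\ref{DownsetIsKerProp}: for $X\in\Cat{BoolGal}$ the left-hand side is Boolean by hypothesis, hence so is $\KSub(X)$. The only point demanding any care is the relativised orthocomplement on downsets in the kernel step; but since Lemma~\ref{DownsetLem} already certifies that $\downset k$ is an ortholattice, the sole thing left to check is distributivity, and that transfers immediately because the underlying meet--join structure is that of $X$. I therefore expect no genuine obstacle, only this bookkeeping about where the orthocomplement comes from.
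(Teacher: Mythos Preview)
Your proposal is correct and follows essentially the same route as the paper: kernels stay in \Cat{BoolGal} because $\downset f^{*}(1)$ inherits distributivity from $X$, biproducts stay because the componentwise product of Boolean algebras is Boolean, and Booleanness of the dagger kernel category follows from the isomorphism $\KSub(X)\cong X$ of Proposition~\ref{DownsetIsKerProp}. The paper is terser and closes by invoking \cite[Theorem~6.2]{HeunenJ09a} for the last step, but the content is the same.
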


\begin{proof}
An arbitrary map $f\colon X\rightarrow Y$ in \Cat{BoolGal} has a
kernel $\downset f^{*}(1) \rightarrow X$ as in
Theorem~\ref{OMLatGalDagKerCatThm} for orthomodular lattices because
the downset $\downset f^{*}(1)$ is a Boolean algebra. Similarly, the
biproducts from Proposition~\ref{BiprodProp} also exist in
\Cat{BoolGal} because $X_{1}\oplus X_{2}$ is a Boolean algebra if
$X_{1}$ and $X_{2}$ are Boolean algebras.

For each $X\in\Cat{BoolGal}$ one has $\KSub(X) \cong X$ so that
$\KSub(X)$ is a Boolean algebra. Hence \Cat{BoolGal} is a Boolean
dagger kernel category by~\cite[Theorem~6.2]{HeunenJ09a}. \QED 

\auxproof{
Old, direct proof.

We check the property $m\conjun n = 0 \Rightarrow m^{\dag} \after n = 0$,
for kernels $m,n$, that defines Booleanness for dagger kernel
categories, see~\cite{HeunenJ09a}. Now suppose we have kernels
$\downset a\rightarrow X$ and $\downset b\rightarrow X$ in
\Cat{BoolGal} with
$$\begin{array}{rcccl}
\big(\downset 0 \rightarrow X\big)
& = &
\big(\downset a\rightarrow X\big) \conjun 
   \big(\downset b\rightarrow X\big)
& = &
\big(\downset(a\conjun b)\rightarrow X\big),
\end{array}$$

\noindent see Proposition~\ref{DownsetIsKerProp}. 
Then $a\conjun b = 0$ in $X$, and thus 
$$\begin{array}{rcl}
a
\hspace*{\arraycolsep} = \hspace*{\arraycolsep}
a\conjun 1
\hspace*{\arraycolsep} = \hspace*{\arraycolsep}
a\conjun (a\conjun b)^{\perp}
& = &
a\conjun (a^{\perp} \disjun b^{\perp}) \\
& = &
(a\conjun a^{\perp}) \disjun (a\conjun b^{\perp}) 
   \quad\mbox{by distributivity} \\
& = &
0 \disjun (a \conjun b^{\perp} \\
& = &
a\conjun b^{\perp}.
\end{array}$$

\noindent Hence $a\leq b^{\perp}$, so that the corresponding kernels
satisfy, for $v\in \downset b$,
$$\begin{array}[b]{rcl}
(a^{\dag}\after b)_{*}(v)
\hspace*{\arraycolsep} = \hspace*{\arraycolsep}
a^{*}(b_{*}(v)^{\perp})
& = &
a\conjun v^{\perp} \\
& = &
a \qquad\mbox{since $v\leq b$, so $a \leq b^{\perp} \leq v^{\perp}$} \\
& = &
1_{\downset a} \\
& = &
\big(0\colon \!\downset a\rightarrow \downset a\big)_{*}(v).
\end{array}\eqno{\QEDbox}$$
}
\end{proof}

Boolean algebras thus give rise to (Boolean) dagger kernel categories
on two different levels: the ``large'' category \Cat{BoolGal} of all
Boolean algebras is a dagger kernel category, but also each individual
Boolean algebra can be turned into a ``small'' dagger kernel category,
see~\cite[Proposition~3.5]{HeunenJ09a}.

\subsection{Complete orthomodular lattices}\label{CompleteSubsec}

We shall write $\Cat{OMSupGal}\hookrightarrow \Cat{OMLatGal}$ for the
full subcategory of orthomodular lattices that are complete,
\textit{i.e.}~that have joins $\bigvee U$ (and thus also meets
$\bigwedge U$) of all subsets $U$ (and not just the finite ones).
Notice that the functor $\KSub\colon \Cat{D} \rightarrow
\Cat{OMLatGal}$ from Theorem~\ref{KSubPreservationThm} is actually a
functor $\KSub\colon \Cat{D} \rightarrow \Cat{OMSupGal}$ for \Cat{D} =
\Rel, \PInj, \Hilb.

A morphism $f\colon X\rightarrow Y$ in \Cat{OMSupGal} is completely
determined by either $f_{*}\colon X\op \rightarrow Y$ preserving all
meets, or by $f^{*}\colon Y\rightarrow X\op$ preserving all
joins. This forms the basis for the next result.

\begin{proposition}
\label{FreeOMLatGalProp}
The forgetful functor $U\colon \Cat{OMSupGal} \to \Sets$ given by
$X\mapsto X$ on objects and $f\mapsto f_{*} \after \bot$ on morphisms
has a left adjoint $F$ given by $F(A) = \powerset{A}$, with
$F(g)_{*}(U \subseteq A) = \neg\coprod_{g}(U) = \neg\set{g(a)}{a\in
  U}$ and $F(g)^{*}(V\subseteq B) = g^{-1}(\neg V) =
\set{a}{g(a)\not\in V}$, for $g\colon A\rightarrow B$ in $\Sets$.
\end{proposition}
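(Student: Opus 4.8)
The proposition asserts that the forgetful functor $U\colon \Cat{OMSupGal}\to\Sets$, sending $X\mapsto X$ and $f\mapsto f_*\after\bot$ (i.e. $x\mapsto f_*(x)^\perp$), has a left adjoint given on objects by $F(A)=\powerset{A}$, with prescribed action on morphisms. To prove this I need two things: first, that $F$ is well-defined (each $\powerset{A}$ is a complete orthomodular lattice, and each $F(g)$ is a genuine morphism of $\Cat{OMSupGal}$); second, that there is a natural bijection $\Cat{OMSupGal}(F(A),X)\cong\Sets(A,U(X))$. Let me think about what $U$ really does and set up the adjunction transpose.

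**Key observation and the plan.** Let me unwind $U$. A morphism $f\colon X\to Y$ in $\Cat{OMSupGal}$ is sent to $U(f)(x)=f_*(x)^\perp$. Recall $f_*\colon X\op\to Y$ is the right adjoint, preserving meets of $X\op$ (joins of $X$); composing with $\bot$ turns this into a join-preserving map $X\to Y$. So $U(X)$ really just records the carrier set, and $U(f)$ is the underlying join-preserving function $x\mapsto f_*(x)^\perp$.

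The plan is to exhibit the transpose explicitly. Given $g\colon A\to U(X)$ in $\Sets$ — just a function $A\to X$ — I want a unique morphism $\overline{g}\colon\powerset A\to X$ in $\Cat{OMSupGal}$ with $U(\overline g)\after\eta_A=g$, where $\eta_A\colon A\to U F(A)=\powerset A$ is the unit (singleton map $a\mapsto\{a\}$). Since $\powerset A$ is the free complete join-semilattice on $A$ (as a set $A$ generates $\powerset A$ under unions of singletons: $U=\bigcup_{a\in U}\{a\}$), and since $\overline g$ corresponds to a join-preserving map $\powerset A\to X$, it is forced by $\overline{g}(U)=\bigvee_{a\in U}g(a)$. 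Concretely I set $(\overline g)^{*}(x)=\set{a\in A}{g(a)\leq x^\perp}$ and $(\overline g)_{*}(U)=\big(\bigvee_{a\in U}g(a)\big)^\perp$, then verify the Galois connection $(\overline g)^*\dashv(\overline g)_*$ directly:
$$
U\leq(\overline g)^*(x)\;\Longleftrightarrow\;\forall a\in U.\,g(a)\leq x^\perp
\;\Longleftrightarrow\;\textstyle\bigvee_{a\in U}g(a)\leq x^\perp
\;\Longleftrightarrow\;x\leq(\overline g)_*(U).
$$

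**Remaining verifications and the main obstacle.** After establishing the transpose, I would check: (i) $F$ is well-defined, which is routine — $\powerset A$ is a complete Boolean algebra hence a complete orthomodular lattice, and $F(g)$ for $g\colon A\to B$ is a morphism because $F(g)_*(U)=\neg\coprod_g(U)$ and $F(g)^*(V)=g^{-1}(\neg V)$ are manifestly adjoint via $U\subseteq g^{-1}(\neg V)\Leftrightarrow\coprod_g(U)\subseteq\neg V\Leftrightarrow V\subseteq\neg\coprod_g(U)$; (ii) naturality of the bijection in both $A$ and $X$. The genuinely substantive point — and the main obstacle — is verifying that the functor $U$ described in the statement ($f\mapsto f_*\after\bot$) is the \emph{correct} one making the triangle identities hold, i.e. that the counit/unit match up so that $F(g)$ as defined agrees with the transpose $\overline{(\eta_B\after g)}$. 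I would confirm this by computing $U(F(g))$ and checking it equals $U(\overline{\eta_B\after g})$, using that $\coprod_g$ and $g^{-1}$ are adjoint and that $U$ converts the contravariant $(-)_*$ datum into the covariant join-preserving map. Once the transpose formula is shown to be inverse to $\overline g\mapsto U(\overline g)\after\eta_A$ and natural, the adjunction $F\dashv U$ follows; the only place where orthomodularity (as opposed to mere lattice structure) could intrude is in confirming that the relevant complements behave well, but since all the lattice-theoretic manipulation happens inside the Boolean algebra $\powerset A$ and inside the general $X$ only through the order-theoretic adjunction, no special orthomodular identity is actually needed beyond $x^{\perp\perp}=x$.
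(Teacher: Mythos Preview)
Your proposal is correct and follows essentially the same route as the paper: you set up the transpose $\overline{g}$ with $(\overline g)_*(U)=\big(\bigvee_{a\in U}g(a)\big)^\perp=\bigwedge_{a\in U}g(a)^\perp$ and $(\overline g)^*(x)=\{a:g(a)\leq x^\perp\}$, verify the Galois connection by the same chain of equivalences, and identify the remaining routine checks (that the transpose is inverse to $f\mapsto f_*(\{-\})^\perp$, and naturality). The paper simply carries out those inverse computations explicitly rather than flagging them as obstacles; your worry about whether $U$ is ``the correct'' functor is resolved by the naturality calculation the paper relegates to an auxiliary verification, and as you correctly observe, only $x^{\perp\perp}=x$ is needed.
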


\begin{proof}
For $A\in\Sets$ and $X\in\Cat{OMSupGal}$ there is a bijective
correspondence:
$$\begin{bijectivecorrespondence}
\correspondence[in \Cat{OMSupGal}]{\xymatrix{\powerset{A}\ar[r]^-{f} & X}}
\correspondence[in \Sets]{\xymatrix{A\ar[r]_-{g} & X}}
\end{bijectivecorrespondence}$$

\noindent given by $\overline{f}(a) = f_{*}(\{a\})^{\perp}$ and
$\overline{g}_{*}(U) = \bigwedge_{a\in U}g(a)^{\perp}$ with
$\overline{g}^{*}(x) = \setin{a}{A}{g(a) \leq x^{\perp}}$. Then:
$$\begin{array}{rcl}
x\leq \overline{g}_{*}(U) = \bigwedge_{a\in U}g(a)^{\perp}
& \Longleftrightarrow &
\allin{a}{U}{x\leq g(a)^{\perp}} \\
& \Longleftrightarrow &
\allin{a}{U}{g(a) \leq x^{\perp}} \\
& \Longleftrightarrow &
U \subseteq \setin{a}{A}{g(a) \leq x^{\perp}} = \overline{g}^{*}(x).
\end{array}$$

\noindent Further,
$$\begin{array}[b]{rcl}
\overline{\overline{g}}(a)
& = & \textstyle
\overline{g}_{*}(\{a\})^{\perp} 
=
\big(\bigwedge_{b\in\{a\}}g(b)^{\perp}\big)^{\perp} 
=
g(a)^{\perp\perp}
=
g(a). \\
\overline{\overline{f}}_{*}(U)
& = & \textstyle
\bigwedge_{a\in U} \overline{f}(a)^{\perp} 
=
\bigwedge_{a\in U} f_{*}(\{a\})^{\perp\perp}
=
f_{*}(\bigcup_{a\in U}\{a\}) 
=
f_{*}(U) \\
\overline{\overline{f}}^{*}(x)
& = & \textstyle
\set{a}{\overline{f}(a) \leq x^{\perp}} 
=
\set{a}{f_{*}(\{a\})^{\perp} \leq x^{\perp}} 
=
\set{a}{x \leq f_{*}(\{a\})} \\
& = &
\set{a}{\{a\}\subseteq f^{*}(x)}
=
f^{*}(x).
\end{array}\eqno{\QEDbox}$$

\auxproof{
We check naturality:
$$\begin{array}{rcl}
\overline{h \after f}(a)
& = &
(h\after f)_{*}(\{a\})^{\perp} \\
& = &
h_{*}(f_{*}(\{a\})^{\perp})^{\perp} \\
& = &
h_{*}(\overline{f}(a))^{\perp} \\
& = &
(h_{*} \after \perp \after \overline{f})(a) \\
\overline{(g \after k)}_{*}(U)
& = &
\bigcup_{a\in U}(g\after k)(a)^{\perp} \\
& = &
\bigcup_{a\in U}g(k(a))^{\perp} \\
& = &
\bigcup_{b\in\coprod_{k}(U)}g(b)^{\perp} \\
& = &
\overline{g}_{*}(\coprod_{k}U) \\
& = &
\overline{g}_{*}(\neg\neg\coprod_{k}(U)) \\
& = &
\overline{g}_{*}(\neg F(k)_{*}(U)) \\
& = &
(\overline{g} \after F(k))_{*}(U) \\
\overline{(g \after k)}^{*}(x)
& = &
\set{b}{g(k(b))\leq x^{\perp}} \\
& = &
\set{b}{k(b)\in\set{a}{g(a) \leq x^{\perp}}} \\
& = &
k^{-1}(\overline{g}^{*}(x)) \\
& = &
k^{-1}(\neg\neg\overline{g}^{*}(x)) \\
& = &
F(k)^{*}(\neg\overline{g}^{*}(x)) \\
& = &
(\overline{g} \after F(k))^{*}(x).
\end{array}$$
}
\end{proof}

The left adjoint $F$ of this adjunction between \cat{OMSupGal} and
\Sets factors via the graph functor ${\cal G}\colon \Sets \rightarrow
\Rel$, as in:
$$\xymatrix@R-2pc@C+1pc{
& \Rel\ar@/^1ex/[dr]^(0.4){\KSub} \\
\Sets\ar@/^1ex/[ur]^-{\cal G} & \quad\bot & \Cat{OMSupGal}\ar@/^2ex/[ll]^{U}
}$$

It is not hard to see that the kernels from
Theorem~\ref{OMLatGalDagKerCatThm} and biproducts $\oplus$ from
Proposition~\ref{BiprodProp} also exist in \Cat{OMSupGal}.  For
instance, the join of a subset $U\subseteq X\oplus Y$ is given as pair
of joins:
$$\begin{array}{rcl}
\bigvee U
& = &
(\bigvee\set{x}{\ex{y}{(x,y)\in U}}, \bigvee\set{y}{\ex{x}{(x,y)\in U}}).
\end{array}$$

\noindent Hence \Cat{OMSupGal} is also a dagger kernel category
with dagger biproducts.


\section{Foulis semigroups and dagger kernel categories}\label{FoulisDagKerSec}

In this section we shall relate dagger kernel categories and Foulis
semigroups. Without a definition yet, we first illustrate that these
Foulis semigroups arise quite naturally in the context of kernel
dagger categories.

In every category \Cat{D} the homset $\EndoHom{X} = \Cat{D}(X, X)$ of
endomaps $f\colon X\rightarrow X$ is a monoid (or semigroup with
unit), with obvious composition operation $\after$ and identity map
$\idmap[X]$ as unit element. If \Cat{D} is a dagger category, there is
automatically an involution $(-)^\dag$ on this monoid. If it is
moreover a dagger kernel category, every endomap $s\in\EndoHom{X}$
yields a self-adjoint idempotent, namely the effect of its kernel:
\begin{equation}
\label{DagKerSaiEqn}
\begin{array}{rcccl}
\sai{s}
& \smash{\stackrel{\textrm{def}}{=}} &
\effect{\ker(s)}
& = &
\ker(s) \after \ker(s)^{\dag} \;\colon\; X\longrightarrow X
\end{array}
\end{equation}

\noindent with the special property that for $t\in\EndoHom{X}$,
$$\begin{array}{rcl}
s\after t = 0
& \Longleftrightarrow &
\exin{r}{\EndoHom{X}}{t = \sai{s} \after r}.
\end{array}$$

\noindent Indeed, if $t=\sai{s}\after r$, then:
$$s \after t
=
s \after \ker(s) \after \ker(s)^{\dag} \after r
=
0 \after \ker(s)^{\dag} \after r 
=
0.$$

\noindent Conversely, if $s\after t = 0$, then there is a map
$f$ in \Cat{D} with $\ker(s) \after f = t$. Hence $t$ satisfies:
$$\sai{s} \after t
=
\ker(s) \after \ker(s)^{\dag} \after \ker(s) \after f
=
\ker(s) \after f
=
t.$$

This structure of an involutive monoid $\langle\EndoHom{X}, \after,
\idmap, \dag\rangle$ with such an operation $\sai{-}\colon \EndoHom{X}
\rightarrow \EndoHom{X}$ has been introduced in the 1960s by
Foulis~\cite{Foulis60,Foulis62,Foulis63} and has since then been
studied under the name `Baer *-semigroup' or `Foulis semigroup',
see~\cite[Chapter~5, \S\S18]{Kalmbach83} for a brief overview.

\begin{definition}
\label{FoulisDef}
A Foulis semigroup consists of a monoid (semigroup with unit) $(S,
\cdot, 1)$ together with two endomaps $(-)^{\dag} \colon S\rightarrow S$
and $\sai{-} \colon S\rightarrow S$ satisfying:
\begin{enumerate}
\item $1^{\dag} = 1$ and $(s\cdot t)^{\dag} = t^{\dag}\cdot s^{\dag}$
  and $s^{\dag\dag} = s$, making $S$ an involutive monoid;

\item $\sai{s}$ is a self-adjoint idempotent, \textit{i.e.}~satisfies
  $\sai{s} \cdot \sai{s} = \sai{s} = \sai{s}^{\dag}$;

\item $0 \;\smash{\stackrel{\textrm{def}}{=}}\; \sai{1}$ is a zero
element: $0 \cdot s = 0 = s\cdot 0$;

\item $s\cdot x = 0$ iff $\ex{y}{x = \sai{s}\cdot y}$.

\end{enumerate}

\noindent Or, equivalently (see~\cite[Chapter~5, \S\S18,
Lemma~1]{Kalmbach83}),
\begin{enumerate}
\item[4$'\!$.] $\sai{0} = 1$ and $s \cdot \sai{s} = 0$ and 
$t = \sai{\sai{t^{\dag} \cdot s^{\dag}}\cdot s}\cdot t$.
\end{enumerate}

We form a category \Cat{Fsg} of such Foulis semigroups with monoid
homomorphisms that commute with $\dag$ and $\sai{-}$ as morphisms.
\end{definition}

\auxproof{
We show the equivalence of $(4)$ and $(4')$.

Assume $(4)$. Since $0\cdot 1 = 0$ we have $1 = \sai{0}\cdot y$ for
some $y\in S$. Hence $\sai{0} = \sai{0}\cdot 1 = \sai{0} \cdot \sai{0} \cdot y
= \sai{0} \cdot y = 1$. 

Next, $\sai{s} \cdot 1 = \sai{s}$, so that $\ex{y}{\sai{s} = \sai{s}\cdot y}$
and thus $s\cdot \sai{s} = 0$.

Finally, in order to prove $t = \sai{\sai{t^{\dag} \cdot s^{\dag}}\cdot
s}\cdot t$ we first note:
$$\begin{array}{rcl}
\sai{t^{\dag}\cdot s^{\dag}} \cdot s\cdot t
& = &
\big(\sai{(s\cdot t)^{\dag}} \cdot s\cdot t\big)^{\dag\dag} \\
& = &
\big((s\cdot t)^{\dag} \cdot \sai{(s\cdot t)^{\dag}}^{\dag}\big)^{\dag} \\
& = &
\big((s\cdot t)^{\dag} \cdot \sai{(s\cdot t)^{\dag}}\big)^{\dag} \\
& = &
0^{\dag} \\
& = &
0.
\end{array}$$

\noindent Hence $t = \sai{\sai{t^{\dag}\cdot s^{\dag}} \cdot s}\cdot y$,
for some $y\in S$, and thus:
$$\sai{\sai{t^{\dag}\cdot s^{\dag}} \cdot s}\cdot t
=
\sai{\sai{t^{\dag}\cdot s^{\dag}} \cdot s}\cdot 
   \sai{\sai{t^{\dag}\cdot s^{\dag}} \cdot s}\cdot y
=
\sai{\sai{t^{\dag}\cdot s^{\dag}} \cdot s}\cdot y
=
t.$$

Conversely, assume $(4)$. If $s\cdot x = 0$, then 
$$\begin{array}{rcl}
x
& = &
\sai{\sai{x^{\dag} \cdot s^{\dag}}\cdot s}\cdot x \\
&^ = &
\sai{\sai{(s\cdot x)^{\dag}}\cdot s}\cdot x \\
& = &
\sai{\sai{0^{\dag}}\cdot s}\cdot x \\
& = &
\sai{\sai{0}\cdot s}\cdot x \\
& = &
\sai{1\cdot s}\cdot x \\
& = &
\sai{s}\cdot x.
\end{array}$$

\noindent And if $x = \sai{s}\cdot y$, then $s\cdot x = s\cdot \sai{s} \cdot y
= 0\cdot y = 0$.
}

The constructions before this definition show that for each object
$X\in\Cat{D}$ of a (locally small) dagger kernel category \Cat{D}, the
homset $\EndoHom{X} = \Cat{D}(X, X)$ of endomaps on $X$ is a Foulis
semigroup. Functoriality of this construction is problematic: for an
arbitrary map $f\colon X\rightarrow Y$ in \Cat{D} there is a mapping
$\EndoHom{X} \rightarrow \EndoHom{Y}$, namely $s\mapsto f \after s
\after f^{\dag} \colon Y\rightarrow X\rightarrow X\rightarrow
Y$, but it does not preserve the structure of Foulis semigroups,
and thus only gives rise to presheaf.

\begin{proposition}
\label{DagKer2FoulisProp}
For a dagger kernel category \Cat{D}, each endo homset $\EndoHom{X}$,
for $X\in\Cat{D}$, is a Foulis semigroup. The mapping $X\mapsto
\EndoHom{X}$ yields a presheaf $\Cat{D} \rightarrow \Sets$. \QED
\end{proposition}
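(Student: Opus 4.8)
The plan is to collect the observations already assembled in the discussion preceding Definition~\ref{FoulisDef} and check them one by one against the four axioms, and then separately to verify that the endo-homset assignment is functorial into \Sets.

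For the Foulis semigroup structure on $\EndoHom{X}$, I would take composition $\after$ as the monoid multiplication with $\idmap[X]$ as unit, the dagger as the involution $(-)^{\dag}$, and $\sai{s} = \effect{\ker(s)} = \ker(s)\after\ker(s)^{\dag}$ as in~\eqref{DagKerSaiEqn}. Axiom~(1) is immediate from the dagger functor: $\idmap^{\dag}=\idmap$, $(t\after s)^{\dag}=s^{\dag}\after t^{\dag}$, and $s^{\dag\dag}=s$. Axiom~(2) follows because $\sai{s}$ is the effect of a kernel: self-adjointness is $(\ker(s)\after\ker(s)^{\dag})^{\dag}=\ker(s)\after\ker(s)^{\dag}$, and idempotence uses that $\ker(s)$ is a dagger mono, so $\ker(s)^{\dag}\after\ker(s)=\idmap$ and hence $\sai{s}\after\sai{s}=\ker(s)\after\idmap\after\ker(s)^{\dag}=\sai{s}$. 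For Axiom~(3), the kernel of $\idmap[X]$ is the zero subobject $0\rightarrowtail X$, so that $\sai{1}=\effect{\ker(\idmap)}=0$, the zero endomap, which is absorbing for composition. Axiom~(4) is precisely the biconditional $s\after t=0 \iff \exin{r}{\EndoHom{X}}{t=\sai{s}\after r}$ already established before the definition.

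For the presheaf I would send a map $f\colon X\rightarrow Y$ to the set function $\EndoHom{f}\colon\EndoHom{X}\rightarrow\EndoHom{Y}$ given by $s\mapsto f\after s\after f^{\dag}$, as indicated in the text. Functoriality is a short dagger computation: $\EndoHom{\idmap}(s)=\idmap\after s\after\idmap^{\dag}=s$, and for $g\colon Y\rightarrow Z$ one has $\EndoHom{g}(\EndoHom{f}(s))=g\after f\after s\after f^{\dag}\after g^{\dag}=(g\after f)\after s\after (g\after f)^{\dag}=\EndoHom{g\after f}(s)$, using $(g\after f)^{\dag}=f^{\dag}\after g^{\dag}$. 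This yields the desired \Sets-valued functor; it lands only in \Sets and not in \Cat{Fsg} because $\EndoHom{f}$ need neither be a monoid homomorphism nor commute with $\sai{-}$, as already noted.

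The only mildly delicate ingredient is Axiom~(4), but this has been proved in full in the paragraph introducing $\sai{-}$, and the remaining verifications are one-line consequences of the dagger-kernel axioms. I therefore expect no genuine obstacle: the proof is essentially a bookkeeping assembly of facts established above together with the routine functoriality check.
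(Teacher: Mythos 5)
Your proposal is correct and follows essentially the same route as the paper: the paper's own ``proof'' is precisely the discussion preceding Definition~\ref{FoulisDef} (monoid and involution structure, $\sai{s}=\effect{\ker(s)}$ being a self-adjoint idempotent, the factorisation biconditional for axiom~(4)), together with the observation that $s\mapsto f\after s\after f^{\dag}$ is functorial into \Sets{} but fails to preserve the Foulis structure. Your explicit checks of $\sai{1}=0$ being absorbing and of functoriality are exactly the routine verifications the paper leaves implicit.
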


The lack of functoriality in this construction is problematic. One
possible way to address it is via another notion of morphism between
Foulis semigroups, like Galois connections between orthomodular
lattices in the category \Cat{OMLatGal}. We shall not go deeper into
this issue. Also the possible sheaf-theoretic aspects involved in this
situation (see also~\cite{GravesS73}) form a topic on its own that is
not pursued here.  We briefly consider some examples.

For the dagger kernel category \Hilb of Hilbert spaces, the set ${\cal
  B}(H)$ of (bounded/continuous linear) endomaps on a Hilbert space
$H$ forms a Foulis semigroup---but of course also a $C^*$-algebra. The
associated (Foulis) map $\sai{-}\colon {\cal B}(H)\rightarrow
{\cal B}(H)$ maps $s\colon H\rightarrow H$ to $\sai{s}\colon H\rightarrow
H$ given by $\sai{s}(x) = k(k^{\dag}(x))$, where $k$ is the kernel map
$\set{x}{s(x)=0} \hookrightarrow H$.


For the category \Rel of sets and relations the endomaps on a set $X$
are the relations $R\subseteq X\times X$ on $X$. The associated
$\sai{R} \subseteq X\times X$ is $\set{(x,x)}{\neg\ex{y}{R(x,y)}}$.

An interesting situation arises when we apply the previous proposition
to the dagger kernel category \Cat{OMLatGal} of orthomodular lattices
(with Galois connections between them). One gets that for each
orthomodular lattice $X$ the endo-homset $\EndoHom{X} =
\Cat{OMLatGal}(X, X)$ forms a Foulis semigroup. This construction is
more than 40 years old, see~\cite{Foulis60} or
\textit{e.g.}~\cite[Chapter~II, Section~19]{BlythJ72}
or~\cite[Chapter~5, \S\S18]{Kalmbach83}, where it is described in
terms of Galois connections.  In the present setting it comes for
free, from the structure of the category \Cat{OMLatGal}. Hence we
present it as a corollary, in particular of
Proposition~\ref{DagKer2FoulisProp} and
Theorem~\ref{OMLatGalDagKerCatThm}.

\begin{corollary}
\label{OMLat2FoulisCor}
For each orthomodular lattice $X$ the set of (Galois) endomaps
$\EndoHom{X} = \Cat{OMLatGal}(X,X)$ is a Foulis semigroup with
composition as monoid, dagger $(-)^{\dag}$ as involution, and
self-adjoint idempotent $\sai{s}\colon X\rightarrow X$, for $s\colon
X\rightarrow X$, defined as in~(\ref{DagKerSaiEqn}). Equivalently,
$\sai{s}$ can be described via the Sasaki hook $\sasaki$ or and-then
operator $\andthen$:
$$\sai{s}_{*}(x) 
= 
\sai{s}^{*}(x) 
=
s^{*}(1) \sasaki x^{\perp}
=
s^{*}(1)^{\perp} \disjun (s^{*}(1) \conjun x^{\perp})
=
(x\andthen s^{*}(1))^{\perp}.$$
\end{corollary}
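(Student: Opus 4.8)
The plan is to obtain the Foulis semigroup structure for free and then spend the real effort on the explicit formula for $\sai{s}$. That $\EndoHom{X} = \Cat{OMLatGal}(X,X)$ is a Foulis semigroup---with composition as monoid, the dagger as involution, and $\sai{-}$ defined by~(\ref{DagKerSaiEqn})---is an immediate instance of Proposition~\ref{DagKer2FoulisProp}, once we invoke Theorem~\ref{OMLatGalDagKerCatThm} to know that \Cat{OMLatGal} is a dagger kernel category. So nothing needs checking there; the content of the corollary is the concrete description of $\sai{s}$ in lattice-theoretic terms.

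First I would unfold the definition $\sai{s} = \effect{\ker(s)} = \ker(s) \after \ker(s)^{\dag}$. By Theorem~\ref{OMLatGalDagKerCatThm} the kernel of $s\colon X\to X$ is the dagger mono $a\colon\! \downset a \rightarrowtail X$ with $a = s^{*}(1)$, whose two components are given by Lemma~\ref{DownsetLem}: $a_{*}(u) = u^{\perp}$ and $a^{*}(x) = a \conjun x^{\perp}$. Writing $a^{\dag} = (a^{*}, a_{*})$ and recalling that composition in \Cat{OMLatGal} is contravariant with the orthocomplement of the intermediate object inserted, both components of $\sai{s} = a \after a^{\dag}\colon X \to \downset a \to X$ reduce to the single expression $a_{*}\big(a^{*}(x)^{\perp_{a}}\big)$, where $\perp_{a}$ is the downset orthocomplement $u^{\perp_{a}} = a \conjun u^{\perp}$. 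That the $_{*}$- and $^{*}$-components coincide re-proves, concretely, the self-adjointness of $\sai{s}$ that axiom~(2) of Definition~\ref{FoulisDef} guarantees abstractly.

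The remaining step is pure lattice algebra. Substituting the formulas for $a_{*}$, $a^{*}$ and $\perp_{a}$, the expression $a_{*}\big(a^{*}(x)^{\perp_{a}}\big)$ becomes $\big(a \conjun (a \conjun x^{\perp})^{\perp}\big)^{\perp}$, which collapses by De Morgan and double negation (no orthomodularity is even needed here) to $a^{\perp} \disjun (a \conjun x^{\perp})$. With $a = s^{*}(1)$ this is exactly the claimed $s^{*}(1)^{\perp} \disjun (s^{*}(1) \conjun x^{\perp})$. Finally I would identify this with the two operator-theoretic forms: it is literally $s^{*}(1) \sasaki x^{\perp}$ by the definition $m \sasaki n = m^{\perp} \disjun (m \conjun n)$ recorded in Section~\ref{DagKerSec}, and it is $(x \andthen s^{*}(1))^{\perp}$ by applying De Morgan to the and-then formula $k \andthen m = m \conjun (m^{\perp} \disjun k)$ from the same place.

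The calculation is entirely routine; the only thing to watch is the bookkeeping of the two orthocomplements ($\perp$ in $X$ versus $\perp_{a}$ inside $\downset a$) together with the order reversal built into the composition rule of \Cat{OMLatGal}. Getting these conventions aligned is the sole place an error could creep in, so that is where I would take most care.
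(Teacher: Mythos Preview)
Your proposal is correct and follows essentially the same route as the paper: invoke Proposition~\ref{DagKer2FoulisProp} together with Theorem~\ref{OMLatGalDagKerCatThm} for the Foulis semigroup structure, then unfold $\sai{s} = \ker(s)\after\ker(s)^{\dag}$ with $\ker(s) = (\downset s^{*}(1) \rightarrowtail X)$ and compute using the formulas of Lemma~\ref{DownsetLem}, arriving at $(a \conjun (a\conjun x^{\perp})^{\perp})^{\perp} = a^{\perp}\disjun(a\conjun x^{\perp})$ with $a = s^{*}(1)$. The only cosmetic difference is that the paper cites Proposition~\ref{DownsetIsKerProp} rather than Theorem~\ref{OMLatGalDagKerCatThm} for the identification of the kernel, and it does not pause to remark that the De~Morgan step needs no orthomodularity---but the computation is line-for-line the same.
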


\begin{proof}
We recall from~(\ref{DagKerSaiEqn}) that the operation $\sai{-}$ on
endomaps $s\colon X\rightarrow X$ is defined as $\sai{s} = \ker(s)
\after \ker(s)^{\dag} \colon X\rightarrow X$. In \Cat{OMLatGal} one
has $\ker(s) = s^{*}(1)$---see Proposition~\ref{DownsetIsKerProp}---so
that:
$$\begin{array}[b]{rcl}
\sai{s}_{*}(x)
& = &
(\ker(s) \after \ker(s)^{\dag})_{*}(x) \\
& = &
(\ker(s)_{*}(\ker(s)^{*}(x)^{\perp}) \\
& = &
s^{*}(1)_{*}\big(s^{*}(1) \conjun s^{*}(1)^{*}(x)^{\perp}\big) \\
& = &
\big(s^{*}(1) \conjun (s^{*}(1) \conjun x^{\perp})^{\perp}\big)^{\perp}
   \qquad\mbox{by Lemma~\ref{DownsetLem}} \\
&= &
s^{*}(1)^{\perp} \disjun (s^{*}(1) \conjun x^{\perp}).
\end{array}\eqno{\QEDbox}$$
\end{proof}

\auxproof{
Old, direct proof.

Obviously $(\idmap[X])^{\dag} = \idmap[X]$ and $(t\after s)^{\dag} =
s^{\dag} \after t^{\dag}$ and $s^{\dag\dag} = s$, for endomaps
$s,t\colon X\rightarrow X$ in \Cat{OMLatGal}. Also, $\sai{s}$ as defined
above is a morphism in \Cat{OMLatGal}, since for $x,y\in X$,
$$\begin{prooftree}
y \;\leq\; \sai{s}_{*}(x) \rlap{$\;= s^{*}(1) \sasaki x^{\perp}$}
\Justifies
\begin{prooftree}
y \andthen s^{*}(1) \;\leq\; x^{\perp}
\Justifies
x \;\leq\; (y\andthen s^{*}(1))^{\perp} \rlap{$\;= \sai{s}^{*}(y)$}
\end{prooftree}
\end{prooftree}$$

\noindent These $\sai{s}$'s are indeed projections: by construction,
$\sai{s}^{\dag} = \sai{s}$ and:
$$\begin{array}{rcl}
(\sai{s}_{*} \after \sai{s}_{*})(x)
& = &
\sai{s}_{*}\big(\sai{s}_{*}(x)^{\perp}\big) \\
& = &
s^{*}(1) \sasaki \sai{s}_{*}(x) \\
& = &
s^{*}(1) \sasaki (s^{*}(1) \sasaki x^{\perp}) \\
& = &
s^{*}(1) \sasaki x^{\perp} \\
& = &
\sai{s}_{*}(x).
\end{array}$$

\noindent Further,
$$\sai{\idmap[X]}_{*}(x)
=
\idmap[X](1) \sasaki x^{\perp}
=
1^{\perp} \sasaki x^{\perp}
=
0 \sasaki x^{\perp}
=
1.$$

\noindent Hence $\sai{\idmap[X]}$ is the zero map $0\colon X\rightarrow
X$, satisfying $0 \after s = 0 = s \after 0$ and $0^{\dag} = 0$, as we
have seen before. Also, this zero map satisfies $\sai{0} = \idmap[X]$
since: $\sai{0}_{*}(x) = 1 \sasaki x^{\perp} = x^{\perp} =
(\idmap[X])_{*}(x)$. From $x \andthen s^{*}(1) \leq s^{*}(1)$ we get,
via the Galois connection, $1 \leq s_{*}(x \andthen s^{*}(1)) =
s_{*}(\sai{s}_{*}(x)^{\perp}) = (s \after \sai{s})_{*}(x)$. Hence $s \after
\sai{s} = 0$. We immediately use this in:
$$\sai{t^{\dag} \after s^{\dag}} \after s \after t
=
\sai{(s\after t)^{\dag}}^{\dag} \after (s\after t)^{\dag\dag}
=
(s\after t)^{\dag} \after \sai{(s\after t)^{\dag}}
=
0.$$

\noindent If we put $r = \sai{t^{\dag} \after s^{\dag}} \after s$ then we
have just seen that $r\after t = 0$. We have to prove $\sai{r} \after t =
t$. As first step, observe that for any $x\in X$,
$$1
=
0_{*}(x)
=
(r \after t)_{*}(x)
=
r_{*}(t_{*}(x)^{\perp}).$$

\noindent By the adjunction: $t_{*}(x)^{\perp} \leq r^{*}(1)$ and
thus $r^{*}(1)^{\perp} \leq t_{*}(x)$. Now we are done by
orthomodularity:
$$t_{*}(x)
=
r^{*}(1)^{\perp} \disjun (r^{*}(1) \conjun t_{*}(x))
=
\sai{r}_{*}(t_{*}(x)^{\perp})
=
(\sai{r} \after t)_{*}(x).\eqno{\QEDbox}$$
}

\subsection{From Foulis semigroups to dagger kernel categories}

Each involutive monoid $(S, \cdot, 1, \dag)$ forms a dagger category
with one object, and morphisms given by elements of
$S$. Requirement~(4) in Definition~\ref{FoulisDef} says that this
category has ``semi'' kernels, given by $\sai{-}$. Hence it is natural
to apply the Karoubi envelope to obtain proper kernels. It turns out
that this indeed yields a dagger kernel category.

For a Foulis semigroup as in Definition~\ref{FoulisDef}, we thus write
$\dagKaroubi{S}$ for the dagger Karoubi envelope applied to $S$ as
one-object dagger category. Thus $\dagKaroubi{S}$ has self-adjoint
idempotents $s\in S$ as objects, and morphisms $f\colon s\rightarrow
t$ given by elements $f\in S$ with $f\cdot s = f = t\cdot f$.

\begin{theorem}
\label{Foulis2DagKerThm}
This $\dagKaroubi{S}$ is a dagger kernel category. The mapping
$S\mapsto \dagKaroubi{S}$ yields a functor $\Cat{Fsg} \rightarrow
\Cat{DKC}$.
\end{theorem}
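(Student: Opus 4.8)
The plan is to check the two ingredients that $\dagKaroubi{S}$ lacks \emph{a priori}---a zero object and dagger kernels---since $\dagKaroubi{S}$ is already a dagger category by the general dagger Karoubi construction recalled above, applied to $S$ viewed as a one-object dagger category. Note that Proposition~\ref{KaroubiKernelProp} does \emph{not} apply here: the one-object category $S$ has only the ``semi-kernels'' provided by $\sai{-}$, not genuine kernels, so the construction must use the Foulis operation directly. For the zero object I take $0 = \sai{1}$ of Definition~\ref{FoulisDef}(3). Since $0\cdot x = 0 = x\cdot 0$, the domain/codomain conditions force the only morphism $\sai{1}\to t$ and the only morphism $t\to\sai{1}$ to be $0$ itself, so $\sai{1}$ is both initial and terminal, and the zero map $s\to t$ is the element $0\in S$.

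For kernels I would first treat a morphism out of the monoidal unit, $f\colon 1\to t$. The candidate is the split mono $\sai{f}\colon\sai{f}\to 1$, i.e.\ the element $\sai{f}$ regarded as a morphism from the object $\sai{f}$ to the object $1$. It is a dagger mono by Definition~\ref{FoulisDef}(2), since $\sai{f}^{\dag}\cdot\sai{f}=\sai{f}=\idmap[\sai f]$; it satisfies $f\cdot\sai{f}=0$ by axiom $(4')$; and its universal property is precisely Definition~\ref{FoulisDef}(4), because any $g$ with $f\cdot g=0$ factors as $g=\sai{f}\cdot y$, whence $\sai{f}\cdot g=g$ and $g$ is itself the (unique, as $\sai{f}$ is monic) mediating map. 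For a general $f\colon s\to t$ the kernel must be a subobject of the domain $s$; unravelling the universal property shows that the kernel object is necessarily the meet $e=s\conjun\sai{f}$ in the poset of self-adjoint idempotents of $S$, with kernel map the element $e\colon e\to s$ (a legitimate dagger mono because $e\leq s$). Indeed $f\cdot e=0$ since $e\leq\sai{f}$, and if $g\colon z\to s$ satisfies $f\cdot g=0$ then axiom~(4) gives $\sai{f}\cdot g=g$ while $s\cdot g=g$ holds because $g$ is a morphism into $s$; as $e$ is the greatest idempotent below both $s$ and $\sai{f}$, this forces $e\cdot g=g$, so $g$ factors uniquely through $e$.

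The one genuinely hard point is the existence of the meet $e=s\conjun\sai{f}$. The object $s$ is an \emph{arbitrary} self-adjoint idempotent, not necessarily ``closed'' (of the form $\sai{x}$), so there is no naive one-line formula: products such as $s\cdot\sai{f}$ or $\sai{f}\cdot s\cdot\sai{f}$ fail to be idempotent. This is exactly where the full strength of the Foulis (Baer $*$-semigroup) axioms is needed, and it amounts to Foulis' classical theorem that the projections of such a semigroup form an orthomodular lattice, so that the required greatest lower bounds---together with the left/right support projections used above to turn ``$s\cdot g=g$ and $\sai{f}\cdot g=g$'' into ``$e\cdot g=g$''---do exist. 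I expect this to be the main obstacle: one must produce $e$ explicitly from $\cdot$, $\dag$ and $\sai{-}$ and verify that it is the glb.

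Finally, for the functor $\Cat{Fsg}\to\Cat{DKC}$ I send a Foulis-semigroup homomorphism $\phi\colon S\to S'$ to the functor $\dagKaroubi{\phi}$ acting as $\phi$ on both objects and morphisms. It is well defined and functorial because $\phi$ preserves $\cdot$ and $1$ (so $\phi(s)$ is again a self-adjoint idempotent and composites and identities are respected), and it lands in $\Cat{DKC}$ because $\phi$ preserves $\dag$, sends the zero object $\sai{1}$ to $\sai{\phi(1)}=\sai{1}$, and preserves $\sai{-}$. Since the kernel constructed above is assembled entirely from $\cdot$, $\dag$ and $\sai{-}$---the meet being built from these via the orthomodular-lattice operations on projections, which a $\Cat{Fsg}$-morphism preserves---the functor $\dagKaroubi{\phi}$ carries the kernel of $f$ to the kernel of $\phi(f)$, so it preserves $0$, $\dag$ and $\ker$. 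Functoriality of $S\mapsto\dagKaroubi{S}$ in $S$ (identities and composites) is then immediate.
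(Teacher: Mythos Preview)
Your overall architecture is right---zero object $\sai{1}$, then dagger kernels built from $\sai{-}$, then functoriality---and it matches the paper's. The genuine gap is in your kernel construction for general $f\colon s\to t$. You assert that ``products such as $s\cdot\sai{f}$ \ldots\ fail to be idempotent'' and therefore reach outside for Foulis' orthomodular-lattice theorem to produce the meet $s\conjun\sai{f}$. This is the step you have wrong. The constraint $f\cdot s = f$ (which holds because $f$ is a morphism with domain $s$ in $\dagKaroubi{S}$) gives
\[
f\cdot(s\cdot\sai{f}) \;=\; (f\cdot s)\cdot\sai{f} \;=\; f\cdot\sai{f} \;=\; 0,
\]
so by axiom~(4) there is $y$ with $s\cdot\sai{f}=\sai{f}\cdot y$, whence $\sai{f}\cdot s\cdot\sai{f}=\sai{f}\cdot\sai{f}\cdot y=\sai{f}\cdot y=s\cdot\sai{f}$. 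This single identity makes $s\cdot\sai{f}$ self-adjoint and idempotent, and it \emph{is} your meet $s\conjun\sai{f}$: clearly $s\cdot\sai{f}\leq s$ and $s\cdot\sai{f}\leq\sai{f}$, and if $p\leq s,\sai{f}$ then $(s\cdot\sai{f})\cdot p=s\cdot p=p$. So no external lattice theorem is needed; the paper takes $s\cdot\sai{f}$ as the kernel object directly.

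Your universal-property argument also has a small hole that the explicit formula closes. From $f\cdot g=0$ you get $\sai{f}\cdot g=g$ by axiom~(4), and $s\cdot g=g$ by domain, hence $(s\cdot\sai{f})\cdot g=s\cdot g=g$ immediately. Your abstract claim ``$e$ is the glb, so $e\cdot g=g$'' does not follow from the meet property alone (the meet is characterised against \emph{idempotents} below $s$ and $\sai{f}$, not arbitrary elements absorbed by them); you yourself flag this by invoking unspecified ``support projections''. With $e=s\cdot\sai{f}$ that detour disappears. In short: drop the appeal to the classical lattice result and use $s\cdot\sai{f}$ explicitly, exploiting $f\cdot s=f$; the rest of your outline (zero object and functoriality) is fine and agrees with the paper.
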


\begin{proof}
The zero element $0 = \sai{1} \in S$ is obviously a self-adjoint
idempotent, and thus an object of $\dagKaroubi{S}$. It is a zero
object because for each $s\in\dagKaroubi{S}$ there is precisely one
map $f\colon s\rightarrow 0$, namely $0$, because $f = 0\cdot f = 0$.

For an arbitrary map $f\colon s\rightarrow t$ in $\dagKaroubi{S}$
we claim that there is a dagger kernel of the form:
$$\xymatrix{
s\cdot \sai{f} \ar@{ |>->}[rr]^-{s\cdot \sai{f}} & & s\ar[rr]^-{f} & & t
}$$

\noindent This will be checked in a number of small steps.
\begin{itemize}
\item $f\cdot (s\cdot \sai{f}) = (f\cdot s) \cdot \sai{f} = f\cdot \sai{f} = 0$,
by~$(4')$ in Defintion~\ref{FoulisDef};

\item By the previous point there is an element $y\in S$ with
$s\cdot \sai{f} = \sai{f}\cdot y$. Hence:
$$\sai{f}\cdot s\cdot \sai{f}
=
\sai{f}\cdot \sai{f} \cdot y
=
\sai{f}\cdot y
=
s\cdot \sai{f}$$

\noindent This is equation is very useful. It yields first of all that
$s\cdot \sai{f}$ is idempotent:
$$(s\cdot \sai{f})\cdot (s\cdot \sai{f})
=
s\cdot (\sai{f}\cdot s\cdot \sai{f})
=
s\cdot s\cdot \sai{f}
=
s\cdot \sai{f}.$$

\noindent This element is also self-adjoint:
$$(s\cdot \sai{f})^{\dag}
=
(\sai{f}\cdot s\cdot \sai{f})^{\dag} 
=
\sai{f}^{\dag} \cdot s^{\dag} \cdot \sai{f}^{\dag} 
=
\sai{f}\cdot s\cdot \sai{f} 
=
s\cdot \sai{f}.$$

\noindent Hence $s\cdot \sai{f}\in S$ is a self-adjoint idempotent, and
thus an object of $\dagKaroubi{S}$.

\item $s\cdot \sai{f}\colon (s\cdot \sai{f})\rightarrow s$ is also a dagger mono:
$$\begin{array}{rcl}
(s\cdot \sai{f})^{\dag} \cdot (s\cdot \sai{f})
& = &
\sai{f}^{\dag} \cdot s^{\dag} \cdot s \cdot \sai{f} \\
& = &
\sai{f} \cdot s \cdot s \cdot \sai{f} \\
& = &
\sai{f} \cdot s \cdot \sai{f} \\
& = &
s\cdot \sai{f} \\
& = &
\idmap[s\cdot \sai{f}].
\end{array}$$

\item Finally, if $g\colon r\rightarrow s$ in $\dagKaroubi{S}$
  satisfies $f \after g = f\cdot g = 0$, then there is a $y\in S$
  with $g = \sai{f}\cdot y$. Then:
$$s\cdot \sai{f}\cdot g
=
s\cdot \sai{f} \cdot \sai{f} \cdot y
=
s\cdot \sai{f} \cdot y
=
s\cdot g
= 
g.$$

\noindent Hence $g$ is the mediating map $r\rightarrow (s\cdot \sai{f})$,
since $(s\cdot \sai{f})\cdot g = g$. Uniqueness follows because $s\cdot
\sai{f}$ is a dagger mono.
\end{itemize}

As to functoriality, assume $h\colon S\rightarrow T$ is a morphism of
Foulis semigroups. It yields a functor $H\colon \dagKaroubi{S}
\rightarrow \dagKaroubi{T}$ by $s \mapsto h(s)$ and $f\mapsto h(f)$.
This $H$ preserves all the dagger kernel structure because it
preserves the Foulis semigroup structure. \QED
\end{proof}

By combining this result with Proposition~\ref{DagKer2FoulisProp}
we have a way of producing new Foulis semigroups from old.

\begin{corollary}
\label{FoulisEndoCor}
Each self-adjoint idempotent $s\in S$ in a Foulis semigroup $S$ yields
a Foulis semigroup of endo-maps:
$$\EndoHom{s}
=
\dagKaroubi{S}(s, s)
=
\setin{t}{S}{s\cdot t = t = t\cdot s},$$

\noindent with composition $\cdot$, unit $s$, involution $\dag$ and
$\sai{t}_{s} \,\smash{\stackrel{\textrm{def}}{=}}\, s\cdot
\sai{t}\cdot s$. The special case $s=1$ yields the original semigroup:
$\EndoHom{1} = S$.
\end{corollary}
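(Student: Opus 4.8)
The plan is to read the statement off directly from the two quoted results. By Theorem~\ref{Foulis2DagKerThm} the dagger Karoubi envelope $\dagKaroubi{S}$ is a dagger kernel category, and because $s$ is a self-adjoint idempotent it is by definition one of its objects. Proposition~\ref{DagKer2FoulisProp} then applies to the object $s\in\dagKaroubi{S}$ and hands us, for free, a Foulis semigroup structure on the endo-homset $\EndoHom{s} = \dagKaroubi{S}(s,s)$. All that is left is to check that the structure produced by that proposition is the one described in the corollary.

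The monoid part needs no work: composition in $\dagKaroubi{S}$ is the product $\cdot$ of $S$, the identity morphism on the object $s$ is the element $s$ itself (so $s$ is the unit of $\EndoHom{s}$), and the involution is the restriction of $(-)^{\dag}$. The only point that requires a small calculation is the Foulis operation. By definition~(\ref{DagKerSaiEqn}) it is $\sai{t}_{s} = \effect{\ker(t)} = \ker(t) \after \ker(t)^{\dag}$, computed inside $\dagKaroubi{S}$. The kernel of an endomap $t\colon s\rightarrow s$ is supplied by Theorem~\ref{Foulis2DagKerThm}: it is the element $s\cdot\sai{t}$, where $\sai{-}$ is the operation of the ambient semigroup $S$. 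Using the anti-involution law $(s\cdot\sai{t})^{\dag} = \sai{t}^{\dag}\cdot s^{\dag} = \sai{t}\cdot s$ together with idempotency of $\sai{t}$ in $S$, I would compute
$$\sai{t}_{s} = (s\cdot\sai{t})\cdot(s\cdot\sai{t})^{\dag} = s\cdot\sai{t}\cdot\sai{t}\cdot s = s\cdot\sai{t}\cdot s,$$
which is exactly the claimed formula (and which visibly lies in $\EndoHom{s}$, being sandwiched by $s$ on both sides, so that $s\cdot(s\cdot\sai{t}\cdot s) = s\cdot\sai{t}\cdot s = (s\cdot\sai{t}\cdot s)\cdot s$).

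Finally, the special case $s=1$ is immediate: $\EndoHom{1} = \setin{t}{S}{1\cdot t = t = t\cdot 1} = S$, the unit becomes $1$, the involution is unchanged, and $\sai{t}_{1} = 1\cdot\sai{t}\cdot 1 = \sai{t}$, so the original Foulis semigroup is recovered exactly. The proof carries essentially no difficulty of its own; the one place that calls for care is matching the abstract operation $\effect{\ker(t)}$ delivered by Proposition~\ref{DagKer2FoulisProp} with the concrete expression $s\cdot\sai{t}\cdot s$, which is precisely the short computation displayed above.
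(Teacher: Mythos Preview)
Your proof is correct and follows exactly the paper's approach: invoke Theorem~\ref{Foulis2DagKerThm} and Proposition~\ref{DagKer2FoulisProp}, then verify the Foulis operation via the computation $\sai{t}_{s} = \ker(t)\cdot\ker(t)^{\dag} = (s\cdot\sai{t})\cdot(s\cdot\sai{t})^{\dag} = s\cdot\sai{t}\cdot\sai{t}\cdot s = s\cdot\sai{t}\cdot s$. Your write-up is in fact slightly more detailed than the paper's, which records only this last chain of equalities.
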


\begin{proof}
We only check the formulation following~(\ref{DagKerSaiEqn}):
$$\sai{t}_{s}
=
\ker(t) \after \ker(t)^{\dag} 
=
s\cdot \sai{t} \cdot (s\cdot \sai{t})^{\dag}
=
s\cdot \sai{t} \cdot \sai{t} \cdot s
=
s\cdot \sai{t}\cdot s.\eqno{\QEDbox}$$
\end{proof}

The posets of kernel subobjects in a dagger kernel category are
orthomodular lattices. This applies in particular to the category
$\dagKaroubi{S}$ and yields a way to construct orthomodular lattices
out of Foulis semigroups. We first investigate this lattice structure
in more detail, via (isomorphic) subsets of $S$.

\begin{lemma}
\label{FoulisOMKerLem}
Let $S$ be a Foulis semigroup with self-adjoint idempotent $s\in S$,
considered as object $s\in\dagKaroubi{S}$. The subset
$$\begin{array}{rcccl}
K_{s}
& \smash{\stackrel{\textrm{def}}{=}} &
\set{s\cdot \sai{t\cdot s}}{t\in S} 
& \subseteq & 
S,
\end{array}$$

\noindent is an orthomodular lattice with the following structure.
$$\begin{array}{lrcl}
\mbox{Order} & k_{1}\leq k_{2} & \Leftrightarrow & k_{1} = k_{2}\cdot k_{1} \\
\mbox{Top} & 1_{s} & = & s \hspace*{\arraycolsep} = \hspace*{\arraycolsep}
   s\cdot \sai{s\cdot 0} \\
\mbox{Orthocomplement} & k^{\perp} & = & s\cdot \sai{k} \\
\mbox{Meet} & k_{1} \conjun k_{2} & = &  
   \big(k_{1} \cdot \sai{\sai{k_{2}}\cdot k_{1}}^{\perp\perp}.
\end{array}$$

\noindent In fact, $K_{s} \cong \KSub(s)$.
\end{lemma}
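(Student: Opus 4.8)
The plan is to exhibit an explicit order isomorphism $\Phi\colon K_{s}\to\KSub(s)$ and then transport the orthomodular structure along it, using that $\KSub(s)$ is an orthomodular lattice (as recalled in Section~\ref{DagKerSec} from~\cite{HeunenJ09a}). The starting observation is that each element $k = s\cdot\sai{t\cdot s}\in K_{s}$ is simultaneously an object and a kernel morphism of $\dagKaroubi{S}$: writing $f = t\cdot s$, one has $f\cdot s = f$, so $f\colon s\to 1$ is a morphism, and by Theorem~\ref{Foulis2DagKerThm} its dagger kernel is exactly $k = s\cdot\sai{f}\colon k\rightarrowtail s$. The same computations as in that proof show that $k$ is a self-adjoint idempotent with $s\cdot k = k = k\cdot s$. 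I would then define $\Phi(k)$ to be the kernel subobject of $s$ represented by the morphism $k\colon k\rightarrowtail s$.

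For surjectivity, every kernel subobject of $s$ is represented by $\ker(g)$ for some morphism $g\colon s\to t'$; since $g = g\cdot s$, taking $t = g$ yields $s\cdot\sai{g} = s\cdot\sai{t\cdot s}\in K_{s}$, which by Theorem~\ref{Foulis2DagKerThm} is the canonical representative of $\ker(g)$. For the order, I would prove that $k_{1}\leq k_{2}$ in $K_{s}$ (that is, $k_{1} = k_{2}\cdot k_{1}$) holds precisely when $\Phi(k_{1})\leq\Phi(k_{2})$ in $\KSub(s)$. In the forward direction $\varphi = k_{2}\cdot k_{1}$ is a morphism $k_{1}\to k_{2}$ (a one-line check using idempotency) that factors $k_{1}$ through $k_{2}$; conversely, a factorisation $k_{2}\cdot\varphi = k_{1}$ gives $k_{1} = k_{2}\cdot k_{1}$ after left-multiplying by the idempotent $k_{2}$. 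Antisymmetry of $\leq$ on $K_{s}$ then follows from the dagger: from $k_{1} = k_{2}\cdot k_{1}$ and $k_{2} = k_{1}\cdot k_{2}$ one gets, on applying $(-)^{\dag}$ and using $k_{i}^{\dag} = k_{i}$, that $k_{1} = k_{1}\cdot k_{2} = k_{2}$. Hence $\Phi$ is an order-preserving, order-reflecting bijection, i.e.\ an isomorphism of posets, and since $\KSub(s)$ is an orthomodular lattice so is $K_{s}$, with $K_{s}\cong\KSub(s)$.

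It remains to check that the transported operations are the ones displayed in the lemma. The top is the identity kernel $s\colon s\rightarrowtail s$, and $s = s\cdot\sai{0} = s\cdot\sai{s\cdot 0}$ using $\sai{0} = 1$ from~$(4')$. For the orthocomplement, $\Phi(k)^{\perp} = \ker(k^{\dag})$ is the kernel of $k$ regarded as a morphism $s\to k$ (note $k\cdot s = k = k\cdot k$), so Theorem~\ref{Foulis2DagKerThm} gives $k^{\perp} = s\cdot\sai{k}$, matching the stated formula. I expect the main obstacle to be the meet: here one must compute the categorical intersection of the kernels $k_{1},k_{2}\rightarrowtail s$ as a pullback in $\dagKaroubi{S}$ and simplify the resulting element to the displayed form for $k_{1}\conjun k_{2}$, which is the one genuinely calculational step and the place where the Foulis-semigroup axioms (particularly~$(4')$) must be used with care. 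All the other verifications reduce to the idempotency and self-adjointness of the elements of $K_{s}$ together with Theorem~\ref{Foulis2DagKerThm}.
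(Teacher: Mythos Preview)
Your approach is correct and is in fact precisely the route the paper itself flags as sufficient at the opening of its proof: ``In fact it suffices to prove the last isomorphism $K_{s}\cong\KSub(s)$ and use it to translate the orthomodular structure from $\KSub(s)$ to $K_{s}$.'' The paper then deliberately chooses the other road, writing ``Instead we proceed in a direct manner'' and verifying all the orthomodular-lattice axioms on $K_{s}$ by hand (self-adjoint idempotency of each $k$, partial-order axioms, antitonicity and involutivity of $(-)^{\perp}$, the meet, $k^{\perp}\conjun k = 0$, orthomodularity), establishing the order isomorphism $K_{s}\cong\KSub(s)$ only at the very end.

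Your transport argument is the more economical one, since it offloads the orthomodularity proof to the general theory of dagger kernel categories; the paper's direct verification is longer but self-contained and shows exactly where each Foulis axiom enters. Your order-isomorphism argument matches the paper's final paragraph almost verbatim. For the meet, which you rightly identify as the one genuinely calculational step, the paper's part~(d) carries out exactly the pullback computation you anticipate: writing $r = k_{1}\cdot k_{1}^{-1}(k_{2}) = k_{1}\cdot\ker(\coker(k_{2})\cdot k_{1}) = k_{1}\cdot\sai{\sai{k_{2}}\cdot k_{1}}$ and then closing under $(-)^{\perp\perp}$ to obtain an element of $K_{s}$. The one subtlety your sketch should make explicit is that the composite kernel $r$ need not literally lie in $K_{s}$ as written; one must check that $r^{\perp\perp}$ represents the same kernel subobject as $r$, which is where axiom~$(4')$ and the observation that $x\cdot y = 0$ implies $y = \sai{x}\cdot y$ do the real work.
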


\begin{proof}
In fact it suffices to prove the last isomorphism $K_{s} \cong
\KSub(s)$ and use it to translate the orthomodular structure from
$\KSub(s)$ to $K_s$. Instead we proceed in a direct manner and show
that each $K_s$ is an orthomodular lattice in a number of small
consecutive steps, resembling the steps taken in~\cite[Chapter~5,
  \S\S18]{Kalmbach83}. One observation that is used a number of times
is:
$$\begin{array}{rcl}
x\cdot y = 0 
& \Longrightarrow &
y = \sai{x}\cdot y
\end{array}\eqno{(*)}$$

\noindent for arbitrary $x,y\in S$, Indeed, if $x\cdot y = 0$, then by
requirement~(4) in Definition~\ref{FoulisDef} there is a $z$ with
$y = \sai{x}\cdot z$. But then $\sai{x}\cdot y = \sai{x}\cdot \sai{x}\cdot z =
\sai{x}\cdot z = y$.

Let $s\in S$ now be a fixed self-adjoint idempotent. 
\begin{enumerate}
\renewcommand{\theenumi}{(\alph{enumi})}
\item Each $k\in K_s$ is a self-adjoint idempotent, a dagger kernel
$k\colon k\rightarrow s$, and also an idempotent $k\colon s\rightarrow s$
in $\dagKaroubi{S}$.

Indeed, if $k = s\cdot \sai{t\cdot s}$, then $(t\cdot s)\cdot k = t\cdot
s \cdot \sai{t\cdot s} = 0$, so that $k = \sai{t\cdot s}\cdot k$ by $(*)$. 
Hence:
\begin{align*}
k\cdot k 
& =
s\cdot \sai{t\cdot s}\cdot k
=
s\cdot k
=
k \\
k^{\dag}
& = 
(\sai{t\cdot s}\cdot k)^{\dag}
=
(\sai{t\cdot s}\cdot s\cdot \sai{t\cdot s})^{\dag} \\
& =
\sai{t\cdot s}^{\dag} \cdot s^{\dag} \cdot \sai{t\cdot s}^{\dag} 
=
\sai{t\cdot s}\cdot s\cdot \sai{t\cdot s}
=
k \\
k^{\dag} \cdot k
& =
k \cdot k
=
k \\
k\cdot s
& =
k^{\dag} \cdot s^{\dag}
=
(s\cdot k)^{\dag}
=
k^{\dag}
=
k.
\end{align*}

\noindent Also, $k\colon k\rightarrow s$ is the kernel of $t\cdot
s\colon s\rightarrow 1$, using the description of kernels in
$\dagKaroubi{S}$ from the proof of Theorem~\ref{Foulis2DagKerThm}.

\item The set $S$ carries a transitive order $t\leq r$ iff
$r\cdot t = t$. This $\leq$ is a partial order on $K_s$.

Transitivity is obvious: if $t\leq r \leq q$, then $r\cdot t = t$ and
$q\cdot r = r$ so that $q\cdot t = q\cdot r\cdot t = r\cdot t = t$,
showing that $t\leq q$.

Reflexivity $k\leq k$ holds for $k\in K_s$ since we have $k\cdot k =
k$ as shown in~(a). For symmetry assume $k\leq \ell$ and $\ell\leq k$
where $k,\ell\in K_{s}$. Then $\ell\cdot k = k$ and $k\cdot \ell =
\ell$. Hence $k = k^{\dag} = (\ell\cdot k)^{\dag} = k^{\dag} \cdot \ell^{\dag}
= k\cdot \ell = \ell$.

\item For an arbitrary $t\in S$ put $t^{\perp}
  \,\smash{\stackrel{\textrm{def}}{=}}\, s\cdot \sai{t^{\dag}\cdot s} \in
  K_{s}$. Hence from~(a) we get equations $t^{\perp} \cdot t^{\perp} =
  t^{\perp} = (t^{\perp})^{\dag}$ and $s\cdot t^{\perp} = t^{\perp} =
  t^{\perp}\cdot s$ that are useful in calculations.

We will show $t\leq r \Rightarrow r^{\perp} \leq t^{\perp}$
and $k^{\perp\perp} = k$ for $k\in K_{s}$.

Assume $t\leq r$, \textit{i.e.}~$t = r\cdot t$. Then, applying the
equation $y = \sai{\sai{y^{\dag}\cdot x^{\dag}}\cdot x} \cdot y$ from
requirement~$(4')$ in Definition~\ref{FoulisDef} for $y =
\sai{r^{\dag}\cdot s}$ and $x= t^{\dag}\cdot s$ yields:
$$\begin{array}{rcl}
\sai{r^{\dag}\cdot s}
& = &
\sai{\sai{\sai{r^{\dag}\cdot s}^{\dag}\cdot (t^{\dag}\cdot s)^{\dag}}
   \cdot t^{\dag}\cdot s}\cdot \sai{r^{\dag}\cdot s} \\
& = &
\sai{\sai{\sai{(t^{\dag}\cdot s\cdot \sai{r^{\dag}\cdot s})^{\dag}}}
   \cdot t^{\dag}\cdot s}\cdot \sai{r^{\dag}\cdot s} \\
& = &
\sai{\sai{((r\cdot t)^{\dag}\cdot s\cdot \sai{r^{\dag}\cdot s})^{\dag}}
   \cdot t^{\dag}\cdot s}\cdot \sai{r^{\dag}\cdot s} \\
& = &
\sai{\sai{(t^{\dag}\cdot r^{\dag}\cdot s\cdot \sai{r^{\dag}\cdot s})^{\dag}}
   \cdot t^{\dag}\cdot s}\cdot \sai{r^{\dag}\cdot s} \\
& = &
\sai{\sai{(t^{\dag} \cdot 0)^{\dag}} 
   \cdot t^{\dag}\cdot s}\cdot \sai{r^{\dag}\cdot s}
   \qquad\mbox{since $x\cdot \sai{x}=0$} \\
& = &
\sai{\sai{0} \cdot t^{\dag}\cdot s}\cdot \sai{r^{\dag}\cdot s} \\
& = &
\sai{1 \cdot t^{\dag}\cdot s}\cdot \sai{r^{\dag}\cdot s} \\
& = &
\sai{t^{\dag}\cdot s}\cdot \sai{r^{\dag}\cdot s}.
\end{array}$$

\noindent This gives us what we need to show $r^{\perp} \leq t^{\perp}$:
$$\begin{array}{rcll}
t^{\perp}\cdot r^{\perp}
& = &
t^{\perp} \cdot s\cdot \sai{r^{\dag}\cdot s} \\
& = &
t^{\perp} \cdot \sai{r^{\dag}\cdot s} 
   & \mbox{since $t^{\perp}\in K_s$} \\
& = &
s\cdot \sai{t^{\dag}\cdot s}\cdot \sai{r^{\dag}\cdot s} \\
& = &
s\cdot \sai{r^{\dag}\cdot s}
   & \mbox{as we have just seen} \\
& = &
r^{\perp}.
\end{array}$$

Next we notice that 
$$t^{\perp\perp} 
= 
s\cdot \sai{(t^{\perp})^{\dag}\cdot s} 
=
s\cdot \sai{\sai{t^{\dag}\cdot s}^{\dag}\cdot s^{\dag} \cdot s}
=
s\cdot \sai{\sai{t^{\dag}\cdot s}\cdot s}.$$

\noindent Requirement~$(4')$ in Definition~\ref{FoulisDef}, applied to $t$, 
says:
$$s\cdot t 
= 
s\cdot \sai{\sai{t^{\dag}\cdot s^{\dag}}\cdot s}\cdot t
=
s\cdot \sai{\sai{t^{\dag}\cdot s}\cdot s}\cdot t
=
t^{\perp\perp} \cdot t
=
t^{\perp\perp} \cdot s\cdot t.$$

\noindent It says that $s\cdot t \leq t^{\perp\perp}$. In particular,
this means $k\leq k^{\perp\perp}$ for $k\in K_s$. Since $(-)^{\perp}$
reverses the order we get:
$$t^{\perp\perp\perp}
\leq 
(s\cdot t)^{\perp}
=
s\cdot \sai{(s\cdot t)^{\dag}\cdot s}
=
s\cdot \sai{t^{\dag}\cdot s^{\dag} \cdot s}
=
s\cdot \sai{t^{\dag} \cdot s}
=
t^{\perp}.$$

\noindent If we finally apply this to $k\in K_{s}$, say for
$k = s\cdot \sai{t\cdot s} = (t^{\dag})^{\perp}$ we get:
$$k^{\perp\perp}
=
(t^{\dag})^{\perp\perp\perp}
\leq
(t^{\dag})^{\perp}
=
k.$$

\item As motivation for the definition of meet, consider for $k_{1},
  k_{2}\in K_{s}$ their meet as kernels:
$$\begin{array}{rcl}
r
& \smash{\stackrel{\textrm{def}}{=}} &
k_{1}\cdot k_{1}^{-1}(k_{2}) \\
& = &
k_{1}\cdot \ker(\coker(k_{2})\cdot k_{1}) 
   \qquad\mbox{see pullback from Section~\ref{DagKerSec}} \\
& = &
k_{1}\cdot \ker((s\cdot \sai{k_{2}^{\dag}})^{\dag} \cdot k_{1}) \\
& = &
k_{1} \cdot k_{1} \cdot \sai{\sai{k_{2}}\cdot s \cdot k_{1}} \\
& = &
k_{1} \cdot \sai{\sai{k_{2}}\cdot k_{1}}.
\end{array}$$

\noindent We force this $r$ into $K_{s}$ via double negation and hence
define $k_{1}\conjun k_{2} = r^{\perp\perp}$. Showing that it is the
meet of $k_{1}, k_{2}$ requires a bit of work.
\begin{itemize}
\item We have $k_{1}\cdot r = k_{1} \cdot k_{1} \cdot \sai{\sai{k_{2}}\cdot
  k_{1}} = k_{1} \cdot \sai{\sai{k_{2}}\cdot k_{1}} = r$, so that $r \leq
  k_{1}$ and thus also $k_{1} \conjun k_{2} = r^{\perp\perp} \leq
  k_{1}^{\perp\perp} = k_{1}$.

\item We first observe that
$$\sai{k_{2}}\cdot s \cdot s \cdot r
=
\sai{k_{2}}\cdot s \cdot s \cdot k_{1} \cdot \sai{\sai{k_{2}}\cdot k_{1}}
=
\sai{k_2}\cdot k_{1} \cdot \sai{\sai{k_{2}}\cdot k_{1}}
=
0.$$

\noindent Hence by applying $\dag$ we get $(r^{\dag} \cdot s) \cdot
(s\cdot \sai{k_2}) = 0$. Via $(*)$ we obtain $s\cdot \sai{k_2} = 
\sai{r^{\dag}\cdot s} \cdot s \cdot \sai{k_2}$, and thus also
$$k_{2}^{\perp} 
= 
s\cdot k_{2}^{\perp} 
= 
s\cdot s\cdot \sai{k_2} 
= 
s\cdot \sai{r^{\dag}\cdot s} \cdot s \cdot \sai{k_2}
=
r^{\perp}\cdot k_{2}^{\perp}.$$

\noindent This says $k_{2}^{\perp} \leq r^{\perp}$, from
which we get $k_{1} \conjun k_{2} = r^{\perp\perp} \leq k_{2}^{\perp\perp}
= k_{2}$.

\item If also $\ell\in K_{s}$ satisfies $\ell \leq k_{1}$ and $\ell\leq k_{2}$,
\textit{i.e.}~$k_{1}\cdot \ell = \ell = k_{2}\cdot \ell$, then, by
Definition~\ref{FoulisDef}~$(4')$,
$$\begin{array}{rcl}
\sai{k_2}\cdot k_{1}\cdot \ell
\hspace*{\arraycolsep} = \hspace*{\arraycolsep}
\sai{k_2}\cdot k_{2} \cdot \ell
& = &
(\sai{k_2}\cdot k_{2})^{\dag\dag}\cdot\ell \\
& = &
(k_{2}^{\dag} \cdot \sai{k_2}^{\dag})^{\dag}\cdot\ell \\
& = &
(k_{2} \cdot \sai{k_2})^{\dag}\cdot\ell \\
& = &
0^{\dag}\cdot\ell \\
& = &
0.
\end{array}$$

\noindent Hence $\ell = \sai{\sai{k_{2}}\cdot k_{1}} \cdot \ell$
by~$(*)$ and so $\ell = k_{1}\cdot \ell = s\cdot k_{1}\cdot \ell =
s\cdot k_{1} \cdot \sai{\sai{k_{2}}\cdot k_{1}} \cdot \ell = s\cdot r\cdot
\ell$. Thus $\ell\leq s\cdot r \leq r^{\perp\perp} = k_{1} \conjun k_{2}$.
\end{itemize}

\item We get $k^{\perp}\conjun k = 0$, for $k\in K_{s}$, as follows.
  Since $k\cdot s\cdot \sai{k} = k\cdot \sai{k} = 0$ one has $k^{\perp} =
  s\cdot \sai{k} = \sai{k} \cdot s\cdot \sai{k} = \sai{k}\cdot k^{\perp}$
  by~$(*)$. Hence:
$$k^{\perp}\conjun k
=
\big(k^{\perp}\cdot \sai{\sai{k}\cdot k^{\perp}}\big)^{\perp\perp}
=
\big(k^{\perp} \cdot \sai{k^{\perp}}\big)^{\perp\perp} 
=
0^{\perp\perp} 
=
0.$$

\item Finally, orthomodularity holds in $K_{s}$. We assume $k\leq
  \ell$ (\textit{i.e.}~$k = \ell\cdot k$) and $k^{\perp}\conjun
  \ell=0$, for $k,\ell\in K_{s}$, and have to show $\ell\leq k$
  (\textit{i.e.}~$\ell = k\cdot \ell$, and thus $k=\ell$). To start,
  $k = k^{\dag} = (\ell\cdot k)^{\dag} = k^{\dag}\cdot \ell^{\dag} =
  k\cdot \ell$, so that $k\cdot \ell^{\perp} = k\cdot s\cdot \sai{\ell} =
  k\cdot \sai{\ell} = k\cdot \ell \cdot \sai{\ell} = k\cdot 0 =
  0$. Using~$(*)$ yields $\ell^{\perp} = \sai{k}\cdot \ell^{\perp} = \sai{k}
  \cdot s \cdot \sai{\ell}$, and also $\ell^{\perp} =
  (\ell^{\perp})^{\dag} = (\sai{k}\cdot s \cdot \sai{\ell})^{\dag} =
  \sai{\ell}^{\dag} \cdot s^{\dag} \cdot \sai{k}^{\dag} = \sai{\ell} \cdot
  k^{\perp}$. Hence:
$$\begin{array}{rcl}
k^{\perp}\cdot \ell
\hspace*{\arraycolsep} = \hspace*{\arraycolsep}
k^{\perp}\cdot \ell^{\perp\perp}
\hspace*{\arraycolsep} = \hspace*{\arraycolsep}
k^{\perp}\cdot s\cdot \sai{\ell^{\perp}}
& = &
k^{\perp} \cdot \sai{\ell^{\perp}} \\
& = &
k^{\perp} \cdot \sai{\sai{\ell} \cdot k^{\perp}} \\
& \leq &
\big(k^{\perp} \cdot \sai{\sai{\ell} \cdot k^{\perp}}\big)^{\perp\perp} \\
& = &
k^{\perp} \conjun \ell \\
& = &
0.
\end{array}$$

\noindent By~$(*)$ we get $\ell = \sai{k^{\perp}} \cdot \ell$ so that
$\ell = s\cdot \ell = s\cdot \sai{k^{\perp}} \cdot \ell =
k^{\perp\perp}\cdot \ell = k\cdot \ell$, as required to get
$\ell \leq k$.
\end{enumerate}

Finally we need to show $K_{s} \cong \KSub(s)$. As we have seen
in~(a), each $k\in K_{s}$ yields (an equivalence class of) a kernel
$k\colon k\rightarrow s$. Conversely, each kernel $\ker(f) = s\cdot
\sai{f} = s\cdot \sai{f\cdot s}$ of a map $f\colon s\rightarrow t$ in
$\dagKaroubi{S}$---see the proof of
Theorem~\ref{Foulis2DagKerThm}---is an element of $K_s$. This yields
an order isomorphism: if $k_{1}\leq k_{2}$ for $k_{1}, k_{2} \in
K_{s}$, then $k_{1} = k_{2} \cdot k_{1}$ so that we get a commuting
triangle:
$$\xymatrix@R-2pc{
k_{1}\ar@{ |>->}[dr]^-{k_1}\ar@{-->}[dd]_{k_1} \\
& s \\
k_{2}\ar@{ |>->}[ur]_-{k_2}
}$$

\noindent showing that $k_{1} \leq k_{2}$ in $\KSub(s)$. Conversely,
if there is an $f\colon k_{1}\rightarrow k_{2}$ with $k_{2} \cdot f = k_{1}$,
then $k_{2}\cdot k_{1} = k_{2}\cdot k_{2} \cdot f = k_{2} \cdot f = k_{1}$,
showing that $k_{1}\leq k_{2}$ in $K_{s}$. \QED
\end{proof}

\subsection{Generators}\label{GeneratorSubsec}

Recall that a generator in a category is an object $I$ such that for
each pair of maps $f,g\colon X\rightarrow Y$, if $f\after x = g\after
x$ for all $x\colon I\rightarrow X$, then $f=g$. Every singleton set
is a generator in \Sets, and also in \Rel. The complex numbers
$\mathbb{C}$ form a generator in the category \Hilb of Hilbert spaces
of $\mathbb{C}$. And the two-element orthomodular lattice is a
generator in \Cat{OMLatGal} by Lemma~\ref{PointLem}.

We shall write $\Cat{DKCg} \hookrightarrow \Cat{DKC}$ for the subcategory
of dagger kernel categories with a given generator, and with morphisms
preserving the generator, up-to-isomorphism.

\begin{lemma}
The dagger kernel category $\dagKaroubi{S}$ associated with a Foulis
semigroup has the unit $1\in S$ as generator. The functor $\Cat{Fsg}
\rightarrow \Cat{DKC}$ from Theorem~\ref{Foulis2DagKerThm} restricts
to $\Cat{Fsg} \rightarrow \Cat{DKCg}$.
\end{lemma}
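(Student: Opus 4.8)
The plan is to verify the two assertions separately; both follow quickly from the description of $\dagKaroubi{S}$ given in the proof of Theorem~\ref{Foulis2DagKerThm}. For the first assertion, recall that a morphism $x\colon 1\rightarrow s$ in $\dagKaroubi{S}$ is an element $x\in S$ with $x\cdot 1 = x = s\cdot x$, that is, simply an element satisfying $s\cdot x = x$. The observation driving the argument is that the identity $\idmap[s] = s$ is itself such a point: since $s$ is a self-adjoint idempotent we have $s\cdot 1 = s = s\cdot s$, so $s$ qualifies as a map $1\rightarrow s$.

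First I would take parallel maps $f,g\colon s\rightarrow t$ with $f\after x = g\after x$ for every $x\colon 1\rightarrow s$, and then instantiate at the single test point $x = s$. Because composition in $\dagKaroubi{S}$ is multiplication in $S$, and the defining morphism equations give $f\cdot s = f$ and $g\cdot s = g$, this immediately yields $f = f\cdot s = g\cdot s = g$. Hence $1$ is a generator of $\dagKaroubi{S}$; in fact the single point $\idmap[s]$ already separates all maps out of $s$.

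For the second assertion, recall that a morphism $h\colon S\rightarrow T$ in $\Cat{Fsg}$ is in particular a monoid homomorphism, so $h(1) = 1$, and that the induced functor $H\colon \dagKaroubi{S}\rightarrow \dagKaroubi{T}$ of Theorem~\ref{Foulis2DagKerThm} acts on objects by $s\mapsto h(s)$. Therefore $H$ sends the generator $1\in\dagKaroubi{S}$ to $h(1) = 1\in\dagKaroubi{T}$, which is exactly the chosen generator of $\dagKaroubi{T}$. Thus $H$ preserves the generator on the nose, \emph{a fortiori} up to isomorphism, so the functor $\Cat{Fsg}\rightarrow \Cat{DKC}$ corestricts to $\Cat{Fsg}\rightarrow \Cat{DKCg}$.

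There is no real obstacle here: the whole content lies in unwinding the definition of a point $1\rightarrow s$ and noticing that $\idmap[s]$ is one of them, together with the elementary fact that Foulis-semigroup homomorphisms preserve the unit. The only point requiring a little care is keeping straight the direction of the morphism conditions ($f\cdot s = f$ versus $t\cdot f = f$), but these are precisely the defining equations of morphisms in the dagger Karoubi envelope, so the needed identity $f\cdot s = f$ is available by definition.
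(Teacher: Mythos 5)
Your proof is correct and follows essentially the same route as the paper's: test the parallel maps at the single point $x = s\colon 1\rightarrow s$ (using $f\cdot s = f$ and $g\cdot s = g$) to get $f = g$, and use $h(1)=1$ for preservation of the generator. The only difference is that you explicitly verify that $s$ qualifies as a morphism $1\rightarrow s$, which the paper leaves implicit; this is a harmless and welcome extra detail.
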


\begin{proof}
Assume $f,g\colon s\rightarrow t$ in $\dagKaroubi{S}$ with $f\after x
= g\after x$ for each map $x\colon 1\rightarrow s$. Then, in
particular for $x=s$ we get $f = f\after s = g\after s = g$. Every
morphism $h\colon S\rightarrow T$ of Foulis semigroups satisfies
$h(1) = 1$, so that the induced functor $\dagKaroubi{S} \rightarrow
\dagKaroubi{T}$ preserves the generator. \QED
\end{proof}

\begin{lemma}
The mapping $\cat{D} \mapsto \KSub(I)$ yields a functor
$\Cat{DCKg} \rightarrow \Cat{OMLat}$.
\end{lemma}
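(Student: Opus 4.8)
The plan is to define the functor on objects by $\cat{D}\mapsto\KSub(I_{\cat{D}})$, where $I_{\cat{D}}$ is the chosen generator; this lands in $\Cat{OMLat}$ because each poset of kernel subobjects is an orthomodular lattice (the first main result recalled from~\cite{HeunenJ09a}). For a morphism $F\colon\cat{D}\rightarrow\cat{E}$ of $\Cat{DKCg}$, coming with a chosen isomorphism $\theta\colon F(I_{\cat{D}})\rightarrow I_{\cat{E}}$ witnessing preservation of the generator, I would define the induced map as the composite
$$\KSub(I_{\cat{D}}) \xrightarrow{\ F\ } \KSub(F(I_{\cat{D}})) \xrightarrow{\ \theta\after(-)\ } \KSub(I_{\cat{E}}),$$
where the first map sends a kernel subobject $[m]$ to $[F(m)]$ (well-defined on iso-classes since $F$ is a functor, and landing among kernels since $F$ preserves $\ker$), and the second transports along $\theta$ by $[k]\mapsto[\theta\after k]$.

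First I would check that both maps are morphisms of $\Cat{OMLat}$, i.e.\ preserve $\conjun$, $1$ and $(-)^{\perp}$. For the ``apply $F$'' map: it preserves the top since $F(\id)=\id$; it preserves orthocomplements since $F(m^{\perp})=F(\ker(m^{\dag}))=\ker(F(m)^{\dag})=F(m)^{\perp}$, using that $F$ preserves $\ker$ and $\dag$; and it preserves meets because a meet of kernels is assembled from $\ker$, $\coker$ and composition, all preserved by $F$. Concretely, with $m\conjun n=m\after m^{-1}(n)$ and $m^{-1}(n)=\ker(\coker(n)\after m)$ (see Section~\ref{DagKerSec}), one gets $F(m^{-1}(n))=F(m)^{-1}(F(n))$, whence $F(m\conjun n)=F(m)\conjun F(n)$; here $\coker$ is preserved because $\coker(g)=\ker(g^{\dag})^{\dag}$. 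For the transport map $[k]\mapsto[\theta\after k]$, preservation of meets, top and order is immediate, since postcomposition with an isomorphism is a lattice isomorphism of kernel-subobject posets.

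The one genuinely delicate point, which I expect to be the main obstacle, is preservation of $(-)^{\perp}$ by the transport map. Since $k^{\perp}=\ker(k^{\dag})$ is defined through the dagger, $[k]\mapsto[\theta\after k]$ commutes with $(-)^{\perp}$ exactly when $\theta^{\dag}\after\theta=\id$, i.e.\ when $\theta$ is a \emph{dagger} (unitary) isomorphism; for a merely invertible $\theta$ it can fail. I would therefore read ``preserving the generator up-to-isomorphism'' as up-to-dagger-isomorphism, the appropriate notion in a dagger category, and record this explicitly. Under that reading $\theta^{\dag}=\theta^{-1}$, and then
$$(\theta\after k)^{\perp}=\ker\big(k^{\dag}\after\theta^{\dag}\big)=[\theta\after\ker(k^{\dag})]=[\theta\after k^{\perp}],$$
as required, so the composite is an $\Cat{OMLat}$-morphism.

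Finally I would verify functoriality. Identities are clear, taking $\theta=\id$. For a composite $\cat{D}\xrightarrow{F}\cat{E}\xrightarrow{G}\cat{F}$ with generator-isomorphisms $\theta$ and $\theta'$, the canonical witness for $G\after F$ is $\theta'\after G(\theta)$, and the equality of the two induced maps follows by unfolding the composites and using functoriality of $F$ and $G$ together with $G([\theta\after k])=[G(\theta)\after G(k)]$. This is a routine diagram chase that I would not expect to spell out in full.
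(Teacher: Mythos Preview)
Your approach is essentially the paper's: apply $F$ to a kernel and then postcompose with the isomorphism $\theta\colon F(I_{\cat{D}})\rightarrow I_{\cat{E}}$, and observe that the orthomodular structure on $\KSub$ is expressed entirely via $\ker$ and $\dag$, hence is preserved. The paper's proof is one sentence to this effect and does not spell out the checks you give.

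Your treatment is in fact more careful than the paper's on one point. The paper simply writes ``up-to-isomorphism'' and then asserts preservation of the orthomodular structure, without remarking that $(-)^{\perp}$ is only carried along correctly when the witnessing isomorphism $\theta$ is a \emph{dagger} isomorphism. You identify this and make the (reasonable) reading explicit; that is a genuine clarification, not a deviation. So: same route, but you have isolated a hypothesis the paper leaves implicit.
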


\begin{proof}
If $F\colon \cat{D} \rightarrow \cat{E}$ is a functor in $\Cat{DCKg}$,
then one obtains a mapping $\KSub_{\Cat{D}}(I) \rightarrow \KSub_{\Cat{E}}(I)$
by:
$$\xymatrix{
\Big(M\ar@{ |>->}[r]^-{m} & I\Big)\ar@{|->}[r] &
   \Big(FM\ar@{ |>->}[r]^-{Fm} & FI\ar[r]^-{\cong} & I\Big).
}$$

\noindent Since all the orthomodular structure in kernel posets
$\KSub(X)$ is defined in terms of kernels and daggers, it is preserved
by $F$. \QED
\end{proof}

By composition we obtain the original (``old'') way to construct an
orthomodular lattice out of a Foulis semigroup, see~\cite{Foulis63}.

\begin{corollary}
\label{Foulis2OMLatCor}
The composite functor $\Cat{Fsg} \rightarrow \Cat{DCKg} \rightarrow
\Cat{OMLat}$ maps a Foulis semigroup $S$ to the orthomodular lattice
$\sai{S} = \set{\sai{t}}{t\in S} = K_{1} \cong \KSub(1)$ from
Lemma~\ref{FoulisOMKerLem}, over the generator $1$. \QED
\end{corollary}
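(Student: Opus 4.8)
The plan is to trace the composite functor on objects and read off the answer from Lemma~\ref{FoulisOMKerLem}; there is little to prove beyond bookkeeping. First I would recall that the functor $\Cat{Fsg}\rightarrow\Cat{DCKg}$ (coming from Theorem~\ref{Foulis2DagKerThm} together with the preceding two lemmas of this subsection) sends a Foulis semigroup $S$ to the dagger kernel category $\dagKaroubi{S}$ equipped with the unit $1\in S$ as chosen generator. The subsequent functor $\Cat{DCKg}\rightarrow\Cat{OMLat}$ sends a dagger kernel category with generator $I$ to the orthomodular lattice $\KSub(I)$. Hence the composite sends $S$ to $\KSub(1)$, the poset of kernel subobjects of the generator $1$ in $\dagKaroubi{S}$.

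It then remains only to identify $\KSub(1)$ explicitly. Since $1^{\dag}=1=1\cdot 1$, the unit is a self-adjoint idempotent, so Lemma~\ref{FoulisOMKerLem} applies with $s=1$ and yields an isomorphism of orthomodular lattices $\KSub(1)\cong K_{1}$. Unravelling the defining formula for $K_{s}$ at $s=1$ gives
$$K_{1}=\set{1\cdot\sai{t\cdot 1}}{t\in S}=\set{\sai{t}}{t\in S}=\sai{S},$$
where the middle equality uses that $1$ is the monoid unit. This produces the claimed chain $\sai{S}=K_{1}\cong\KSub(1)$.

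I expect no genuine obstacle, since all substantive work has already been carried out in Theorem~\ref{Foulis2DagKerThm} and Lemma~\ref{FoulisOMKerLem}. The one mild point worth noting is that one should check that the orthomodular structure transported onto $\KSub(1)$ along the two functors really matches the structure recorded on $\sai{S}=K_{1}$ in Lemma~\ref{FoulisOMKerLem}; this is immediate because the isomorphism $K_{s}\cong\KSub(s)$ of that lemma is by construction one of orthomodular lattices, and both functors preserve exactly the kernel-and-dagger data from which the orthomodular operations $\perp,\conjun,\disjun$ are defined.
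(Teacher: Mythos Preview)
Your proposal is correct and matches the paper's intent exactly: the corollary in the paper carries only a \QED with no written argument, so the bookkeeping you spell out---tracing $S\mapsto(\dagKaroubi{S},1)\mapsto\KSub(1)$ and then invoking Lemma~\ref{FoulisOMKerLem} at $s=1$ to obtain $\KSub(1)\cong K_{1}=\set{\sai{t}}{t\in S}=\sai{S}$---is precisely the implicit reasoning being deferred to.
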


\auxproof{
We follow the proof in~\cite[Chapter~5, \S\S18]{Kalmbach83} in a
number of small steps, involving elements $s,t\in S$, often
using requirement $(4')$ from Definition~\ref{FoulisDef}.
\begin{enumerate}
\renewcommand{\theenumi}{\alph{enumi}}
\item $s\cdot t = 0$ iff $\sai{s}\cdot t = t$. For the (if)-part, assume
  $\sai{s}\cdot t = t$; then $s\cdot t = s\cdot \sai{s}\cdot t = 0\cdot t =
  0$.  Conversely, if $s\cdot t = 0$ then $t = \sai{\sai{t^{\dag} \cdot
  s^{\dag}}\cdot s}\cdot t = \sai{\sai{0^{\dag}}\cdot s}\cdot t = \sai{1\cdot
  s}\cdot t = \sai{s}\cdot t$.

\item $s = s\cdot \sai{\sai{s}}$, and thus $s\leq \sai{\sai{s}} =
  s^{\perp\perp}$. This is obtained by substituting $\sai{s}$ for $s$ and
  $s^{\dag}$ for $t$ in the equation $t = \sai{\sai{t^{\dag} \cdot
  s^{\dag}}\cdot s}\cdot t$ from Defintion~\ref{FoulisDef}~$(4')$. It
  yields:
$$\begin{array}{rcll}
s^{\dag}
& = &
\sai{\sai{s^{\dag\dag} \cdot \sai{s}^{\dag}}\cdot \sai{s}}\cdot s^{\dag} \\
& = &
\sai{\sai{s\cdot \sai{s}}\cdot \sai{s}}\cdot s^{\dag} \\
& = &
\sai{\sai{0}\cdot \sai{s}}\cdot s^{\dag} 
   & \mbox{by~$(4')$ in Definition~\ref{FoulisDef}} \\
& = &
\sai{\sai{s}}\cdot s^{\dag}
   & \mbox{since $\sai{0}=1$.}
\end{array}$$

\noindent Hence the result follows by applying the involution
$\dag$ on both sides.

\item $\sai{\sai{\sai{s}}} = \sai{s}$, so that $t^{\perp\perp} = t$ for 
$t\in \sai{S}$. To start, 
$$\begin{array}{rcll}
s\cdot \sai{\sai{\sai{s}}}
& = &
s\cdot \sai{\sai{s}} \cdot \sai{\sai{\sai{s}}} & \mbox{by~(b)} \\
& = &
s \cdot 0 & \mbox{by~$(4')$ in Definition~\ref{FoulisDef}} \\
& = &
0.
\end{array}$$

\noindent Hence we can apply~(a), so that $\sai{\sai{\sai{s}}} =
\sai{s} \cdot \sai{\sai{\sai{s}}} = \sai{s}$, where the last equation
follows by applying~(b) to $\sai{s}$.

\item $s\leq t$ implies $\sai{t} \leq \sai{s}$ (and thus $t^{\perp} \leq
  s^{\perp}$). We assume $s\cdot t = s$ and have to prove $\sai{t}\cdot
  \sai{s} = \sai{t}$. Since $\sai{t} = \sai{t}^{\dag}$ the following suffices.
$$\begin{array}{rcll}
\sai{t}
& = &
\sai{\sai{\sai{t}^{\dag} \cdot s^{\dag}}\cdot s}\cdot \sai{t}
   & \mbox{by~$(4')$ in Definition~\ref{FoulisDef}} \\
& = &
\sai{\sai{(s\cdot \sai{t})^{\dag}}\cdot s}\cdot \sai{t} \\
& = &
\sai{\sai{(s\cdot t \cdot \sai{t})^{\dag}}\cdot s}\cdot \sai{t} 
   & \mbox{by assumption} \\
& = &
\sai{\sai{(s\cdot 0)^{\dag}}\cdot s}\cdot \sai{t} 
   & \mbox{by~$(4')$ in Definition~\ref{FoulisDef}} \\
& = &
\sai{1\cdot s}\cdot \sai{t} 
   & \mbox{since $\sai{0}=1$} \\
& = &
\sai{s}\cdot \sai{t} \\
& = &
\sai{s}^{\dag}\cdot \sai{t}^{\dag} \\
& = &
(\sai{s}\cdot \sai{t})^{\dag}.
\end{array}$$

\item $s\leq \sai{t} \Rightarrow t\leq \sai{s}$, by~(d), (b) and
  transitivity of $\leq$. Hence $\sai{-}\colon S\op \rightarrow S$ is
  self-adjoint wrt.\ $\leq$, with $\sai{S} =
  \setin{s}{S}{s=\sai{\sai{s}}}$ as set of closed elements, with a
  Galois connection:
\begin{equation}
\label{PsquareGaloisEqn}
\xymatrix{
S\ar@<1ex>[rr]^-{\sai{\sai{-}} = (-)^{\perp}} & & \;\sai{S}\ar@{_{(}->}[ll]<1ex>
}
\end{equation}

\noindent since for $s\in S$ and $t\in \sai{S}$ one has $s\leq t$
iff $\sai{\sai{s}} \leq t$.

\item Multiplication $\cdot$ is idempotent on $\sai{S}$, so that $\leq$
  is reflexive on $\sai{S}$.  This is easy since $\sai{s} = \sai{s}\cdot
  \sai{\sai{\sai{s}}} = \sai{s}\cdot \sai{s}$ by~(b) and~(c).

\item The order $\leq$ is also symmetric on $\sai{S}$, since if $\sai{s} =
  \sai{s}\cdot \sai{t}$ and $\sai{t} = \sai{t}\cdot \sai{s}$, then:
$$\sai{s}
=
\sai{s}^{\dag}
=
(\sai{s}\cdot \sai{t})^{\dag}
=
\sai{t}^{\dag} \cdot \sai{s}^{\dag}
=
\sai{t}\cdot \sai{s}
=
\sai{t}.$$

\item Elements $t_{1},t_{2}\in \sai{S}$ have a meet $t_{1}\conjun t_{2} =
  \sai{\sai{r}}\in \sai{S}$, where $r = \sai{\sai{t_1}\cdot t_2}\cdot t_{2}$.
\begin{itemize}
\item $\sai{\sai{r}} \leq t_{1}$ holds since $\sai{t_{1}}\cdot t_{2} \cdot
  \sai{\sai{t_1}\cdot t_2} = 0$ by Definition~\ref{FoulisDef}~$(4')$. Hence
  by~(a) we get:
$$t_{2}\cdot \sai{\sai{t_1}\cdot t_2}
=
\sai{\sai{t_{1}}}\cdot t_{2} \cdot \sai{\sai{t_1}\cdot t_2}
=
t_{1}\cdot t_{2} \cdot \sai{\sai{t_1}\cdot t_2}.$$

\noindent But then $r\leq t_{1}$ follows:
$$\begin{array}{rcl}
r
\hspace*{\arraycolsep} = \hspace*{\arraycolsep}
r^{\dag\dag}
\hspace*{\arraycolsep} = \hspace*{\arraycolsep}
\big(\sai{\sai{t_1}\cdot t_2}\cdot t_{2}\big)^{\dag\dag} 
& = &
\big(t_{2}^{\dag} \cdot \sai{\sai{t_1}\cdot t_2}^{\dag}\big)^{\dag} \\
& = &
\big(t_{2} \cdot \sai{\sai{t_1}\cdot t_2}\big)^{\dag} \\
& = &
\big(t_{1}\cdot t_{2} \cdot \sai{\sai{t_1}\cdot t_2}\big)^{\dag} \\
& = &
\sai{\sai{t_1}\cdot t_2}^{\dag} \cdot t_{2}^{\dag} \cdot t_{1}^{\dag} \\
& = &
\sai{\sai{t_1}\cdot t_2} \cdot t_{2} \cdot t_{1} \\
& = &
r\cdot t_{1}.
\end{array}$$

\noindent Hence $\sai{\sai{r}} \leq \sai{\sai{t_{1}}} = t_{1}$.

\item Also $\sai{\sai{r}} \leq t_{2}$ since:
$$r\cdot \sai{t_{2}}
=
\sai{\sai{t_1}\cdot t_2} \cdot t_{2} \cdot \sai{t_{2}}
=
\sai{\sai{t_1}\cdot t_2} \cdot 0
=
0,$$

\noindent so that $\sai{r}\cdot \sai{t_{2}} = \sai{t_{2}}$ by~(a). Hence:
$$\sai{t_{2}}
=
\sai{t_{2}}^{\dag}
=
(\sai{r}\cdot \sai{t_{2}})^{\dag}
=
\sai{t_{2}}^{\dag}\cdot \sai{r}^{\dag}
=
\sai{t_{2}} \cdot \sai{r}.$$

\noindent This means $\sai{t_{2}} \leq \sai{r}$ and thus $\sai{\sai{r}} \leq
\sai{\sai{t_2}} = t_{2}$.

\item If also $s\in \sai{S}$ satisfies $s \leq t_{1}$ and $s\leq t_{2}$,
\textit{i.e.}~$s\cdot t_{1} = s = s\cdot t_{2}$, then, by
Definition~\ref{FoulisDef}~$(4')$,
$$\begin{array}{rcl}
\sai{t_1}\cdot t_{2}\cdot s
\hspace*{\arraycolsep} = \hspace*{\arraycolsep}
\big(s^{\dag} \cdot t_{2}^{\dag} \cdot \sai{t_1}^{\dag}\big)^{\dag}
& = &
\big(s \cdot t_{2} \cdot \sai{t_1}\big)^{\dag} \\
& = &
\big(s \cdot t_{1} \cdot \sai{t_1}\big)^{\dag} \\
& = &
(s \cdot 0)^{\dag} \\
& = &
0.
\end{array}$$

\noindent Hence $\sai{\sai{t_1}\cdot t_2}\cdot s = s$ by~(a). Then:
$$\begin{array}{rcl}
s\cdot r
& = &
s\cdot \sai{\sai{t_1}\cdot t_2}\cdot t_{2} \\
& = &
s^{\dag}\cdot \sai{\sai{t_1}\cdot t_2}^{\dag}\cdot t_{2} \\
& = &
\big(\sai{\sai{t_1}\cdot t_2}\cdot s\big)^{\dag} \cdot t_{2} \\
& = &
s^{\dag}\cdot t_{2} \\
& = &
s\cdot t_{2} \\
& = &
s.
\end{array}$$

\noindent Hence $s\leq r \leq \sai{\sai{r}}$.
\end{itemize}

\item $t\conjun t^{\perp} = 0$, for $t\in \sai{S}$, by a straightforward
calculation:
$$\begin{array}{rcll}
t \conjun t^{\perp}
& = &
\sai{\sai{\sai{\sai{t}\cdot \sai{t}}\cdot \sai{t}}} 
   & \mbox{by definition of $\conjun$ and $(-)^{\perp}$} \\
& = &
\sai{\sai{\sai{\sai{t}}\cdot \sai{t}}} & \mbox{by~(f)} \\
& = &
\sai{\sai{t\cdot \sai{t}}} & \mbox{by~(c)} \\
& = &
\sai{\sai{0}} & \mbox{by Definition~\ref{FoulisDef}~$(4')$} \\
& = &
0.
\end{array}$$

\item Finally, orthomodularity holds in $\sai{S}$. We assume $s\leq t$
  and $s^{\perp}\conjun t=0$ and have to show $t\leq s$ (and thus
  $s=t$). One gets:
$$\begin{array}{rcl}
0
\hspace*{\arraycolsep} = \hspace*{\arraycolsep}
s^{\perp}\conjun t
& = &
\sai{\sai{\sai{\sai{\sai{s}}\cdot t}\cdot t}} \\
& = &
\sai{\sai{\sai{s\cdot t}\cdot t}} \\
& = &
\sai{\sai{\sai{s}\cdot t}} \\
& \geq &
\sai{s}\cdot t.
\end{array}$$

\noindent Hence $\sai{\sai{s}}\cdot t = t$, by~(a), and thus $t = t^{\dag}
= (\sai{\sai{s}}\cdot t)^{\dag} = (s\cdot t)^{\dag} = t^{\dag} \cdot
s^{\dag} = t\cdot s$, so that $t\leq s$. \QED
\end{enumerate}
}

In the reverse direction we have seen in
Corollary~\ref{OMLat2FoulisCor} that the set $\EndoHom{X}$ of (Galois)
endomaps on an orthomodulair lattice $X$ is a Foulis semigroup, but
functoriality is problematic. However, we can now solve a problem that
was left open in~\cite{HeunenJ09a}, namely the construction of a
dagger kernel category out of an orthomodular lattice
$X$. Theorem~\ref{Foulis2DagKerThm} says that the dagger Karoubi
envolope $\dagKaroubi{\EndoHom{X}}$ is a dagger kernel category. Its
objects are self-adjoint idempotents $s\colon X\rightarrow X$, and its
morphisms $f\colon (X,s)\rightarrow (X,t)$ are maps $f\colon
X\rightarrow X$ in \Cat{OMLatGal} with $t \after f = f = f\after s$.

\section{Conclusions}\label{ConclusionSec}

There is a relatively recent line of research applying categorical
methods in quantum theory, see for
instance~\cite{ButterfieldI98,AbramskyC04,Selinger07,DoeringI08,HeunenLS09,CoeckePV09}. This
paper fits into this line of work, with a focus on quantum logic
(following~\cite{HeunenJ09a}), and establishes a connection to early
work on quantum structures. It constructs new (dagger kernel)
categories of orthomodular lattices and of self-adjoint idempotents in
Foulis semigroups (also known as Baer *-semigroups). These categorical
constructions are shown to generalise translations between
orthomodular lattics and Foulis semigroups from the 1960s.  They
provide a framework for the systematic study of quantum (logical)
structures.

The current (categorical logic) framework may be used to address some
related research issues. We mention three of them.
\begin{itemize}
\item As shown, the category \Cat{OMLatGal} of orthomodular lattices
  and Galois connections has (dagger) kernels and biproducts
  $\oplus$. An open question is whether it also has tensors $\otimes$,
  to be used for the construction of (logics of) compound systems. The
  existence of such tensors is a subtle matter, given the restrictions
  described in~\cite{RandallF79}.

\item A dagger kernel category gives rise to not just one orthomodular
  lattice (or Foulis semigroup), but to a collection, indexed by the
  objects of the category, see for instance the presheaf description
  in Proposition~\ref{DagKer2FoulisProp}. The precise, possibly
  sheaf-theoretic (see~\cite{GravesS73}), nature of this indexing is
  not fully understood yet.

\item So-called effect algebras have been introduced as more recent
  generalisations of orthomodular lattices, see~\cite{DvurecenskijP00}
  for an overview. An open question is how such quantum structures
  relate to the present approach.
\end{itemize}

\subsubsection*{Acknowledgements}

Many thanks to Chris Heunen for discussions and joint
work~\cite{HeunenJ09a}.


\end{document}